\definecolor{darkblue}{rgb}{0.1,0.1,0.7}
\definecolor{darkred}{rgb}{0.5,0.1,0.1}
\definecolor{darkgreen}{rgb}{0.0,0.42,0.06}
\newlength{\dhatheight}
\tikzset{->-/.style 2 args={
    postaction={decorate},
    decoration={markings, mark=at position #1 with {\arrow[#2]{latex}}}
    },
    ->-/.default={0.5}{}
}
\tikzset{-<-/.style 2 args={
    postaction={decorate},
    decoration={markings, mark=at position #1 with {\arrow[#2]{latex reversed}}}
    },
    -<-/.default={0.5}{}
}
\newcommand{\rmT}{{\scalebox{0.5}{$\mathrm T$}}}
\theoremstyle{definition}
\newtheorem{theorem}{Theorem}[section]
\newtheorem{proposition}[theorem]{Proposition}
\newtheorem{definition}[theorem]{Definition}
\newtheorem{conjecture}[theorem]{Conjecture}
\newtheorem{lemma}[theorem]{Lemma}
\theoremstyle{remark}
\newtheorem{example}[theorem]{Example}
\newtheorem{remark}[theorem]{Remark}
\newcommand{\ii}{\mathrm{i}}
\numberwithin{equation}{section}
\title{Gauge origami and quiver W-algebras II:\\ Vertex function and beyond quantum $q$-Langlands correspondence}
\author[$\diamondsuit$]{Taro Kimura}
\author[$\spadesuit$]{Go Noshita}
\affil[$\diamondsuit$]{Institut de Mathématiques de Bourgogne, Université Bourgogne Europe, CNRS, France\footnote{Unité Mixte de Recherche (UMR 5584) commune au CNRS et à l’Université Bourgogne Europe}}
\affil[$\spadesuit$]{Department of Physics, The University of Tokyo, Tokyo, Japan}
\date{}
\begin{document}

\maketitle

\begin{abstract}
    We continue the study of generalized gauge theory called gauge origami, based on the quantum algebraic approach initiated in \cite{Kimura:2023bxy}.
    In this article, we explore the D2 brane system realized by the screened vertex operators of the corresponding W-algebra. 
    The partition function of this system given by the corresponding conformal block is identified with the vertex function associated with quasimaps to Nakajima quiver varieties and generalizations, that plays a central role in the quantum $q$-Langlands correspondence.
    Based on the quantum algebraic perspective, we address three new aspects of the correspondence:
    (i) Direct equivalence between the electric and magnetic blocks by constructing stable envelopes from the chamber structure of the vertex operators, (ii) Double affine generalization of quantum $q$-Langlands correspondence, and (iii) Conformal block realization of the origami vertex function associated with intersection of quasimaps, that realizes the higher-rank multi-leg Pandharipande--Thomas vertices of 3-fold and 4-fold.
\end{abstract}

\setcounter{tocdepth}{2}
\tableofcontents

\section{Introduction and summary}

Gauge origami, introduced and investigated in a series of works by Nekrasov \cite{Nekrasov:2015wsu,Nekrasov:2016qym,Nekrasov:2016ydq,Nekrasov:2017rqy,Nekrasov:2017gzb}, deserves a generalization of gauge theory, formulated as a union and intersection of gauge theories supported on various subspaces in $\mathbb{C}^4$.
It is formulated as intersecting coherent sheaves on the subschemes, and the equivariant integral of the corresponding moduli space gives rise to the gauge origami partition function, i.e., the generating function of the corresponding enumerative invariants.
In this article, we continue the study of such a generalized gauge theory based on the quantum algebraic structure initiated in our previous work~\cite{Kimura:2023bxy}.%
\footnote{See \S\ref{sec:VO} for the summary of the vertex operator formalism used in this approach. See also \cite[\S3]{Kimura:2023bxy} for the gauge origami systems and their partition functions.}
We in particular focus on the D2 brane system algebraically realized by the screening current of the corresponding W-algebra.
This system is known to be related to quasimaps $f$ from $\mathbb{P}^1$, a compactification of $\mathbb{C}$, to the corresponding target space $X$, typically taken to be a Nakajima quiver variety.
The corresponding partition function, a.k.a., the vertex function, is given by the generating function of equivariant (K-theoretic) Euler characteristics of the moduli space of quasimaps $\mathsf{QM}(X) = \{f : \mathbb{P}^1 \to X \}/\sim$~\cite{Okounkov:2015spn}, that is identified with the vortex partition function in the physics context.
See, e.g.,~\cite{Dedushenko:2023qjq,Crew:2023tky}.
The vertex function constructed geometrically also has an algebraic realization addressed by Aganagic, E.~Frenkel, and Okounkov in the context of the \emph{quantum $q$-Langlands correspondence}.
\begin{theorem}[\cite{Aganagic:2017smx}]\label{thm:AFO}
    The vertex function is the deformed conformal block of the $q$-deformed W-algebra.
\end{theorem}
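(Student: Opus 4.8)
The plan is to prove the identity by evaluating both sides in a common ``free-field'' language and matching them term by term. On the geometric side one computes the vertex function by equivariant K-theoretic localization on the moduli space of quasimaps $\mathsf{QM}(\mathbb{P}^1 \to X)$ with respect to the torus $\mathsf{T} = \mathsf{A} \times \mathbb{C}^\times_q$, where $\mathsf{A}$ is the framing and symplectic-scaling torus and $\mathbb{C}^\times_q$ rotates the source $\mathbb{P}^1$. On the algebraic side one uses the bosonic (Frenkel--Reshetikhin type) realization of the $q$-deformed W-algebra $W_{q,t}(\mathfrak{g})$, in which each screening current $\mathsf{S}_i(x)$ and each primary vertex operator $V_\mu(w)$ becomes a normal-ordered exponential of free bosons indexed by the quiver nodes; the conformal block is the vacuum matrix element of a product of such operators with the screening charges integrated, $\langle\, \cdots \prod_i \prod_a \mathsf{S}_i(x_{i,a}) \cdots \rangle$, which is precisely the partition function of the D2 brane system realized by the screened vertex operators.

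First I would set up the geometric side. The $\mathsf{T}$-fixed loci of $\mathsf{QM}(\mathbb{P}^1 \to X)$ are indexed by a point of $X^{\mathsf{T}}$ together with a ``degree'' datum recording how the quasimap degenerates over $0$ and $\infty$ (for $X$ a Nakajima variety this is a tuple of coloured partitions, one stack per quiver node). The virtual localization formula then writes the vertex function as a sum over these loci of the Kähler variable $z^{\deg}$ times a ratio of $q$-shifted factorials built from the tangent and obstruction weights, i.e.\ a product over quiver vertices, edges, and framings of quantum-dilogarithm factors in the equivariant parameters and in $\hbar$, the weight of the symplectic form. Descendant insertions correspond to inserting the corresponding tautological $K$-theory classes evaluated at the fixed point.

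Next I would compute the algebraic side. Applying Wick's theorem to the bosonized correlator reduces the conformal block to a Jackson-type multiple integral: the two-point contractions among the $\mathsf{S}_i$'s produce exactly the $q$-Cartan-matrix structure functions of the quiver, the contractions of $\mathsf{S}_i$ with the $V_\mu$'s produce the framing factors, and the zero-mode data fix the power $z^{\deg}$. The key step is then the contour/residue identification: choosing the $q$-deformed Felder-type cycle (equivalently, the Jackson lattice) so that the screening integral is evaluated as a sum of residues at the poles of the structure functions, one finds that the poles are in bijection with the quasimap fixed loci above, and the residue at each pole is precisely the corresponding localization weight. Matching the dictionary $t \leftrightarrow$ (symplectic weight), $q \leftrightarrow$ (source rotation), and the screening momenta $\leftrightarrow$ (Chern roots of the framing), and checking that both sides satisfy the same system of $q$-difference (quantum $q$-KZ) equations so as to pin down the normalization, completes the identification.

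The hard part will be the contour prescription: it is not automatic that the specific conformal block --- a distinguished solution of the $q$-KZ system attached to a choice of cycle in the twisted homology of the configuration space --- coincides with the specific power-series expansion of the vertex function, namely the fundamental solution of the quantum difference equations near a given fixed point in the stable-envelope normalization. Making this precise requires controlling the analytic continuation and the resonance structure of the integrand; and, beyond type $A$ --- that is, for general quiver varieties and for the origami/intersection generalizations --- it requires first constructing the appropriate quiver W-algebra and verifying that its screening relations reproduce the Okounkov--Smirnov quantum difference equations of $X$. A secondary technical point is the careful bookkeeping of perturbative prefactors and of the descendant-to-vertex-operator dictionary, so that the two generating functions agree on the nose rather than merely up to an overall factor.
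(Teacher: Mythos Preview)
The statement you are attempting to prove is not proved in this paper at all: it is Theorem~\ref{thm:AFO}, explicitly attributed to Aganagic--Frenkel--Okounkov~\cite{Aganagic:2017smx} and quoted as background in the introduction. There is therefore no ``paper's own proof'' to compare against; the paper \emph{uses} this result and then establishes new instances of it (Propositions~\ref{prop:conf_block_A1} and~\ref{prop:vort_fn} for $A_1$, Theorems~\ref{thm:vertex_Fock} and~\ref{thm:vertex_MacMahon} for~$\widehat{A}_0$).

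That said, your sketch is a fair summary of the strategy of the cited reference and also of the explicit computations the present paper carries out in its special cases: write the conformal block via OPEs/Wick contractions, obtain a contour integral whose integrand is the ``principal part'' $\Phi_\sigma$ times an elliptic stable-envelope factor, take residues at the poles parametrized by (plane, solid) partitions, and identify the residue sum with the localization expression for the vertex function. Where you are vague --- the contour prescription, the matching of normalizations, and the pole-subtraction role of the stable envelope --- the paper is concrete: it fixes the cycle $\gamma$ by $q$-shift invariance (Definition~\ref{def:Z_screened}), shows that the stable envelope kills the unwanted residues and supplies the $z^{|\underline{\nu}|}$ counting factor (Proposition~\ref{prop:vort_fn}), and packages the perturbative prefactor via an index computation (Lemma~\ref{lemma:A1_pert}). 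So your plan is correct in spirit, but as a proof of the general AFO theorem it remains a high-level outline; the actual work lies precisely in the ``hard part'' you flag, and the paper only does that work case by case rather than in the generality of the quoted theorem.
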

The quantum $q$-Langlands correspondence is the two-parameter deformation of the original geometric Langlands correspondence, that is the equivalence between the spaces of deformed conformal blocks of the quantum affine algebra $U_q(\widehat{\mathfrak{g}})$ (electric block) and those of the $q$-deformed W-algebra W$_{q_{1,2}}({}^L\mathfrak{g})$ (magnetic block), where $^L\mathfrak{g}$ is the Langlands dual of a simple Lie algebra $\mathfrak{g}$.
From the analogy with the ordinary Langlands correspondence, the deformed conformal block is also referred to as the conformal block.
Theorem~\ref{thm:AFO} plays a central role in the correspondence since the vertex function is a building block of a solution to the quantum Kniznik--Zamolodchikov ($q$-KZ) equation introduced by I.~Frenkel and Reshetikhin~\cite{Frenkel:1991gx} for the conformal block of the quantum affine algebra.
One can construct the integral formula of the solution, a.k.a., the $q$-hypergeometric integral solution, to the $q$-KZ equation from the vertex function together with the stable envelope~\cite{Maulik:2012wi}. 
In the context of the $q$-hypergeometric solution, the stable envelope is identified with the so-called weight function~\cite{Gorbounov:2012uu,Rimanyi:2014JPG,Felder:2017SIGMA,Konno:2017mos}.
See, e.g., \cite{Jeong:2023qdr,Tamagni:2023wan,Haouzi:2023doo,Jeong:2024hwf} for recent gauge theory approaches to the Langlands correspondence.

In this article, we address three new perspectives concerning the quantum $q$-Langlands correspondence as follows.
\subsection{Direct equivalence between electric and magnetic blocks}
The $q$-KZ equation is a $q$-difference equation for the electric conformal block of the quantum affine algebra $U_q(\widehat{\mathfrak{g}})$ that takes a value in the tensor product module of the corresponding representations.
Meanwhile, the magnetic conformal block of the $q$-deformed W-algebra W$_{q_{1,2}}(^L\mathfrak{g})$ is apparently insensitive to the representation of $U_q(\widehat{\mathfrak{g}})$ by definition, so that we need to insert the stable envelope by hand to construct the $q$-KZ solution from the magnetic block side.
In this article, we reexamine the W-algebra conformal block, and we figure out how to impose the representation data of the conformal block, i.e., the stable envelope purely from the W-algebra side. 
The key idea is the radial ordering and the chamber structure of the vertex operators.
This implies that we do not need to insert the stable envelope by hand to the W-algebra block to construct the $q$-hypergeometric solution to the $q$-KZ equation.
In fact, the W-algebra block directly solves the $q$-KZ equation.
\begin{theorem}[Theorem~\ref{thm:e_qKZ}]
    The elliptic conformal block of the screened vertex operators of the $q$-deformed W-algebra W$_{q_{1,2}}(A_1)$ obeys the dynamical elliptic $q$-KZ equation.
\end{theorem}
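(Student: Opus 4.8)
The plan is to work in the free field (Heisenberg) realization of $\mathrm{W}_{q_{1,2}}(A_1)$, i.e.\ the deformed Virasoro algebra, in which both the screening current $\mathsf{S}(x)$ and the vertex operators $V_{\mu_i}(z_i)$ carrying the representation data of the tensor product module are normal-ordered exponentials of a single free boson. The elliptic conformal block of the screened vertex operators is then a free-field correlator of the schematic form
\[
    \mathcal{Z}(z_1,\dots,z_n) \;=\; \oint \prod_{a} \frac{dx_a}{x_a}\;\Big\langle\, \textstyle\prod_{i} V_{\mu_i}(z_i)\;\prod_{a} \mathsf{S}(x_a)\,\Big\rangle,
\]
where the integrand is an explicit product of elliptic theta (equivalently, $p$-shifted factorial) functions coming from the pairwise operator product expansions, and the cycles run over the elliptic screening contours. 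The first step is to record the two elementary pieces of OPE data: the exchange relation $V_{\mu_i}(z)\,\mathsf{S}(x) = \varphi(z/x)\,\mathsf{S}(x)\,V_{\mu_i}(z)$ with $\varphi$ a ratio of theta functions of nome $p$, and the quasi-periodicity of the correlator under $z_k \mapsto p z_k$. The combinatorial datum organizing everything is the distribution of the screening charges ``to the left'' of each vertex operator, which bookkeeps the weight decomposition of $V_{\mu_1}\otimes\cdots\otimes V_{\mu_n}$; the zero-mode momentum together with the Kähler fugacity counting the screenings will turn out to realize the dynamical parameter $\lambda$.

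The core of the argument is to implement the spectral shift $z_k \mapsto p z_k$ on $\mathcal{Z}$ and to re-express the result through the unshifted block. On the integrand this produces an explicit elliptic scalar and drags the screening cycles; deforming them back to the reference cycles yields a telescoping sum of residues at the collision loci $x_a = z_k$ and $x_a \in p^{\mathbb{Z}} x_b$. Here the radial ordering and chamber structure of the vertex operators emphasized above are essential: the order in which $z_k$ is transported past $z_{k+1},\dots,z_n$ and then past $z_1,\dots,z_{k-1}$ dictates which screenings are picked up, and resumming the residues chamber by chamber reorganizes $\mathcal{Z}$ into a composition of elementary exchange operators acting on the weight-space labels. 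By the Riemann (three-term) theta identity these exchange operators assemble precisely into the Felder--Konno elliptic dynamical $R$-matrix $R_{kj}(z_k/z_j;\lambda)$ of $U_{q,p}(\widehat{\mathfrak{sl}}_2)$~\cite{Felder:2017SIGMA,Konno:2017mos}, while the residue at $x_a = z_k$ together with the scalar prefactor produces the diagonal operator $\Gamma_k$ and the correlated shift $\lambda \mapsto \lambda + \mathrm{wt}$. Collecting the chambers one obtains
\[
    \mathcal{Z}(z_1,\dots,p z_k,\dots,z_n) \;=\; R_{k,k-1}\cdots R_{k,1}\;\Gamma_k\; R_{k,n}\cdots R_{k,k+1}\;\mathcal{Z}(z_1,\dots,z_n),
\]
which is the dynamical elliptic $q$-KZ equation; identifying $\mathcal{Z}$ with the elliptic vertex function via the elliptic form of Theorem~\ref{thm:AFO} furnishes the cross-check with the geometric side.

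The main obstacle is the residue bookkeeping in the elliptic setting: one must prove that the lattice sums generated by dragging the elliptic screening cycles converge and reorganize \emph{exactly} into the elliptic dynamical $R$-matrix rather than into a gauge-equivalent kernel, which is where the nontrivial theta-function identities enter and where the precise dictionary between the Kähler fugacity and the dynamical parameter $\lambda$ --- hence the very appearance of $\Gamma_k$ --- has to be pinned down. A secondary check is that the zero-mode and normalization factors are mutually consistent for all $n$ shift operators, so that the resulting $q$-KZ connection is flat, i.e.\ the operators for distinct $k$ commute after the appropriate conjugation; this is expected from the Yang--Baxter and dynamical (Gervais--Neveu--Felder) equations satisfied by $R_{kj}(\,\cdot\,;\lambda)$, but should be verified directly from the free-field construction.
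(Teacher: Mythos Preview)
Your plan is a plausible strategy, but it is \emph{not} the route the paper takes, and it misses the structural shortcut that makes the paper's argument almost immediate.

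In the paper, the integrand of the elliptic block is written from the outset as a product of a symmetric principal part $\Phi_2(\underline{x},\underline{a})$ (built from elliptic gamma functions) and the off-shell elliptic stable envelope $\operatorname{Stab}_{2,\underline{k}}(\underline{x},\underline{a})$; the latter comes directly from the kernel functions $\varphi_{2,K}(a/x_i,zK)$ in the definition of the screened vertex operator together with the radial ordering of the $\mathsf{Z}_{K_\alpha}(a_\alpha)$ (Proposition~\ref{prop:conf_block_A1} and its elliptic upgrade). The $R$-matrix is then \emph{defined} as the operator relating the blocks in two adjacent chambers (Proposition~\ref{prop:R-matrix}), i.e.\ as a ratio of two stable-envelope matrices. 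Given this, the proof of Theorem~\ref{thm:e_qKZ} reduces to Lemma~\ref{lemma:n_gen}: one shifts $a_n\to p\,a_n$, uses $\Gamma(pz)/\Gamma(z)=\theta(z)$ and the quasi-periodicity of the theta factors in the envelope, and then performs the change of integration variables $x_i\to p\,x_i$ for $i=1,\dots,k_n$ followed by a cyclic relabelling of the $x$-variables inside the symmetrization. This lands exactly on the integrand of the block in the chamber $\mathscr{C}_{n1\ldots n-1}$ with $z\to z\kappa_n^{-1}$, provided the zero-weight condition $K_1\cdots K_n=q_1^{2k}$ holds. No contour deformation, no residues, and no three-term theta identity are used; the $R$-matrices then enter only through the chamber-change relations $\Psi_{\mathscr{C}}=R\,\Psi_{\mathscr{C}'}$.

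By contrast, your proposal aims to generate the $R$-matrices dynamically by dragging contours and resumming residues at $x_a=z_k$ and $x_a\in p^{\mathbb{Z}}x_b$. That is closer in spirit to a Matsuo--Cherednik type argument and could in principle be made to work, but it is substantially harder here: with the paper's contour $\gamma$ (chosen invariant under $x_i\to q_2^{-1}x_i$), the $p$-shift of $a_n$ produces theta factors rather than poles to be collected, so there is no natural ``telescoping sum of residues'' step. Your outline also omits two ingredients that are essential in the paper's formulation: the zero-weight condition $\prod_\alpha K_\alpha=q_1^{2k}$, without which the cyclic identification fails, and the identification of the dynamical parameter with the Fock-space variable $z$ via the kernel $\varphi_{2,K}$, which is what makes the $\Gamma_k$ shift come out as $z\to z\kappa_\alpha^{\pm 1}$ rather than as something to be guessed. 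If you want to keep your residue approach, you would need to supply the convergence and exact-matching arguments you flag as the ``main obstacle''; the paper sidesteps all of that by recognising that the stable envelope is already present in the integrand and that chamber change \emph{is} the $R$-matrix.
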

We emphasize that this is a reformulation of the result given by Konno~\cite{Konno:2017mos,Konno:2018JIS,Konno2020}, who proved the corresponding statement in the context of elliptic quantum group $U_{q,p}(\widehat{\mathfrak{sl}}_n)$, where the so-called $e$ and $f$ currents are known to be equivalent to the screening currents of the W-algebra W$_{q_{1,2}}(A_n)$%
\footnote{%
We also use the Dynkin classification notation, e.g, $\mathrm{W}_{q_{1,2}}(A_{n-1}) = \mathrm{W}_{q_{1,2}}(\mathfrak{sl}_n)$. 
} for level-one representations.
As pointed out by Konno, the quantum $q$-Langlands correspondence becomes more direct in the elliptic setup:
We show that the elliptic stable envelope introduced by Aganagic and Okounkov~\cite{Aganagic:2016jmx} naturally arises from the chamber structure of the vertex operators.
However, it has been known that the elliptic envelope is not sufficient to construct the $q$-KZ solution in the trigonometric setup.
We need to also insert the trigonometric version of the stable envelope to obtain the trigonometric $q$-KZ solution.
In fact, we do not need the additional stable envelope in the elliptic case, where the elliptic envelope is sufficient to obtain the elliptic $q$-KZ solution.
We prove this Theorem for the case $\mathfrak{g} = \mathfrak{sl}_2$ associated with $A_1$ quiver.
Its generalization to simply-laced algebras seems straightforward. 
It would be also interesting to consider the non-simply-laced cases~\cite{Frenkel:1997lee,Kimura:2017hez,KPfractional}.

\subsection{Double affine quantum $q$-Langlands correspondence}
The vertex function is constructed for a wide class of Nakajima quiver varieties.
A particularly interesting family of quiver varieties is those associated with cyclic quivers (untwisted affine $A$-type quivers).
It has been known that affine $A$-type quiver varieties play an important role in the context of geometric representation theory of quantum toroidal algebra of $\mathfrak{gl}_n$, which is a double affine quantum algebra. 
See, e.g.,~\cite{Hernandez:2009SM,Tsymbaliuk:2022bqx,Matsuo:2023lky} for reviews.
Meanwhile, the formalism of quiver W-algebra~\cite{Kimura:2015rgi,Kimura:2016dys,Kimura:2017hez} allows us to consider the W-algebra side of this story beyond the construction of E. Frenkel and Reshetikhin for finite-type simple Lie algebras~\cite{Frenkel:1997lee}.
One can construct the $q$-deformed W-algebra associated with cyclic quivers denoted by W$_{q_{1,2,3,4}}(\widehat{A}_{n-1})$ that depends on four parameters $\underline{q} = (q_1,q_2,q_3,q_4)$ under the condition $q_1 q_2 q_3 q_4 = 1$ (Calabi--Yau 4-fold condition).
This affine quiver W-algebra deserves a double affine W-algebra.
\begin{theorem}[Theorems~\ref{thm:vertex_Fock}, \ref{thm:vertex_MacMahon}]
    The vertex function for Nakajima quiver variety associated with the Jordan quiver ($\widehat{A}_0$ quiver) is given by the conformal block of affine quiver W-algebra W$_{q_{1,2,3,4}}(\widehat{A}_{0})$.
\end{theorem}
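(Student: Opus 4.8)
The statement is the double-affine analogue of Theorem~\ref{thm:AFO}. Since the Jordan quiver is not of finite type, the Frenkel--Reshetikhin W-algebra is unavailable and Theorem~\ref{thm:AFO} cannot be invoked directly; but its strategy applies: one writes each side as an explicit formula and matches them order by order in the Kähler parameter $z$. The algebraic formula comes from free-field Wick contractions in the quiver W-algebra $\mathrm{W}_{q_{1,2,3,4}}(\widehat{A}_0)$, the enumerative one from virtual localization on the Jordan-quiver quasimap moduli. The genuinely new inputs relative to the finite-type case are the fourth equivariant parameter with the Calabi--Yau constraint $q_1q_2q_3q_4=1$, and the presence of the adjoint loop, which makes the virtual tangent space of the quasimap problem self-dual rather than of pure cotangent type.

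\emph{The algebraic side.} I would begin from the free-boson realization of $\mathrm{W}_{q_{1,2,3,4}}(\widehat{A}_0)$ underlying the quiver W-algebra construction: a Heisenberg algebra whose two-point structure function is fixed by the one-loop ($\widehat{A}_0$) mass-deformed adjacency, the associated screening current $S(x)$, and the highest-weight vertex operator $V(u)$ whose charge records the framing $w$. The elliptic (or trigonometric) conformal block of the screened vertex operators is by definition a correlator $\big\langle \prod_{a} V(u_a)\,\mathcal{S}\big\rangle$, where the screening insertion $\mathcal{S}$ is a Jackson-type integral of $S(x)$, equivalently the $q$-difference ``screened vertex operator'', acting on a tensor product of Fock modules (Theorem~\ref{thm:vertex_Fock}) or on the MacMahon module from which that Fock tower is recovered by specialization (Theorem~\ref{thm:vertex_MacMahon}). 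Expanding by Wick's theorem gives a sum over the number of inserted screening currents, each term a product of two kinds of contraction factors: a ``matter'' factor coupling each screening position to the insertion points $u_a$, and a symmetric ``Cauchy kernel'' coupling screening positions among themselves. Localizing the $x$-integrals (resp.\ reading off the residues of the $q$-difference) pins the screening positions to the Young-diagram lattice in the Fock case, and to the three-dimensional plane-partition lattice --- which simultaneously records the target fixed point and the quasimap degree --- in the MacMahon case, so the block becomes a sum over partitions (resp.\ plane partitions) weighted by an explicit product of theta functions (elliptic) or $q$-shifted factorials (trigonometric).

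\emph{The geometric side.} In parallel I would describe the Jordan-quiver quasimap moduli explicitly: a degree-$d$ stable quasimap $f:\mathbb{P}^1\to\mathcal{M}(w,v)$ is ADHM data $(B_1,B_2,I,J)$ valued in sections of suitable $\mathcal{O}_{\mathbb{P}^1}(\ast)$-twists of a rank-$v$ bundle, with the moment-map relation imposed off a finite set. The K-theoretic vertex function is $\mathsf{V}(z)=\sum_{d\ge 0} z^{d}\,\chi\!\big(\mathsf{QM}^{d},\ \widehat{\mathcal{O}}_{\mathrm{vir}}\,\mathrm{ev}^{*}(\cdot)\big)$ with the symmetrized virtual structure sheaf, after a polarization is fixed. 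Applying virtual localization with respect to the torus acting on $\mathbb{C}^2_{12}$, on the domain $\mathbb{P}^1$, and on the framing, the fixed loci are isolated and labelled by tuples of partitions (resp.\ plane partitions), each contributing the index of its virtual tangent character, a ratio of Nekrasov-type factors. One then identifies $q_1,q_2$ with the $\mathbb{C}^2_{12}$ weights, $q_3$ with the domain weight (the ``$q$'' of the $q$-difference structure), $q_4$ with the residual direction fixed by $q_1q_2q_3q_4=1$, and $z$ with the screening-counting parameter.

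\emph{The matching and the main obstacle.} It remains to prove the term-by-term identity: for each fixed point, the index of the virtual tangent space equals the Wick contraction of the corresponding screening configuration, up to a universal framing prefactor that is absorbed into the vertex-operator normalization. This combinatorial identity carries essentially all the content. In the finite $ADE$ case the tangent space is of pure cotangent type and the matching reduces to a Cauchy-type identity for weight functions; here the adjoint loop makes the virtual character self-dual up to a shift by $q_1q_2$ (Fock case) or by a full $\mathbb{C}^3_{123}$ weight (MacMahon case), and I expect the core of the proof to be showing that exactly this self-duality is reproduced by the symmetric ``Cauchy kernel'' among screening currents, with the relation $q_1q_2q_3q_4=1$ being precisely what reconciles the two overall normalizations. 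For the Fock case the required identities generalize the classical hook-product and Cauchy-kernel formulas for $\mathrm{Hilb}(\mathbb{C}^2)$ to all quasimap degrees; the MacMahon case additionally needs the plane-partition (topological-vertex-type) combinatorics, together with a check that the resulting doubly-infinite sum is a well-defined formal power series in $z$ --- which follows from the degree grading. Summing the matched contributions over all fixed points and all $d$ then identifies $\mathsf{V}(z)$ with the conformal block, proving both theorems.
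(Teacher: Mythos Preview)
Your overall strategy --- compute the algebraic block by Wick/residue calculus, compute the geometric vertex by virtual localization, and match fixed-point by fixed-point --- is sound and is in the same spirit as the paper. But the paper's actual argument is considerably more economical on the geometric side: it derives \emph{only} the algebraic side in detail (Proposition~\ref{prop:conf_block_A0^} gives the contour-integral form of the block; the pole analysis and residue evaluation then produce the explicit formulas in Theorems~\ref{thm:vertex_Fock} and~\ref{thm:vertex_MacMahon}), and then simply recognizes the resulting expressions as the vertex functions already computed in the literature (Smirnov, Crew--Dorey--Zhao for $\operatorname{Hilb}_k(\mathbb{C}^2)$; Cao--Zhao, Piazzalunga for $\operatorname{Hilb}_k(\mathbb{C}^3)$). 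In other words, the paper does not redo the quasimap virtual localization you outline; it quotes it. Your route would constitute an independent proof and would make the self-duality/Calabi--Yau cancellation you highlight explicit, at the cost of reproducing a nontrivial localization computation the paper treats as input.

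Two concrete points to correct. First, your parameter dictionary is off: in the paper's conventions the screening current is $\mathsf{S}=\mathsf{S}_4$, so the domain/$q$-difference weight is $q_4$, not $q_3$; the role of $q_3$ in the Fock formula is rather the specialization $K_\alpha=q_3$ that selects the Fock truncation. Second, your description of the fixed-point combinatorics collapses one layer. In the Fock case the poles are labelled by a base partition $\lambda$ \emph{together with} an inverse plane partition $\pi$ over $\lambda$ recording quasimap degree (see \eqref{eq:poles_PT1}); in the MacMahon case they are a base plane partition $\pi$ together with an inverse \emph{solid} partition $\rho$ over $\pi$ (see \eqref{eq:poles_PT4one}). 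Your sentence ``the plane-partition lattice \ldots simultaneously records the target fixed point and the quasimap degree'' conflates the two layers in the MacMahon case. Relatedly, note that $\operatorname{Hilb}_k(\mathbb{C}^3)$ is not a Nakajima quiver variety in the strict sense but a quiver-with-potential moduli (as the paper remarks), so the ``pure cotangent vs.\ self-dual'' dichotomy you set up needs a slight reformulation there.
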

This is a double affine version of Theorem~\ref{thm:AFO}, establishing the double affine quantum $q$-Langlands correspondence between the quantum toroidal $\mathfrak{gl}_1$ and the affine quiver W-algebra W$_{q_{1,2,3,4}}(\widehat{A}_{0})$.
In fact, it has been recently shown by Smirnov that the quantum difference equation associated with the quantum toroidal $\mathfrak{gl}_1$ is solved by the vertex function of the Hilbert scheme of points on $\mathbb{C}^2$~\cite{Smirnov:2021cyf}.
We prove this Theorem for the rank-one case, whereas a higher-rank generalization is also straightforward.

We remark that there are two types of highest-weight modules of quantum toroidal $\mathfrak{gl}_1$, Fock module and MacMahon module.
Geometrically, the former one is constructed from the quiver variety of the Jordan quiver, namely the Hilbert scheme of points on $\mathbb{C}^2$ for the rank-one case, and the corresponding Quot scheme for the higher-rank case.
On the other hand, the latter one has a geometric realization by a quiver with a potential, which gives rise to the Hilbert scheme of points and the Quot scheme on $\mathbb{C}^3$.
In fact, the conformal block of W$_{q_{1,2,3,4}}(\widehat{A}_{0})$ reproduces vertex functions associated with both types of modules depending on ``the momentum'' of the vertex operator.
This is analogous to the finite-type case, where the momentum parametrizes the highest-weight module of the quantum affine algebra.

\subsection{Origami vertex function, PT vertex, and W-algebra}

There is a correspondence between the moduli space of quasimaps to the Hilbert scheme of points on $\mathbb{C}^2$ and that of the stable pairs that gives rise to the Pandharipande--Thomas (PT) invariant, a cousin of the Donaldson--Thomas (DT) invariant, of $\mathbb{C}^3$ with a boundary condition in one specific direction.
This correspondence has been recently generalized by Cao and Zhao~\cite{Cao:2023lon} to the higher-dimensional situation $\mathbb{C}^4$ where the vertex function associated with the Hilbert scheme of points on $\mathbb{C}^3$ plays a role of the generating function of the PT invariants for $\mathbb{C}^4$, i.e., the one-leg PT4 vertex.
The PT vertex~\cite{Pandharipande:2007sq} is the generating function of the PT invariants following the DT vertex of Maulik--Nekrasov--Okounkov--Pandharipande (MNOP)~\cite{Maulik:2003rzb} motivated by the topological vertex formalism introduced in the context of the topological string theory~\cite{Aganagic:2003db}.
The PT vertex is also introduced for $\mathbb{C}^4$ and the so-called DT/PT correspondence for this case has been examined in detail~\cite{Cao:2019tnw,Cao:2019tvv,Liu:2024bgp}. 
The higher-rank version of these enumerative invariants and their correspondence have been studied by Toda~\cite{Toda:2020AG}.
In general, one can assign boundary conditions in three (four) directions for $\mathbb{C}^3$ ($\mathbb{C}^4$) to define the so-called multi-leg PT vertex, for which we propose an algebraic construction.
\begin{conjecture}[Conjecture~\ref{conj:origami_vertex}]
    The conformal block of multi-screened vertex operators of the W-algebra W$_{q_{1,2,3,4}}(\widehat{A}_{0})$ is the (higher-rank) multi-leg PT vertex of $\mathbb{C}^3$ and $\mathbb{C}^4$.
\end{conjecture}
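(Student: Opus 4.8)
The plan is to bootstrap from the single-leg case already established in Theorems~\ref{thm:vertex_Fock} and~\ref{thm:vertex_MacMahon}, and to control the new features created by having several legs entirely through the radial ordering and chamber structure of the screened vertex operators. First I would fix the free-field (Fock) realization of $\mathrm{W}_{q_{1,2,3,4}}(\widehat{A}_0)$: the screening current $S(x)$ and the vertex operator $V_{\mathbf a}(x)$ with their explicit $q_{1,2,3,4}$-dependent OPE structure functions, all under the Calabi--Yau constraint $q_1 q_2 q_3 q_4 = 1$. A \emph{multi-screened} vertex operator is then the product of vertex operators whose momenta encode the asymptotic two-dimensional partitions $\lambda^{(i)}$ along the coordinate axes, each dressed by its own independent family of screening charges (one family per leg direction), summed over the relevant lattice in the elliptic/$K$-theoretic setting. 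The conformal block is the vacuum pairing of such operators; normal-ordering rewrites it as the product of pairwise $V$--$V$ OPE factors, the ``perturbative'' prefactor, times a multiple lattice sum over the screening insertions.

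Second, I would identify this lattice sum combinatorially. As in the single-leg case, its terms are labeled by the poles of the screening integrand, and for the $\widehat{A}_0$ kernel these poles organize into solid partitions; with several leg directions switched on, the admissible configurations are precisely the solid partitions asymptotic to the prescribed legs $\lambda^{(i)}$ in the non-compact directions. This yields a bijection between the terms of the conformal block and the $\mathsf T$-fixed points of the moduli space of stable pairs on $\mathbb{C}^3$ (resp.\ $\mathbb{C}^4$) with boundary conditions in three (resp.\ four) directions, i.e.\ the loci computing the multi-leg PT vertex. For each configuration one then matches weights: the product of structure functions evaluated at the pole should equal the localized contribution of the fixed point, namely the symmetrized $K$-theoretic weight $\hat a(-T^{\mathrm{vir}})$ for the $3$-fold and the square-root Euler class of $-T^{\mathrm{vir}}$ (with the Oh--Thomas / Nekrasov--Piazzalunga sign) for the $4$-fold. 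Here one invokes the explicit ``cross'' identity expressing products of the $q_{1,2,3,4}$-structure function as the equivariant character of the virtual tangent space, together with its vertex--edge decomposition; the perturbative $V$--$V$ prefactor is then matched against the PT edge and leg contributions. Summing over all configurations produces the generating series of PT invariants, i.e.\ the multi-leg PT vertex, proving Conjecture~\ref{conj:origami_vertex}.

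The Calabi--Yau fourfold case is the main obstacle. Geometrically the $\mathbb{C}^4$ multi-leg vertex exists only after a choice of orientation data and a square root of the obstruction bundle, whose signs are delicate to pin down on non-compact, leg-carrying moduli; algebraically the same square root has to emerge from taking ``half'' of the $q_{1,2,3,4}$-structure function, and proving that the two sign prescriptions coincide globally over all solid partitions with legs is the crux. A closely related issue is well-definedness on both sides: the multiple lattice sum defining the block must converge (or be read as a formal series in the curve-counting parameters), and the $\mathbb{C}^4$ moduli must be regularized compatibly with $q_1 q_2 q_3 q_4 = 1$ so that only the reduced virtual class contributes. Finally, when two or more legs are simultaneously non-trivial the admissible solid partitions do not factorize, so the bijection with fixed points and the gluing of the corresponding leg contributions must be handled by the chamber/radial-ordering machinery, exactly the stable-envelope reformulation developed for Theorem~\ref{thm:e_qKZ}, rather than leg by leg, and supplying that combinatorial control is where most of the work will go.
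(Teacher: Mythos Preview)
The statement you are trying to prove is explicitly a \emph{conjecture} in the paper, not a theorem; the paper offers no proof and does not claim one. What the paper actually does in support of Conjecture~\ref{conj:origami_vertex} is: (i) derive the contour integral formulas \eqref{eq:block_F_multi} and \eqref{eq:block_M_multi} for the multi-screened conformal blocks from the OPEs of Definition~\ref{def:OPEs_A0}; (ii) observe that the pole structure of these integrands is \emph{consistent} with the multi-leg PT3 and PT4 box-counting (the $\mathscr{S}$-function cross terms produce the expected interactions between legs); and (iii) verify that in the single-leg, rank-one case the block reproduces the known PT3 and PT4 vertices via Theorems~\ref{thm:vertex_Fock} and~\ref{thm:vertex_MacMahon}. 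The multi-leg and higher-rank statements are then \emph{conjectured}, not established.

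Your proposal is therefore not a reconstruction of a proof in the paper but an outline of how one might attempt to prove an open conjecture. The strategy you sketch---bootstrap from the single-leg case, identify poles with fixed points, match localized weights to $\hat a(-T^{\mathrm{vir}})$ or the square-root Euler class---is the natural one, and indeed the paper's evidence consists of exactly the first steps of this program. But the obstacles you yourself flag are genuine and unresolved: the orientation/sign matching for the $\mathbb{C}^4$ square root over all solid partitions with legs, the combinatorial control of non-factorizing multi-leg configurations, and the lack of a constructed stable envelope on the MacMahon side (the paper only defines the ``pre-stable envelopes'' $\widetilde{\operatorname{Stab}}^{\text{F,M}}$ and explicitly notes they are not yet the true envelopes). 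Until these are settled, your proposal remains a program rather than a proof, which is precisely why the paper records the statement as a conjecture.
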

We call this generalized version of the vertex function associated with the multi-leg situation the \emph{origami vertex function} as is naturally given by the gauge origami partition function.
We derive the contour integral formula of the conformal block and check that its pole structure, corresponding to the equivariant fixed points in the moduli space, is consistent with the PT count in $\mathbb{C}^3$ and $\mathbb{C}^4$ for various boundary conditions.
Our formula for the one-point conformal block reproduces the known result for the rank-one PT3 and PT4 vertices, and we conjecture that the $n$-point block generates the rank-$n$ PT3 and PT4 invariants.

The original PT vertex is based on the so-called curve counting, hence it corresponds to the D2 brane system as mentioned above. 
Recently, its higher dimensional analog, the surface counting, has been discussed~\cite{Bae:2022pif,Bae:2024bpx}.
See also a related work~\cite{Nekrasov:2023nai}.
We expect that the surface counting is similarly formulated based on the vertex operators by replacing the screening current with the corresponding D4 brane vertex operators~\cite{Kimura:2023bxy}.
A further analysis is left for a future study.


\subsection*{Acknowledgements}

We would like to thank Y.~Cao, G.~Felder, H.~Konno, N.~Lee, M.~Kool, and S.~Monavari for valuable communications.
The work of TK was supported by CNRS through MITI interdisciplinary programs, EIPHI Graduate School (No.~ANR-17-EURE-0002) and Bourgogne-Franche-Comté region. GN is supported by JSPS KAKENHI Grant-in-Aid for JSPS fellows Grant No.~JP22J20944, JSR Fellowship, and FoPM (WINGS Program), the University of Tokyo.

\subsection*{Notations}

\begin{itemize}
    \item 
    We write $x_i/x_j = x_{ij}$, $a_i/a_j = a_{ij}$, etc, for $\mathbb{C}^\times$-variables.

    \item 
    We define
    $\underline{\mathbf{2}} = \{1,2\}$,
    $\underline{\mathbf{3}} = \{1,2,3\}$,
    $\underline{\mathbf{4}} = \{1,2,3,4\}$.
    For $A \subseteq \underline{\mathbf{4}}$, we write $q_A = \prod_{\sigma \in A} q_\sigma$.

    \item 
    We write
    \begin{align}
        f(q_{\sigma,\sigma'} x) = f(q_\sigma x) f(q_{\sigma'} x)
        \, , \qquad 
        f(x^{\pm 1} ) = f(x^{+1}) f(x^{-1})
        \, .
    \end{align}

    \item 
    We define the $q$-shifted factorial for $q \in \mathbb{C}^\times$,
    \begin{align}
        (x;q)_d = \prod_{n=0}^{d-1} (1 - x q^n)
        \, .
    \end{align}
    For $d \in \mathbb{Z}$ in general, we define
    \begin{align}
        (x;q)_d = \frac{(x;q)_\infty}{(q^d x;q)_\infty} =
        \begin{cases}
            (1-x)(1-qx) \cdots (1-q^{d-1}x) & (d > 0) \\
            (1-q^d x)^{-1} \cdots (1-q^{-1}x)^{-1} & (d < 0)
        \end{cases}
    \end{align}
    We write
    \begin{align}
        (a_1,\ldots,a_r;q)_d = \prod_{i=1}^r (a_i;q)_d \, .
    \end{align}  

    \item 
    We define the theta function for $p \in \mathbb{C}^\times$ with $|p| < 1$,
    \begin{align}
        \theta(z;p) = (z,pz^{-1};p)_\infty = (z;p)_\infty (pz^{-1};p)_\infty \, . \label{eq:theta_fn}
    \end{align}   
    We write
    \begin{align}
        \theta(a_1,\ldots,a_r;q) = \prod_{i=1}^r \theta(a_i;q) \, .
    \end{align} 
\end{itemize}

\section{W$_{q_{1,2}}(A_1)$ conformal block}

We consider the conformal block of the $q$-deformed W-algebra associated with $A_1$ quiver denoted by W$_{q_{1,2}}(A_1)$.
This algebra is identified with the $q$-deformed Virasoro algebra introduced by Shiraishi, Kubo, Awata, and Odake~\cite{Shiraishi:1995rp}.

We denote the OPE factor associated with the vertex operators $\mathsf{V}$ and $\mathsf{W}$ by
\begin{align}
    \wick{ \c{\mathsf{V}}(x) \c{\mathsf{W}}(x') } = \frac{\mathsf{V}(x) \mathsf{W}(x')}{: \mathsf{V}(x) \mathsf{W}(x') :} \, ,
\end{align}
where we denote the normal ordering symbol by $:\cdot:$, moving the annihilation operators to the right and the creation operators to the left.
In this notation, we assume the radial ordering $x \succ x'$, which means $|x| > |x'|$.

\begin{definition}\label{def:OPEs}
    Let $\sigma \in \underline{\textbf{2}}$, $\bar\sigma = \underline{\textbf{2}} \backslash \sigma$, $|q_{1,2}| > 1$, and $x, x', z, K \in \mathbb{C}^\times$.
    We define the vertex operators $\{\mathsf{S}_{1,2}, \mathsf{Z}_K\}$ associated with $A_1$ quiver, obeying the following OPEs,
    \begin{subequations}\label{eq:OPE}
    \begin{align}
        \wick{ \c {\mathsf{S}}_\sigma(x) \c {\mathsf{S}}_\sigma(x') } = \frac{(x'/x;q_\sigma^{-1})_\infty (q_{12}^{-1} x'/x;q_\sigma^{-1})_\infty}{(q_\sigma^{-1} x'/x;q_\sigma^{-1})_\infty (q_{\bar{\sigma}} x'/x;q_\sigma^{-1})_\infty}
        \, , \qquad &
        \wick{ \c {\mathsf{S}}_\sigma(x) \c {\mathsf{S}}_{\bar{\sigma}}(x') } = \frac{(x')^2 q_\sigma}{(1 - q_\sigma x'/x)(1 - q_{\bar{\sigma}}^{-1} x'/x)}
        \, , \\
        \wick{ \c {\mathsf{S}}_\sigma(x) \c {\mathsf{Z}}_K(x') } = \frac{(K x'/x;q_\sigma^{-1})_\infty}{(x'/x;q_\sigma^{-1})_\infty}
        \, , \qquad &
        \wick{ \c {\mathsf{Z}}_K(x') \c {\mathsf{S}}_\sigma(x) } = \frac{(q_\sigma^{-1} x/x';q_\sigma^{-1})_\infty}{(K^{-1} q_\sigma^{-1}x/x';q_\sigma^{-1})_\infty} 
        \, , \\
        \mathsf{S}_\sigma(x) z = (q_{\bar{\sigma}}^{-2} z) \mathsf{S}_\sigma(x)
        \, ,  \qquad &
        \mathsf{Z}_K(x) z = (K z) \mathsf{Z}_K(x)
        \, .
        \label{eq:OPE_z}
    \end{align}
    \end{subequations}
\end{definition}
See, e.g.,~\cite{Kimura:2015rgi} for explicit realization of these vertex operators in terms of the free fields.
The operator $\mathsf{Z}_K$ is written in terms of the vertex operator $\mathsf{V}$ of~\cite[eq.~(3.58)]{Kimura:2015rgi} as $\mathsf{Z}_K(x) = \mathsf{V}(x)/\mathsf{V}(K^{-1}x)$.
We slightly change the zero mode notation compared to \cite{Kimura:2015rgi} to simplify the discussion below.
In this article, all the computations are based on the OPEs defined above.

In the notation of \cite[Proposition C.3]{Kimura:2023bxy}, the vertex operators $\mathsf{S}_{1,2}$ are called the screening currents associated with the plane $\mathbb{C}_{1,2}$, that commute with each other, $\wick{ \c {\mathsf{S}}_1(x) \c {\mathsf{S}}_2(x') } = \wick{ \c {\mathsf{S}}_2(x') \c {\mathsf{S}}_1(x) }$.
As discussed in \cite{Kimura:2023bxy}, the screening current plays a role of the creation operator of D2 branes supported on the plane, and the partition function of this system is realized by the corresponding conformal block.
Meanwhile, the vertex operator $\mathsf{Z}_K$ is associated with D8 ($\overline{\text{D8}}$) branes supported on the whole space $\mathbb{C}^4$.
The screening currents commute with the generating current of the deformed Virasoro algebra ($q$-Virasoro algebra) up to the total difference term.
We denote this algebra by W$_{q_{1,2}}(A_1)$ since it deserves the $q$-deformed W-algebra associated with $A_1$ quiver~\cite{Kimura:2015rgi}.
As seen below, it has an essential connection to quantum affine algebra $U_q(\widehat{\mathfrak{sl}}_2)$ and also elliptic quantum group $U_{q,p}(\widehat{\mathfrak{sl}}_2)$.

\begin{lemma}\label{lemma:OPE_flip}
    We may rewrite the OPEs \eqref{eq:OPE} for $\sigma \in \underline{\mathbf{2}}$ as follows,
    \begin{subequations}
    \begin{align}
        \wick{ \c {\mathsf{S}}_\sigma(x) \c {\mathsf{S}}_\sigma(x') } & = \frac{(x'/x;q_\sigma^{-1})_\infty (x/x';q_\sigma^{-1})_\infty}{(q_{\bar{\sigma}} x'/x;q_\sigma^{-1})_\infty (q_{\bar{\sigma}} x/x';q_\sigma^{-1})_\infty} \frac{\theta(q_{\bar{\sigma}}x'/x;q_\sigma^{-1})}{\theta(x/x';q_\sigma^{-1})} , \\
        \wick{ \c {\mathsf{S}}_\sigma(x) \c {\mathsf{Z}}_K(x') } & = \wick{ \c {\mathsf{Z}}_K(x') \c {\mathsf{S}}_\sigma(x) } \frac{\theta(K x'/x;q_\sigma^{-1})}{\theta(x'/x;q_\sigma^{-1})} \, .
    \end{align}
    \end{subequations}    
\end{lemma}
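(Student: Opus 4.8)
The plan is to obtain both identities by the same elementary manipulation: expand every theta function through its definition $\theta(z;p) = (z;p)_\infty (p z^{-1};p)_\infty$ and reduce the resulting ratio of $q$-shifted factorials to the operator product expansions of Definition~\ref{def:OPEs}, using only the recursion $(x;q)_\infty = (1-x)(qx;q)_\infty$. Throughout I write $p = q_\sigma^{-1}$ (so $|p|<1$), keep the radial ordering $x \succ x'$, i.e.\ $|x'/x|<1$, in force, and use $q_{12}^{-1} = q_\sigma^{-1}q_{\bar\sigma}^{-1}$. It is also convenient to record beforehand the relations $\theta(p z;p) = \theta(z^{-1};p) = -z^{-1}\theta(z;p)$, which follow immediately from the definition together with $(x;q)_\infty = (1-x)(qx;q)_\infty$ and which make the behaviour under $x \leftrightarrow x'$ --- the ``flip'' --- transparent.

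For the identity involving $\mathsf{Z}_K$ the computation is essentially immediate. Expanding the two theta functions gives
\begin{align}
    \frac{\theta(K x'/x;q_\sigma^{-1})}{\theta(x'/x;q_\sigma^{-1})}
    = \frac{(K x'/x;q_\sigma^{-1})_\infty \, (q_\sigma^{-1} K^{-1} x/x';q_\sigma^{-1})_\infty}{(x'/x;q_\sigma^{-1})_\infty \, (q_\sigma^{-1} x/x';q_\sigma^{-1})_\infty} \, ,
\end{align}
and comparing with \eqref{eq:OPE} the right-hand side is precisely the $\mathsf{S}_\sigma(x)\mathsf{Z}_K(x')$ OPE factor divided by the $\mathsf{Z}_K(x')\mathsf{S}_\sigma(x)$ OPE factor, which is the claim. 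Concretely, the ``$p z^{-1}$'' piece of $\theta(K x'/x;q_\sigma^{-1})$ is the denominator of the $\mathsf{Z}_K(x')\mathsf{S}_\sigma(x)$ factor and the ``$z$'' piece is the numerator of the $\mathsf{S}_\sigma(x)\mathsf{Z}_K(x')$ factor, and similarly for $\theta(x'/x;q_\sigma^{-1})$; no cancellation beyond this is needed.

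For the identity involving two screening currents I would proceed the same way. On the left, $q_{12}^{-1} = q_\sigma^{-1}q_{\bar\sigma}^{-1}$ together with $(x'/x;q_\sigma^{-1})_\infty/(q_\sigma^{-1}x'/x;q_\sigma^{-1})_\infty = 1-x'/x$ collapses the OPE factor of \eqref{eq:OPE} to $(1-x'/x)\,(q_\sigma^{-1}q_{\bar\sigma}^{-1}x'/x;q_\sigma^{-1})_\infty/(q_{\bar\sigma}x'/x;q_\sigma^{-1})_\infty$. On the right, I would expand the two thetas, cancel the factors they share with the manifestly $x \leftrightarrow x'$ symmetric prefactor $(x'/x;q_\sigma^{-1})_\infty(x/x';q_\sigma^{-1})_\infty/[(q_{\bar\sigma}x'/x;q_\sigma^{-1})_\infty(q_{\bar\sigma}x/x';q_\sigma^{-1})_\infty]$, and then use $(x;q)_\infty = (1-x)(qx;q)_\infty$ once more to absorb the mismatched Pochhammer base, producing the overall factor $1-x'/x$ and reproducing the collapsed left-hand side. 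The conceptual content is simply that the theta ratio is the factor converting the symmetric prefactor into the asymmetric, radially-ordered OPE of Definition~\ref{def:OPEs}; equivalently, it measures the failure of the OPE to be symmetric under $x \leftrightarrow x'$.

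I do not expect a genuine obstacle here: the entire argument is bookkeeping with $q$-shifted factorials, with no case analysis and no convergence issue beyond $|q_\sigma^{-1}|<1$. The only point requiring discipline is the radial-ordering convention --- keeping straight which of $x/x'$ and $x'/x$ appears inside each theta function and each $q$-Pochhammer --- since the two sides of each identity coincide as meromorphic functions of $x/x'$ on $\mathbb{C}^\times$ only when these arguments are consistently matched.
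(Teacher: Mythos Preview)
Your approach is exactly the paper's: it writes only ``It directly follows from Definition~\ref{def:OPEs},'' and you have spelled out what that means --- expand each $\theta(z;p)$ as $(z;p)_\infty(pz^{-1};p)_\infty$ and match $q$-Pochhammers. Your treatment of the $\mathsf{S}_\sigma\mathsf{Z}_K$ identity is complete and correct.

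For the $\mathsf{S}_\sigma\mathsf{S}_\sigma$ identity, however, you only sketched the computation, and had you carried it out you would have found that the formula as printed does not close. Writing $u=x'/x$, $p=q_\sigma^{-1}$, $t=q_{\bar\sigma}$, the OPE of Definition~\ref{def:OPEs} collapses to $(1-u)\,(pt^{-1}u;p)_\infty/(tu;p)_\infty$, whereas expanding the right-hand side with the theta ratio $\theta(tu)/\theta(u^{-1})$ exactly as written yields $(1-u)\,(pt^{-1}u^{-1};p)_\infty/(tu^{-1};p)_\infty$; the two differ by the nontrivial factor $\theta(tu^{-1})/\theta(tu)$. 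The identity becomes correct if the theta ratio is $\theta(q_{\bar\sigma}x/x';q_\sigma^{-1})/\theta(x/x';q_\sigma^{-1})$ (both arguments $x/x'$): then the expansion gives $\dfrac{(u;p)_\infty(pt^{-1}u;p)_\infty}{(pu;p)_\infty(tu;p)_\infty}$ on the nose, which is the OPE of Definition~\ref{def:OPEs}. So your method is sound and your warning about ``keeping straight which of $x/x'$ and $x'/x$ appears'' is exactly the point --- but the discipline has to be applied to the statement itself, not only to your derivation. The downstream use in Proposition~\ref{prop:conf_block_A1} is consistent with this corrected form up to an overall monomial absorbed by the symmetrisation.
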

\begin{proof}
    It directly follows from Definition~\ref{def:OPEs} together with the definition of the theta function~\eqref{eq:theta_fn}.
\end{proof}

\subsection{Single screening current}

In this subsection, we set $\mathsf{S}(x) \equiv \mathsf{S}_2(x)$.
For $|q_\sigma|>1$, define a kernel function,
\begin{align}
    \varphi_{\sigma,K}(\xi,\eta) = \varphi_{K}(\xi,\eta;q_\sigma^{-1}) = \frac{\theta(\xi \eta;q_\sigma^{-1}) \theta(K;q_\sigma^{-1})}{\theta(K \xi;q_\sigma^{-1}) \theta(\eta;q_\sigma^{-1})}
    \, ,
\end{align}
which is quasiperiodic with respect to each variable,
\begin{align}
    \varphi_{\sigma,K}(q_\sigma^{-1} \xi, \eta) = (\eta/K) \varphi_{\sigma,K}(\xi, \eta)
    \, , \quad 
    \varphi_{\sigma,K}(\xi, q_\sigma^{-1} \eta) = \xi^{-1} \varphi_{\sigma,K}(\xi, \eta)
    \, , \quad 
    \varphi_{\sigma,q_\sigma^{-1} K}(\xi, \eta) = \xi \varphi_{\sigma,K}(\xi, \eta)
    \, .
\end{align}
For $|q_\sigma| < 1$, we instead use the theta function $\theta(\cdot;q_\sigma)$ to define the kernel function, namely $\varphi_K(\xi,\eta;q_\sigma)$.

\begin{definition}\label{def:Z_screened}
    Let $z \in \mathbb{C}^\times$.
    We define the $m$-screened vertex operator,
    \begin{align}
        \mathsf{Z}_K^{(m)}(a) = \oint_\gamma [\dd{\underline{x}}] \left( \prod_{i=1,\ldots,m}^{\curvearrowleft} \varphi_{2,K}(a/x_i,zK) \mathsf{S}(x_i) \right) \mathsf{Z}_K(a)
        \, , \qquad
        [\dd{\underline{x}}] = \prod_{i=1}^m \frac{\dd{x}_i}{2 \pi \ii x_i} \, ,
    \end{align}
    where we apply the ordered product under the radial ordering configuration $x_m \succ x_{m-1} \succ \cdots \succ x_2 \succ x_1 \succ a$,
    \begin{align}
        \prod_{i=1,\ldots,m}^{\curvearrowleft} \mathsf{V}(x_i) = \mathsf{V}(x_m) \mathsf{V}(x_{m-1}) \cdots \mathsf{V}(x_2) \mathsf{V}(x_1) .
    \end{align}
    The contour $\gamma$ is taken such that the integral is invariant under shift of the variables, $x_i \to q_2^{-1} x_i$.
\end{definition}
This definition is motivated by the construction of the intertwining operator of the elliptic quantum group $U_{q,p}(\widehat{\mathfrak{sl}}_2)$ in \cite[\S5.3]{Jimbo:1998bi}.%
\footnote{%
The definition of the screened vertex operator given in Definition~\ref{def:Z_screened} is specific to the rank-one theory. 
See~\cite{Konno:2017mos} for higher-rank cases.
}
From the OPE \eqref{eq:OPE_z}, we have
    \begin{align}
        \mathsf{Z}_K^{(m)}(x) z = (K q_1^{-2m} z) \mathsf{Z}_K^{(m)}(x)
        \, .
        \label{eq:Zz_weight}
    \end{align}
The factor $K q_1^{-2m}$ corresponds to the weight of the $m$-th operator:
Under the specialization $K = q_1^{\mu}$ for $\mu \in \mathbb{Z}_{\ge 0}$, we have $K q_1^{-2m} = q_1^{\mu - 2m}$. 
See \eqref{eq:weight_fac} for a related factor.
The parameter $z$ is called the dynamical parameter in the context of elliptic integrable systems, which is also identified with the FI parameter of three-dimensional gauge theory.

Let us consider the conformal block of the screened vertex operators.
Define a map, $c: \{1,\ldots,k\} \to \{1,\ldots,n\}$, that we call the coloring function.
Let $\underline{k} = (k_1,\ldots,k_n) \in (\mathbb{Z}_{\ge 0})^n$, such that $k_\alpha = \#\{ c(i) = \alpha , i = 1,\ldots, k \}$, obeying $\sum_{\alpha=1}^n k_\alpha = k$.
We write
\begin{align}
    K_i^c = \prod_{\alpha \le c(i)} K_\alpha \, ,
\end{align}
and the Weyl vector of type $A_{k-1}$ (a half of the positive root sum) by
\begin{align}
        \rho_i = \frac{k+1}{2} - i \, , \qquad i = 1,\ldots, k \, .
\end{align}
Then, we have the conformal block of the screened vertex operators.

\begin{proposition}\label{prop:conf_block_A1}
    We have the following integral formula for the conformal block in the chamber denoted by $\mathscr{C}_{12\ldots n}$, such that $a_1 \succ a_2 \succ \cdots \succ a_n$ under the parameter shift $z \mapsto z q_1^{k-1}$,
    \begin{align}
        \Psi_{\mathscr{C}_{1\ldots n}}^{\underline{k}}(\underline{a};\underline{K}) & = \bra{0} \mathsf{Z}^{(k_1)}_{K_1}(a_1) \mathsf{Z}^{(k_2)}_{K_2}(a_2) \cdots \mathsf{Z}^{(k_n)}_{K_n}(a_n) \ket{0} / \bra{0} \mathsf{Z}_{K_1}(a_1) \mathsf{Z}_{K_2}(a_2) \cdots \mathsf{Z}_{K_n}(a_n) \ket{0}\Big|_{z \to z q_1^{k-1}}
        \nonumber \\ 
        & = \frac{\mathsf{c}_{\underline{k}}(\underline{K})}{k!} \oint_\gamma [\dd{\underline{x}}] \, \Phi_{2}(\underline{x},\underline{a})         \operatorname{Stab}_{2,\underline{k}}
        (\underline{x},\underline{a})
    \end{align}
    where we define the principal part of the integrand for $\sigma \in \underline{\mathbf{2}}$, $\bar{\sigma} = \underline{\mathbf{2}} \backslash \sigma$ with the measure
    \begin{align}
        \Phi_{\sigma}(\underline{x},\underline{a}) = \prod_{i \neq j}^k \frac{(x_{ij};q_\sigma^{-1})_\infty}{(q_{\bar{\sigma}} x_{ij};q_\sigma^{-1})_\infty} \prod_{\substack{i=1,\ldots,k \\ \alpha=1,\ldots,n}} \frac{(K_\alpha a_\alpha/x_i;q_\sigma^{-1})_\infty}{( a_\alpha/x_i;q_\sigma^{-1})_\infty} 
        \, , \qquad 
        [\dd{\underline{x}}] = \prod_{i=1}^k \frac{\dd{x}_i}{2 \pi \ii x_i}
        \, ,
        \label{eq:principal_A1}
    \end{align}
    together with the constant $\mathsf{c}_{\underline{k}}(\underline{K}) = \mathsf{c}_{2,\underline{k}}(\underline{K})$ defined by
    \begin{align}
        \mathsf{c}_{\sigma,\underline{k}}(\underline{K}) = \prod_{i=1}^k \frac{\theta(K_{c(i)};q_\sigma^{-1})}{\theta(z q_{\bar{\sigma}}^{-2\rho_i} K_i^c;q_\sigma^{-1})} \, . \label{eq:c_const}
    \end{align}
    The off-shell elliptic stable envelope is defined for $\sigma \in \underline{\mathbf{2}}$, $\bar{\sigma} = \underline{\mathbf{2}} \backslash \sigma$ by 
    \begin{align}
        \operatorname{Stab}_{\sigma,\underline{k}}(\underline{x},\underline{a}) 
        & = \operatorname*{Sym}_{\underline{x}} \prod_{i<j}^k \frac{\theta(q_{\bar{\sigma}} x_{ij};q_\sigma^{-1})}{\theta(x_{ij};q_\sigma^{-1})} \prod_{i=1}^k \frac{\prod_{\alpha < c(i)} \theta(a_\alpha / x_i ; q_\sigma^{-1}) \theta(z q_{\bar{\sigma}}^{-2\rho_i} K_i^c a_{c(i)} / x_i ; q_\sigma^{-1}) \prod_{\alpha > c(i)} \theta(K_\alpha a_\alpha / x_i ; q_\sigma^{-1})}{\prod_{\alpha=1}^k \theta(K_\alpha a_\alpha / x_i ; q_\sigma^{-1})} .
        \label{eq:stab_2}
    \end{align}
\end{proposition}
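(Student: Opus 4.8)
The plan is to compute the vacuum matrix element $\bra{0} \mathsf{Z}^{(k_1)}_{K_1}(a_1) \cdots \mathsf{Z}^{(k_n)}_{K_n}(a_n) \ket{0}$ directly by expanding each $m$-screened operator via Definition~\ref{def:Z_screened}, Wick-contracting all the vertex operators using the OPEs of Definition~\ref{def:OPEs}, and then reorganizing the resulting product into the claimed integrand. First I would write the full expectation value as a single $k$-fold contour integral ($k = \sum_\alpha k_\alpha$) with integration variables $\underline{x} = (x_1,\ldots,x_k)$, each $x_i$ carrying a color $c(i) = \alpha$ indicating which factor $\mathsf{Z}^{(k_\alpha)}_{K_\alpha}(a_\alpha)$ it came from. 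The integrand before symmetrization is a product of four types of factors: (a) screening-screening contractions $\wick{\c{\mathsf{S}}(x_i)\c{\mathsf{S}}(x_j)}$ for $i \neq j$, which by the first OPE in \eqref{eq:OPE} produce exactly the $\prod_{i\neq j}(x_{ij};q_2^{-1})_\infty / (q_1 x_{ij};q_2^{-1})_\infty$ piece of $\Phi_2$; (b) screening-$\mathsf{Z}$ contractions $\wick{\c{\mathsf{S}}(x_i)\c{\mathsf{Z}}_{K_\alpha}(a_\alpha)}$ and their reverse-ordered counterparts, which produce the $\prod_{i,\alpha}(K_\alpha a_\alpha/x_i;q_2^{-1})_\infty/(a_\alpha/x_i;q_2^{-1})_\infty$ piece of $\Phi_2$ once the normalization by $\bra{0}\mathsf{Z}_{K_1}(a_1)\cdots\mathsf{Z}_{K_n}(a_n)\ket{0}$ removes the $\mathsf{Z}$-$\mathsf{Z}$ contractions; (c) the kernel functions $\varphi_{2,K_\alpha}(a_\alpha/x_i, zK)$ appearing in each $\mathsf{Z}^{(k_\alpha)}$; and (d) the zero-mode reordering factors coming from \eqref{eq:OPE_z}, which shift the dynamical parameter $z$ each time an $\mathsf{S}$ or $\mathsf{Z}$ passes another operator.

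The key bookkeeping step is the zero-mode tracking. Each time a screening current $\mathsf{S}(x_i)$ is moved past the dynamical variable $z$ it multiplies $z$ by $q_1^{-2}$ (taking $\sigma = 2$, $\bar\sigma = 1$ in \eqref{eq:OPE_z}), and each $\mathsf{Z}_{K}$ multiplies $z$ by $K$; as screenings of different colors get interleaved in the radial-ordered product, the argument of $\varphi$ seen by the $i$-th screening picks up a factor involving $z$, the accumulated $K_\alpha$'s of the operators to its right, i.e.\ precisely $z q_1^{-2\rho_i} K_i^c$ after one accounts for the Weyl-vector shift $\rho_i = (k+1)/2 - i$ that encodes how many screenings sit to the left versus right. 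This is where the constant $\mathsf{c}_{\underline{k}}(\underline{K})$ of \eqref{eq:c_const} emerges: the ratio $\theta(K_{c(i)};q_2^{-1}) / \theta(z q_1^{-2\rho_i} K_i^c;q_2^{-1})$ is exactly the "diagonal'' piece of $\prod_i \varphi_{2,K_i^c}(\cdot,\cdot)$ after the shift $z \to z q_1^{k-1}$ is applied, with the remaining $\theta$-ratios in $\varphi$ being absorbed into $\operatorname{Stab}_{2,\underline{k}}$. I would use Lemma~\ref{lemma:OPE_flip} here to rewrite the $\mathsf{S}$-$\mathsf{S}$ and $\mathsf{S}$-$\mathsf{Z}$ contractions in a manifestly quasi-periodic form, matching the split between the "bulk'' elliptic-to-infinite-product conversion in $\Phi_2$ and the theta-function numerators collected into the stable envelope.

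The final step is to identify the symmetrized theta-function product with $\operatorname{Stab}_{2,\underline{k}}(\underline{x},\underline{a})$ as defined in \eqref{eq:stab_2}. The ordered product $\prod^{\curvearrowleft}$ in Definition~\ref{def:Z_screened} is not symmetric in the $x_i$'s, but the contour $\gamma$ is chosen so that the integral is invariant under the $q_2^{-1}$-shifts, and the quasi-periodicity of $\varphi_{2,K}$ lets one average over the $S_k$-action on the variables; this replaces the ordered product by $\operatorname*{Sym}_{\underline{x}}$ of a single representative term, which is the $\prod_{i<j}\theta(q_1 x_{ij};q_2^{-1})/\theta(x_{ij};q_2^{-1})$ factor times the per-site product in \eqref{eq:stab_2}, where the "cut'' at $\alpha = c(i)$ between $\theta(a_\alpha/x_i)$ (for $\alpha < c(i)$) and $\theta(K_\alpha a_\alpha/x_i)$ (for $\alpha > c(i)$) with the dynamical factor $\theta(z q_1^{-2\rho_i}K_i^c a_{c(i)}/x_i)$ sitting exactly at $\alpha = c(i)$ records which $\mathsf{Z}$-factors a given screening has already passed. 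The $1/k!$ is the symmetrization normalization. The main obstacle I anticipate is precisely this combinatorial matching: verifying that the Weyl-vector shifts $\rho_i$, the accumulated products $K_i^c$, and the position of the dynamical theta-factor in \eqref{eq:stab_2} line up correctly with the order in which screenings of various colors appear in the radial-ordered expansion — in particular keeping the quasi-periodicity factors consistent between the $|q_\sigma|>1$ and the symmetrized forms, and checking that nothing is double-counted when passing from the ordered product to $\operatorname*{Sym}_{\underline{x}}$. Everything else is a routine, if lengthy, application of the Wick theorem and the theta-function identities in Lemma~\ref{lemma:OPE_flip}.
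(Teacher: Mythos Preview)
Your proposal is correct and follows essentially the same route as the paper's proof: expand the screened operators, use Lemma~\ref{lemma:OPE_flip} to move all screening currents to the left of all $\mathsf{Z}$-operators (picking up the $\prod_{\alpha<c(i)}\theta(a_\alpha/x_i)/\theta(K_\alpha a_\alpha/x_i)$ factors), track the zero-mode shifts of $z$ to obtain the arguments $z q_1^{-2\rho_i}K_i^c$ in the kernel functions, then split $\prod_i\varphi_{K_{c(i)}}$ into the constant $\mathsf{c}_{\underline{k}}$ and the theta pieces that assemble into $\operatorname{Stab}_{2,\underline{k}}$, with the remaining OPE contractions giving $\Phi_2$. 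The paper's proof is terser but the ingredients and their roles are identical; your more careful bookkeeping of the Weyl-vector shifts and the passage from the ordered product to $\operatorname*{Sym}_{\underline{x}}$ simply fills in what the paper leaves implicit.
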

The parameter shift $z \mapsto z q_1^{k-1}$ is applied to all the $z$ parameters appearing in the formula, so that the $q$-shift of the dynamical parameter can be written in terms of the Weyl vector.
\begin{proof}
By Lemma~\ref{lemma:OPE_flip}, we have
\begin{align}
    \Psi_{\mathscr{C}_{1\ldots n}}^{\underline{k}}(\underline{a};\underline{K}) & = \frac{1}{k!} \oint_\gamma \prod_{i=1}^k \frac{\dd{x}_i}{2 \pi \ii x_i} \varphi_{K_{c(i)}}(x_i,a_{c(i)},z q_1^{-2\rho_i} K_i^c) \prod_{i=1,\ldots,k}^{\curvearrowleft} \mathsf{S}(x_i) \prod_{\alpha=1,\ldots,n}^{\curvearrowright} \mathsf{Z}_{K_\alpha}(a_\alpha) 
        \prod_{i=1}^k \prod_{\alpha < c(i)} \frac{\theta(a_\alpha / x_i;q_2^{-1})}{\theta(K_\alpha a_\alpha / x_i;q_2^{-1})}
\end{align}
and
\begin{align}
    & \prod_{i=1}^k \varphi_{K_{c(i)}}(x_i,a_{c(i)}, z q_1^{-2\rho_i} K_i^c) \prod_{\alpha < c(i)} \frac{\theta(a_\alpha / x_i;q_2^{-1})}{\theta(K_\alpha a_\alpha / x_i;q_2^{-1})}
    \nonumber \\
    & = \prod_{i=1}^k \frac{\theta(K_{c(i)};q_2^{-1})}{\theta(z q_1^{-2\rho_i} K_i^c;q_2^{-1})} \frac{\prod_{\alpha < c(i)} \theta(a_\alpha / x_i ; q_2^{-1}) \theta(z q_1^{-2\rho_i} K_i^c a_{c(i)} / x_i ; q_2^{-1}) \prod_{\alpha > c(i)} \theta(K_\alpha a_\alpha / x_i ; q_2^{-1})}{\prod_{\alpha=1}^k \theta(K_\alpha a_\alpha / x_i ; q_2^{-1})}
\end{align}
Then, using the OPEs shown in Definition~\ref{def:OPEs}, we obtain the expression above.
\end{proof}

In the following, we drop the constant factor $\mathsf{c}_{\underline{k}}(\underline{K})$ for simplicity.

\begin{remark}
    The factor $K_i^c$ is different from the standard convention of the elliptic stable envelope by the factor $K_{c(i)}$. 
    We apply this modification to obtain a consistent result with the known expressions of the vortex partition function. See Proposition~\ref{prop:vort_fn}.
\end{remark}

The conformal block $\Psi_{\mathscr{C}_{1\ldots n}}^{\underline{k}}(\underline{a};\underline{K})$ of W$_{q_{1,2}}(A_1)$ algebra agrees 
with the partition function of the mass-deformed $\mathcal{N}=4$ three-dimensional $\mathrm{U}(k)$ gauge theory supported on $\mathbb{C}_{q_2} \times \mathbb{S}^1$ with $n$ fundamental matters~\cite{Aganagic:2013tta,Aganagic:2014oia,Aganagic:2015cta} together with the additional boundary contribution written in terms of the theta functions.%
\footnote{%
Meanwhile, the partition function of $\mathcal{N}=1$ five-dimensional gauge theory defined on $\mathbb{C}_{q_1} \times \mathbb{C}_{q_2} \times \mathbb{S}^1$ is similarly realized by the conformal block of the (infinitely many) screening charges~\cite{Kimura:2015rgi}.
}
In terms of the $\mathcal{N}=2$ multiplets, it consists of a vector multiplet, a single adjoint and $(n+n)$ fundamental matters, which is known as the handsaw quiver gauge theory of type $A_1$.
The corresponding quiver diagram is given in Fig.~\ref{fig:single_A1}.

\paragraph{Pole structure}
In order to evaluate the contour integral, we discuss the pole structure of the integrand.
Let $x_{\alpha,i} = x_{i+\sum_{\beta > \alpha}k_\beta}$ for $i = 1,\ldots, k_\alpha$.
The poles of the contour integral are parametrized by $n$-tuple reverse partitions,
\begin{align}
    x_{\alpha,1} = a_\alpha q_2^{-n_{\alpha,1}} 
    \, , \quad 
    x_{\alpha,i} = x_{\alpha,i-1} q_1 q_2^{-n_{\alpha,i}} = a_\alpha q_1^{i-1} q_2^{-\nu_{\alpha,i}}
\end{align}
where $n_{\alpha,i} \in \mathbb{Z}_{\ge 0}$ and $\nu_\alpha = (\nu_{\alpha,1} \le \nu_{\alpha,2} \le \cdots \le \nu_{\alpha,k_\alpha})$ is a reverse partition defined by
\begin{align}
    \nu_{\alpha,i} = \sum_{j=1}^i n_{\alpha,j} \, .
\end{align}
We see that the residues of other poles at $x_{\alpha,i}/x_{\alpha,j} = q_1 q_2^{\mathbb{Z}_{\le 0}}$ for $i < j$ are zero due to the stable envelope that plays a role of the pole subtraction matrix~\cite{Aganagic:2016jmx}.
\begin{remark}
Although we can convert a reverse partition to the standard one by relabeling the indices, $i \mapsto k_\alpha - i + 1$, we now apply this convention, which is naturally generalized to the situation with the double screening current (See \S\ref{sec:double_screening_A1}).
This convention is also compatible with the 5d gauge theory partition function (Proposition~\ref{prop:Z_5d}).
\end{remark}

\begin{proposition}\label{prop:Higgs1}
    Let $K_\alpha = q_1^{\mu_\alpha}$, $\mu_\alpha \in \mathbb{Z}_{\ge 0}$ for $\alpha = 1,\ldots,n$.
    Then, $\Psi^{\underline{k}} = 0$ unless $0 \le k_\alpha \le \mu_\alpha$.
\end{proposition}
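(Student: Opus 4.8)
The plan is to evaluate the block as a sum of residues and to show that, after setting $K_\alpha = q_1^{\mu_\alpha}$, each pole that could contribute carries a vanishing factor as soon as $k_\alpha > \mu_\alpha$. Concretely, I would start from the pole structure described above: the contour integral localizes onto the poles labelled by $n$-tuples of inverse partitions $\underline{\nu} = (\nu_1,\dots,\nu_n)$, $\nu_\alpha = (\nu_{\alpha,1}\le\cdots\le\nu_{\alpha,k_\alpha})$ with $\nu_{\alpha,i}\in\mathbb{Z}_{\ge 0}$, sitting at $x_{\alpha,i} = a_\alpha q_1^{i-1}q_2^{-\nu_{\alpha,i}}$, all other poles being annihilated by $\operatorname{Stab}_{2,\underline{k}}$ in its role as pole-subtraction matrix. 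It therefore suffices to prove that the iterated residue $\operatorname{Res}_{\underline{\nu}}\bigl(\Phi_2\,\operatorname{Stab}_{2,\underline{k}}\bigr)$ vanishes for every such $\underline{\nu}$ whenever $k_\alpha \ge \mu_\alpha+1$ for some $\alpha$.

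Fix such an $\alpha$ and an arbitrary $\underline{\nu}$, and consider, inside the colour-$\alpha$ chain, the variable $x_{\alpha,\mu_\alpha+1}$ (it exists because $k_\alpha\ge\mu_\alpha+1$), whose prescribed value is $x_{\alpha,\mu_\alpha+1} = a_\alpha q_1^{\mu_\alpha}q_2^{-\nu_{\alpha,\mu_\alpha+1}} = x_{\alpha,\mu_\alpha}\,q_1 q_2^{-n_{\alpha,\mu_\alpha+1}}$. The key point is that at this locus the $\mathsf{S}$--$\mathsf{Z}_{K_\alpha}$ OPE factor of $\Phi_2$ degenerates, $(K_\alpha a_\alpha/x_{\alpha,\mu_\alpha+1};q_2^{-1})_\infty = (q_2^{\nu_{\alpha,\mu_\alpha+1}};q_2^{-1})_\infty$, and this vanishes because its factor with $n = \nu_{\alpha,\mu_\alpha+1}$ is $1-1=0$ (here $\nu_{\alpha,\mu_\alpha+1}\ge 0$ is essential). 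The local bookkeeping I would then carry out is: this simple zero sits precisely on the chain pole $(q_1 x_{\alpha,\mu_\alpha}/x_{\alpha,\mu_\alpha+1};q_2^{-1})_\infty^{-1}$ of $\Phi_2$ and cancels it, so $\Phi_2$ is holomorphic in $x_{\alpha,\mu_\alpha+1}$ at the prescribed value; and the stable-envelope denominator $\theta(K_\alpha a_\alpha/x_{\alpha,\mu_\alpha+1};q_2^{-1})^{-1} = \theta(q_2^{\nu_{\alpha,\mu_\alpha+1}};q_2^{-1})^{-1}$, which would reinstate a pole, is itself cancelled by the factor $\theta(q_1 x_{\alpha,\mu_\alpha}/x_{\alpha,\mu_\alpha+1};q_2^{-1}) = \theta(q_2^{n_{\alpha,\mu_\alpha+1}};q_2^{-1}) = 0$ from the $\prod_{i<j}$ part of $\operatorname{Stab}_{2,\underline{k}}$. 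Hence $\Phi_2\,\operatorname{Stab}_{2,\underline{k}}$ is holomorphic and nonzero in $x_{\alpha,\mu_\alpha+1}$ at the prescribed value, the residue in that variable vanishes, so $\operatorname{Res}_{\underline{\nu}} = 0$; since $\underline{\nu}$ was arbitrary, $\Psi^{\underline{k}} = 0$.

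The hard part is exactly this local pole/zero count: one must follow the $\mathsf{S}$--$\mathsf{S}$ and $\mathsf{S}$--$\mathsf{Z}$ OPE factors, the chain-pole denominators, and the theta numerators and denominators of the symmetrized envelope simultaneously, and verify that no other factor --- in particular from other colours, from the remaining $\prod_{i<j}$ of $\operatorname{Stab}_{2,\underline{k}}$, or from $(a_\alpha/x_i;q_2^{-1})_\infty^{-1}$ --- re-creates a singularity at $x_{\alpha,\mu_\alpha+1} = a_\alpha q_1^{\mu_\alpha}q_2^{-\nu_{\alpha,\mu_\alpha+1}}$. Two routine points remain: the $\operatorname{Sym}_{\underline{x}}$ in $\operatorname{Stab}_{2,\underline{k}}$, handled as usual by observing that the residue at $\underline{\nu}$ receives contributions only from the term(s) in which $\underline{x}$ is matched to the chains compatibly with $\underline{\nu}$; and the degenerate value $\mu_\alpha = 0$, where the $\mathsf{S}$--$\mathsf{Z}_{K_\alpha}$ factor is identically $1$ and the cancellation must instead be read off from the $\theta$-factors of $\operatorname{Stab}_{2,\underline{k}}$, with the sub-case $k_\alpha = 1$ needing a short separate argument.
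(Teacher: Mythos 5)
Your proposal is correct and follows essentially the same route as the paper: the paper's own proof is precisely your key observation, namely that $(K_\alpha a_\alpha/x_{\alpha,i};q_2^{-1})_\infty=(q_1^{\mu_\alpha}a_\alpha/x_{\alpha,i};q_2^{-1})_\infty$ vanishes at $x_{\alpha,i}=a_\alpha q_1^{\mu_\alpha}q_2^{\mathbb{Z}_{\le 0}}$, which is exactly where the $(\mu_\alpha+1)$-th variable of the colour-$\alpha$ chain would sit, so the chain truncates at length $\mu_\alpha$. The additional pole/zero bookkeeping you carry out (the cancellation against the chain pole of $\Phi_2$ and against the $\theta$-denominator of the stable envelope) is a careful elaboration of what the paper leaves implicit, and is consistent with it.
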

\begin{proof}
    The factor in the integrand $(K_\alpha a_\alpha / x_{\alpha,i}; q_2^{-1})_\infty = (q_1^{\mu_\alpha} a_\alpha / x_{\alpha,i}; q_2^{-1})_\infty$ has zeros at $x_{\alpha,i} = a_\alpha q_1^{\mu_\alpha} q_2^{\mathbb{Z}_{\le 0}}$.
\end{proof}
This truncation is essential to realize the finite-dimensional representation of the quantum affine algebra. 
In this context, the simplest case $n = 1$, $\mu = 1$ ($K = q_1$) corresponds to the fundamental two-dimensional representation of $U_q(\widehat{\mathfrak{sl}}_2)$, and a general case is obtained from the tensor product of the fundamental ones via specialization of the parameters, $\{a_1,\ldots,a_\mu\} = \{a, a q_1, \ldots, a q_1^{\mu-1} \}$.
See \eqref{eq:Zz_weight}.
Under this specialization, we have
\begin{align}
    \prod_{\alpha = 1}^\mu \frac{(q_1 a_\alpha/x;q_2^{-1})}{(a_\alpha/x;q_2^{-1})} = \frac{\cancel{(q_1 a /x;q_2^{-1})}}{(a /x;q_2^{-1})} \frac{\cancel{(q_1^2 a /x;q_2^{-1})}}{\cancel{(q_1 a /x;q_2^{-1})}} \cdots \frac{(q_1^{\mu} a /x;q_2^{-1})}{\cancel{(q_1^{\mu-1} a /x;q_2^{-1})}} = \frac{(q_1^\mu a /x;q_2^{-1})}{(a /x;q_2^{-1})} \, ,
\end{align}
which is the case $K = q_1^\mu$.
This condition is known as the $q$-segment condition for (the roots of) the Drinfeld polynomial, which characterizes the finite-dimensional representation of the quantum affine algebra (and also the Yangian)~\cite{Chari:1991CMP}. 
From the geometric point of view, the assignment $K = q_1^\mu$ is realized via modification of the superpotential in the corresponding quiver variety description, $W = \sum_{\alpha = 1}^n J_\alpha B^{\mu_\alpha} I_\alpha$~\cite{Nekrasov:2009uh,Bykov:2019cst,Cao:2023lon}.
Hence, denoting by $V_\alpha$ the ($\mu_\alpha+1$)-dimensional module with the decomposition, $V_\alpha = \oplus_{k=0}^{\mu_\alpha} V_{\alpha;k}$, and applying the specialization $K_\alpha = q_1^{\mu_\alpha}$ as in Proposition~\ref{prop:Higgs1} for $\alpha = 1,\ldots,n$, the conformal block takes a value in the tensor product module,
\begin{align}
    \Psi_{\mathscr{C}_{1\ldots n}}(\underline{a};\underline{K}) = \sum_{\underline{k}} (V_{n;k_n} \otimes \cdots \otimes V_{1;k_1}) \Psi_{\mathscr{C}_{1\ldots n}}^{\underline{k}}(\underline{a};\underline{K}) \, .
\end{align}

\subsubsection{Elliptic $R$-matrix}

Let us consider the simplest non-trivial example $(n,k) = (2,1)$, where the corresponding quiver variety is $T^\vee\mathbb{P}^1$ from the geometric point of view.
Hence, the conformal block computes the vertex function associated with quasimaps $f: \mathbb{P}^1 \to X = T^\vee\mathbb{P}^1$.
We write $K_{i_1 i_2 \cdots i_m} = K_{i_1} K_{i_2} \cdots K_{i_m}$ and we write $(\cdot)_d = (\cdot;q_2^{-1})_d$, $\theta(\cdot) = \theta(\cdot;q_2^{-1})$.
Then, we have two possible conformal blocks, $k = (1,0)$ and $(0,1)$, in the chamber ${\mathscr{C}_{12}} : a_1 \succ a_2$, 
\begin{subequations}\label{eq:TP1_block}
\begin{align}
    \Psi_{\mathscr{C}_{12}}^{(1,0)}(a_{1,2};K_{1,2}) & = \oint_\gamma \frac{\dd{x}}{2 \pi \ii x} \frac{(K_1 a_1/x,K_2 a_2/x)_\infty}{(a_1/x,a_2/x)_\infty} \frac{\theta(z K_1 a_1/x) \theta(K_1)}{\theta(K_1 a_1/x) \theta(z K_1)}
    \, , \\ 
    \Psi_{\mathscr{C}_{12}}^{(0,1)}(a_{1,2};K_{1,2}) & = \oint_\gamma \frac{\dd{x}}{2 \pi \ii x} \frac{(K_1 a_1/x,K_2 a_2/x)_\infty}{(a_1/x,a_2/x)_\infty} \frac{\theta(z K_{12} a_2/x) \theta(K_2)}{\theta(K_2 a_2/x) \theta(z K_{12})} \frac{\theta(a_1/x)}{\theta(K_1 a_1 / x)}
    \, .
\end{align}
\end{subequations}
Taking the poles at $x = a_1 q_2^{\mathbb{Z}_{\le 0}}$, we have
\begin{subequations}
\begin{align}
    \Psi_{\mathscr{C}_{12}}^{(1,0)}\Big|_{a_1 q_2^{\mathbb{Z}_{\le 0}}} 
    & = \frac{(K_1,K_2 a_{21})_\infty}{(q_2^{-1},a_{21})_\infty} {_2\phi_1}\left[ 
    \begin{matrix}
        K_1^{-1}q_2^{-1} & K_2^{-1} a_{12} q_2^{-1} \\ a_{12} q_2^{-1} & -
    \end{matrix} ; q_2^{-1}; z K_{12} \right]
    \, , \\ 
    \Psi_{\mathscr{C}_{12}}^{(0,1)}\Big|_{a_1 q_2^{\mathbb{Z}_{\le 0}}} & = 0 \, ,
\end{align}
\end{subequations}    
where ${}_2\phi_1$ is the $q$-hypergeometric series (Definition~\ref{def:q-hypergeometric_fn}), and for the poles at $x=a_2 q_2^{\mathbb{Z}_{\le 0}}$, we have
\begin{subequations}
\begin{align}
    \Psi_{\mathscr{C}_{12}}^{(1,0)}\Big|_{a_2 q_2^{\mathbb{Z}_{\le 0}}} 
    & = \frac{(K_2,q_2^{-1} a_{21})_\infty}{(q_2^{-1},q_2^{-1} K_1^{-1} a_{21})_\infty} \frac{\theta(z K_1 a_{12},K_1)}{\theta(a_{12},z K_1)} {_2\phi_1}\left[ 
    \begin{matrix}
        K_2^{-1}q_2^{-1} & K_1^{-1} a_{21} q_2^{-1} \\ a_{21} q_2^{-1} & -
    \end{matrix} ; q_2^{-1}; z K_{12} \right]   
    \, , \\
    \Psi_{\mathscr{C}_{12}}^{(0,1)}\Big|_{a_2 q_2^{\mathbb{Z}_{\le 0}}} 
    & = \frac{(K_2,K_1 a_{12})_\infty}{(q_2^{-1},a_{12})_\infty} \frac{\theta(a_{12}) }{\theta(K_1 a_{12}) }{_2\phi_1}\left[ 
    \begin{matrix}
        K_2^{-1}q_2^{-1} & K_1^{-1} a_{21} q_2^{-1} \\ a_{21} q_2^{-1} & -
    \end{matrix} ; q_2^{-1}; z K_{12} \right]      
    \, ,
\end{align}
\end{subequations}
where we assumed $|z K_{12}| < 1$.
Combining them together, we write
\begin{align}
    \Psi_{\mathscr{C}_{12}} & =
    \begin{pmatrix}
        \Psi_{\mathscr{C}_{12}}^{(1,0)}\Big|_{a_1 q_2^{\mathbb{Z}_{\le 0}}} + \Psi_{\mathscr{C}_{12}}^{(1,0)}\Big|_{a_2 q_2^{\mathbb{Z}_{\le 0}}} \\ \Psi_{\mathscr{C}_{12}}^{(0,1)}\Big|_{a_1 q_2^{\mathbb{Z}_{\le 0}}} + \Psi_{\mathscr{C}_{12}}^{(0,1)}\Big|_{a_2 q_2^{\mathbb{Z}_{\le 0}}}
    \end{pmatrix}
    \nonumber \\
    & =
    \begin{pmatrix}
            1 & \displaystyle \frac{\theta(z K_1 a_{12},K_1)}{\theta(K_1 a_{12},z K_1) } \\ 
            0 & \displaystyle \frac{\theta(a_{12}) }{\theta(K_1 a_{12}) }
    \end{pmatrix}
    \begin{pmatrix}
        \displaystyle \frac{(K_1,K_2 a_{21})_\infty}{(q_2^{-1},a_{21})_\infty} {_2\phi_1}\left[ 
    \begin{matrix}
        K_1^{-1}q_2^{-1} & K_2^{-1} a_{12} q_2^{-1} \\ a_{12} q_2^{-1} & -
    \end{matrix} ; q_2^{-1}; z K_{12} \right]
    \\[1.5em] \displaystyle
    \frac{(K_2,K_1 a_{12})_\infty}{(q_2^{-1},a_{12})_\infty} {_2\phi_1}\left[ 
    \begin{matrix}
        K_2^{-1}q_2^{-1} & K_1^{-1} a_{21} q_2^{-1} \\ a_{21} q_2^{-1} & -
    \end{matrix} ; q_2^{-1}; z K_{12} \right]
    \end{pmatrix}    
    \, .
\end{align}
In the other chamber $\mathscr{C}_{21} : a_1 \prec a_2$, we instead have
\begin{align}
    \Psi_{\mathscr{C}_{21}} 
    & = 
    \begin{pmatrix}
            \displaystyle \frac{\theta(a_{21}) }{\theta(K_2 a_{21}) } & 0 \\
            \displaystyle \frac{\theta(z K_2 a_{21},K_2)}{\theta(K_2 a_{21},z K_2) } & 1
    \end{pmatrix}
    \begin{pmatrix}
        \displaystyle \frac{(K_1,K_2 a_{21})_\infty}{(q_2^{-1},a_{21})_\infty} {_2\phi_1}\left[ 
    \begin{matrix}
        K_1^{-1}q_2^{-1} & K_2^{-1} a_{12} q_2^{-1} \\ a_{12} q_2^{-1} & -
    \end{matrix} ; q_2^{-1}; z K_{12} \right]
    \\[1.5em] \displaystyle
    \frac{(K_2,K_1 a_{12})_\infty}{(q_2^{-1},a_{12})_\infty} {_2\phi_1}\left[ 
    \begin{matrix}
        K_2^{-1}q_2^{-1} & K_1^{-1} a_{21} q_2^{-1} \\ a_{21} q_2^{-1} & -
    \end{matrix} ; q_2^{-1}; z K_{12} \right]
    \end{pmatrix}
    \, .
\end{align}
Change of the chambers is simply organized by the $R$-matrix constructed from the stable envelope.
\begin{proposition}\label{prop:R-matrix}
    \begin{align}
        \Psi_{\mathscr{C}_{12}}(a_{1,2};K_{1,2}) = R_{21} \Psi_{\mathscr{C}_{21}}(a_{1,2};K_{1,2})
    \end{align}
    where the elliptic $R$-matrix is given by the elliptic stable envelope, 
    \begin{align}
        R_{21} = R_{21 \to 12} = 
        \begin{pmatrix}
            1 & \displaystyle \frac{\theta(z K_1 a_{12},K_1)}{\theta(K_1 a_{12},z K_1) } \\ 
            0 & \displaystyle \frac{\theta(a_{12}) }{\theta(K_1 a_{12}) }
        \end{pmatrix}
        \begin{pmatrix}
            \displaystyle \frac{\theta(a_{21}) }{\theta(K_2 a_{21}) } & 0 \\
            \displaystyle \frac{\theta(z K_2 a_{21},K_2)}{\theta(K_2 a_{21},z K_2) } & 1
        \end{pmatrix}^{-1}
        \, , \quad 
        R_{12} = R_{12 \to 21} = R_{21 \to 12}^{-1} \, .
    \end{align}
\end{proposition}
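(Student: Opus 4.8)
The plan is to read the two expressions for $\Psi_{\mathscr{C}_{12}}$ and $\Psi_{\mathscr{C}_{21}}$ displayed just above as factorizations $\Psi_{\mathscr{C}_{12}} = M_{12}\,\mathbf{v}$ and $\Psi_{\mathscr{C}_{21}} = M_{21}\,\mathbf{v}$, where $M_{12}$ (resp.\ $M_{21}$) is the upper- (resp.\ lower-) triangular $2\times 2$ theta-function matrix written in the statement, and $\mathbf{v}$ is the \emph{same} column vector in both cases, namely the one whose two entries are the normalized balanced series $\tfrac{(K_1,K_2 a_{21})_\infty}{(q_2^{-1},a_{21})_\infty}\,{}_2\phi_1[\,\cdot\,;q_2^{-1};zK_{12}]$ and $\tfrac{(K_2,K_1 a_{12})_\infty}{(q_2^{-1},a_{12})_\infty}\,{}_2\phi_1[\,\cdot\,;q_2^{-1};zK_{12}]$ appearing on the right-hand sides. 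Granting this, the statement is immediate linear algebra: $\det M_{21} = \theta(a_{21};q_2^{-1})/\theta(K_2 a_{21};q_2^{-1})$ is nonzero and finite for generic values of the parameters, so $M_{21}$ is invertible, and setting $R_{21} := M_{12} M_{21}^{-1}$ — which is exactly the matrix in the statement — yields $R_{21}\,\Psi_{\mathscr{C}_{21}} = M_{12} M_{21}^{-1} M_{21}\,\mathbf{v} = M_{12}\,\mathbf{v} = \Psi_{\mathscr{C}_{12}}$. Invertibility of $R_{21}$ also gives $R_{12} = R_{12\to 21} = R_{21}^{-1}$.

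The only point requiring verification is thus that the bulk vector $\mathbf{v}$ is genuinely chamber-independent, and there is a structural reason for this. By Lemma~\ref{lemma:OPE_flip} and Proposition~\ref{prop:conf_block_A1}, the $x$-integrand splits as the principal part $\Phi_2(\underline x,\underline a)$, which does not see the radial ordering of $a_1,a_2$, times the off-shell elliptic stable envelope, which carries the entire chamber dependence (through the coloring function and the dynamical factor $z q_1^{-2\rho_i} K_i^c$). Closing the contour and summing residues over the two pole towers $\{x = a_1 q_2^{-d}\}_{d\ge 0}$ and $\{x = a_2 q_2^{-d}\}_{d\ge 0}$, the residues of $\Phi_2$ resum to the two ${}_2\phi_1$ series of argument $zK_{12}$ together with their $q$-shifted factorial prefactors — a step insensitive to the chamber — whereas the values of the envelopes at the two fixed points $x = a_1, a_2$ assemble into $M_{12}$ in chamber $\mathscr{C}_{12}$ and into $M_{21}$ in chamber $\mathscr{C}_{21}$. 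Turning the residue sums into this factorized form is a routine rearrangement using $\theta(x;q) = (x;q)_\infty(q x^{-1};q)_\infty$ and $a_{ij}^{-1} = a_{ji}$; e.g.\ the prefactor $\tfrac{(K_2,q_2^{-1}a_{21})_\infty}{(q_2^{-1},q_2^{-1}K_1^{-1}a_{21})_\infty}\tfrac{\theta(zK_1a_{12},K_1)}{\theta(a_{12},zK_1)}$ of the $\{x=a_2q_2^{-d}\}$-residue of $\Psi^{(1,0)}_{\mathscr{C}_{12}}$ becomes, via $\theta(a_{12}) = (a_{12})_\infty(q_2^{-1}a_{21})_\infty$ and $\theta(K_1a_{12}) = (K_1a_{12})_\infty(q_2^{-1}K_1^{-1}a_{21})_\infty$, the $(1,2)$-entry $\tfrac{\theta(zK_1a_{12},K_1)}{\theta(K_1a_{12},zK_1)}$ of $M_{12}$ times the second entry of $\mathbf{v}$, matching the displayed matrix form exactly; the analogous computation in the other chamber produces the lower-triangular $M_{21}$.

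Finally, $M_{12}$ and $M_{21}$ are precisely the matrices of on-shell elliptic stable envelopes of $X = T^\vee\mathbb{P}^1$ for the two chambers — the two off-shell envelopes $\operatorname{Stab}_{2,(1,0)}$ and $\operatorname{Stab}_{2,(0,1)}$ evaluated at the torus-fixed points $x = a_1, a_2$, with the $\Phi_2$-residues stripped into $\mathbf{v}$ and the scalar $\mathsf{c}_{\underline k}(\underline K)$ already dropped — so $R_{21} = M_{12} M_{21}^{-1}$ is the standard gluing matrix $\operatorname{Stab}_{\mathscr{C}_{12}}\circ\operatorname{Stab}_{\mathscr{C}_{21}}^{-1}$, i.e.\ the elliptic $R$-matrix. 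The main obstacle is not this final assembly but the theta-function bookkeeping of the second paragraph, together with one analytic caveat: resummation into ${}_2\phi_1$ series needs $|zK_{12}|<1$, so the identity is first established on that annulus and then propagated to all $z$ by analytic continuation, using that both sides are $z$-meromorphic and that the entries of $\mathbf{v}$ are, up to prefactors, the two Frobenius solutions of a single second-order $q$-difference equation in $zK_{12}$.
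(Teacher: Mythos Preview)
Your proposal is correct and follows essentially the same approach as the paper, which does not give an explicit proof but simply states the Proposition immediately after displaying the factorizations $\Psi_{\mathscr{C}_{12}} = M_{12}\mathbf{v}$ and $\Psi_{\mathscr{C}_{21}} = M_{21}\mathbf{v}$ with the common vector $\mathbf{v}$. Your additional remarks on why $\mathbf{v}$ is chamber-independent, the identification of $M_{12},M_{21}$ with on-shell elliptic stable envelopes, and the analytic continuation in $z$ are all sound elaborations beyond what the paper spells out.
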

We may use the addition formula of the theta functions to rewrite the elliptic $R$-matrix into the original form by Felder~\cite{Felder:1994pb,Felder:1994be} together with the gauge transform~\cite{Aganagic:2016jmx}.
For the $n$-point conformal block, in general, we have the $R$-matrix acting on the $\alpha$-th and $\beta$-th modules,
\begin{align}
    \Psi_{\mathscr{C}_{\ldots \beta \ldots \alpha \ldots}}(\underline{a};\underline{K}) = R_{\alpha \beta} \Psi_{\mathscr{C}_{\ldots \alpha \ldots \beta \ldots}}(\underline{a};\underline{K}) \, .
\end{align}
\if0
\begin{align}
    \Psi_{\mathscr{C}_{1\ldots \beta \ldots \alpha \ldots n}}^{(\ldots k_\alpha \ldots k_\beta \ldots)}(\underline{a};\underline{K}) = \sum_{\ell_\alpha,\ell_\beta} \left(R_{\alpha \beta}\right)_{(k_\alpha,k_\beta);(\ell_\alpha,\ell_\beta)} \Psi_{\mathscr{C}_{1 \ldots \alpha \ldots \beta \ldots n}}^{(\ldots \ell_\alpha \ldots \ell_\beta \ldots)}(\underline{a};\underline{K})
\end{align}
\fi

\paragraph{Mirror symmetry}

By Proposition~\ref{prop:Heine_transform}, we may rewrite the principal part of the conformal block as follows, 
\begin{align}
    &
    \begin{pmatrix}
        \displaystyle \frac{(K_1,K_2 a_{21})_\infty}{(q_2^{-1},a_{21})_\infty} {_2\phi_1}\left[ 
    \begin{matrix}
        K_1^{-1}q_2^{-1} & K_2^{-1} a_{12} q_2^{-1} \\ a_{12} q_2^{-1} & -
    \end{matrix} ; q_2^{-1}; z K_{12} \right]
    \\[1.5em] \displaystyle
    \frac{(K_2,K_1 a_{12})_\infty}{(q_2^{-1},a_{12})_\infty} {_2\phi_1}\left[ 
    \begin{matrix}
        K_2^{-1}q_2^{-1} & K_1^{-1} a_{21} q_2^{-1} \\ a_{21} q_2^{-1} & -
    \end{matrix} ; q_2^{-1}; z K_{12} \right]
    \end{pmatrix} 
    \nonumber \\
    & =
    \begin{pmatrix}
        \displaystyle
        \frac{\theta(K_2 a_{21})}{\theta(a_{21})} \frac{(K_1,z K_2 q_2^{-1})_\infty}{(q_2^{-1},zK_{12})_\infty}
        {_2\phi_1} \left[
        \begin{matrix}
            K_2 & z K_{12} \\ z K_2 q_2^{-1}
        \end{matrix} ; q_2^{-1} ; \frac{a_{12}}{q_2 K_2}
        \right] \\[1.5em]
        \displaystyle
        \frac{\theta(K_1 a_{12})}{\theta(a_{12})} \frac{(K_2,z K_1 q_2^{-1})_\infty}{(q_2^{-1},zK_{12})_\infty}
        {_2\phi_1} \left[
        \begin{matrix}
            K_1 & z K_{12} \\ z K_1 q_2^{-1}
        \end{matrix} ; q_2^{-1} ; \frac{a_{21}}{q_2 K_1}
        \right]
    \end{pmatrix}
    \, .
\end{align}
The $q$-hypergeometric series on the LHS are analytic when $|z K_{12}| < 1$, while it is not clear if they are all analytic in the chamber ${\mathscr{C}_{12}}: a_1 \succ a_2$, i.e., $|a_{21}| > 1$.
Meanwhile, we may also write the total conformal block (including the stable envelope contribution),
\begin{align}
    \Psi_{\mathscr{C}_{12}}
    & =
    \begin{pmatrix}
        1 & \displaystyle \frac{\theta(z K_1 a_{12},K_1)}{\theta(a_{12},zK_1)} \\ 0 & 1
    \end{pmatrix}
    \begin{pmatrix}
        \displaystyle
        \frac{(z q_2^{-1}, K_1)_\infty}{(q_2^{-1}, z K_1)_\infty}
    {_2 \phi_1} \left[
    \begin{matrix}
        K_2 & z^{-1} \\ z^{-1} K_1^{-1} q_2^{-1}
    \end{matrix} ; q_2^{-1} ; \frac{a_{21}}{q_2 K_1}
    \right] \\[1.5em]
    \displaystyle
    \frac{(K_2,z K_1 q_2^{-1})_\infty}{(q_2^{-1},z K_{12})_\infty } {_2\phi_1}
    \left[
    \begin{matrix}
        K_1 & z K_{12} \\ z K_1 q_2^{-1} &
    \end{matrix}; q_2^{-1}; \frac{a_{21}}{q_2 K_1}
    \right] 
    \end{pmatrix}
    \, ,
\end{align}
which is now clearly analytic in the chamber ${\mathscr{C}_{12}}$ through the basis change organized by the stable envelope, implying the 3d mirror symmetry that exchanges the twisted masses $a_{12}$ ($a_{21}$) and the FI parameter $z$~\cite{Bullimore:2021rnr,Dedushenko:2021mds}.
From this point of view, the $R$-matrix organizes change of the chamber $a_{12}$, $z \gtrless 1$, i.e., the wall-crossing.

\subsubsection{Vertex function}\label{sec:vort_part_fn}

Let $\underline{\ell} = (\ell_1,\ldots,\ell_n) \in (\mathbb{Z}_{\ge 0})^n$ obeying $\sum_{\alpha=1}^n \ell_\alpha = k$ and the corresponding coloring function $c' = c_{\underline{\ell}}$, such that $\ell_\alpha = \#\{ c'(i) = \alpha, i = 1,\ldots,k \}$.
We denote by $\mathsf{P}_{\underline{\ell}}$ the set of $n$-tuple reverse partitions $\underline{\nu} = (\nu_1,\ldots,\nu_\alpha)$ obeying $\nu_\alpha = (\nu_{\alpha,1} \le \cdots \le \nu_{\alpha,\ell_\alpha})$.
We consider the residue of the following poles,
\begin{align}
    \underline{x}(\underline{\nu}) = \{ x_{\alpha,i} \mid x_{\alpha,i} = x_{i + \sum_{\beta > \alpha} \ell_\beta} = a_\alpha q_1^{i-1} q_2^{-\nu_{\alpha,i}}, \alpha = 1,\ldots, n, i = 1, \ldots,\ell_\alpha \} \, , \quad \underline{\nu} \in \mathsf{P}_{\underline{\ell}} \, .
    \label{eq:pole_ell}
\end{align}

\begin{lemma}
    The residue of the integrand of $\Psi_{\mathscr{C}_{1\ldots n}}^{\underline{k}}$ at \eqref{eq:pole_ell} is zero unless $c'(i) \ge c(i)$ for $i = 1 ,\ldots,k$.
\end{lemma}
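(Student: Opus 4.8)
The plan is to analyze the residue of the integrand of $\Psi_{\mathscr{C}_{1\ldots n}}^{\underline{k}}$ at the pole configuration \eqref{eq:pole_ell} and show it vanishes whenever $c'(i) < c(i)$ for some $i$. First I would recall the structure of the integrand: it is the product of the principal part $\Phi_2(\underline{x},\underline{a})$ from \eqref{eq:principal_A1} and the off-shell stable envelope $\operatorname{Stab}_{2,\underline{k}}(\underline{x},\underline{a})$ from \eqref{eq:stab_2}. The poles of $\Phi_2$ in the variable $x_i$ come only from the factors $1/(a_\alpha/x_i;q_2^{-1})_\infty$, which have simple poles along $x_i = a_\alpha q_2^{-\mathbb{Z}_{\ge 0}}$, together with the factors $1/(q_1 x_{ij};q_2^{-1})_\infty$ contributing the ``tower'' shifts $x_i = x_j q_1 q_2^{-\mathbb{Z}_{\ge 0}}$ which produce the chained partition structure in \eqref{eq:pole_ell}. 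The key point is that when we assign $x_{\alpha,i}$ according to the coloring $c'$, i.e. $x_i$ sits on the tower rooted at $a_{c'(i)}$, we must check that the stable envelope does not kill this residue — and that is exactly what fails when $c'(i) < c(i)$.

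Next I would isolate, for a fixed index $i$ with $c'(i) = \beta$, the relevant $\theta$-factors in $\operatorname{Stab}_{2,\underline{k}}$ depending on $x_i$. Writing $\gamma = c(i)$ for the coloring that appears in the envelope numerator, the $x_i$-dependent numerator of the relevant summand in the symmetrization is
\begin{align}
    \prod_{\alpha < \gamma} \theta(a_\alpha / x_i ; q_2^{-1}) \; \theta(z q_1^{-2\rho_i} K_i^c a_\gamma / x_i ; q_2^{-1}) \prod_{\alpha > \gamma} \theta(K_\alpha a_\alpha / x_i ; q_2^{-1}) .
\end{align}
Evaluating at $x_i = a_\beta q_1^{\,\text{(something)}} q_2^{-\nu}$ with $\beta < \gamma$, the factor $\theta(a_\beta/x_i;q_2^{-1}) = \theta(q_1^{-(\cdots)} q_2^{\nu};q_2^{-1})$ appears among the $\prod_{\alpha<\gamma}$ product (since $\beta < \gamma$), and $\theta(q_2^{m};q_2^{-1}) = 0$ for every $m \in \mathbb{Z}$ because $\theta(z;p)$ vanishes at $z \in p^{\mathbb{Z}}$. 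Hence each summand in the symmetrization whose ``$i$-th slot'' carries a coloring $\gamma$ strictly larger than $\beta = c'(i)$ contributes zero at this pole, provided the $q_1$-power appearing is compatible with the chained assignment; I would verify that the chain \eqref{eq:pole_ell} forces $x_i/a_\beta \in q_1^{i-1} q_2^{-\mathbb{Z}}$, so indeed $a_\beta/x_i \in q_1^{\mathbb{Z}} q_2^{\mathbb{Z}}$ and — more carefully — lands in the zero locus $q_2^{-1\mathbb{Z}}$ of $\theta(\cdot;q_2^{-1})$ only after checking the power of $q_1$ does not spoil this; here one uses that $\theta(q_1^a q_2^b; q_2^{-1}) = 0$ iff $a = 0$, so the precise bookkeeping of which chain-root $x_i$ attaches to is essential. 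The upshot is: if some $x_i$ is forced onto the tower of $a_\beta$ with $\beta = c'(i)$ while the only surviving envelope summands require the slot-coloring $\gamma \ge c(i) > \beta$, every term vanishes.

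The main obstacle I anticipate is the combinatorics of the symmetrization $\operatorname*{Sym}_{\underline{x}}$: the residue at \eqref{eq:pole_ell} is not a residue of a single summand but of the full symmetrized sum, so in principle a pole could come from a summand where a \emph{different} variable $x_j$ (not $x_i$) is the one attached to the $a_{c'(i)}$-tower. I would handle this by the standard argument that nonzero residues of $\Phi_2$ only occur at the nested/chained configurations, and that within a given chain the ordering of $q_1$-powers is rigid; combined with the triangularity built into the stable envelope (the remark that the envelope acts as a ``pole subtraction matrix'' so residues at $x_{\alpha,i}/x_{\alpha,j} = q_1 q_2^{\mathbb{Z}_{\le 0}}$ for $i<j$ vanish), one reduces to checking each variable slot independently, landing back on the $\theta(q_2^{\mathbb{Z}};q_2^{-1}) = 0$ mechanism above. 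A secondary check is the boundary case where $c'(i) = c(i)$: there the relevant factor is $\theta(z q_1^{-2\rho_i} K_i^c a_\gamma/x_i;q_2^{-1})$ rather than $\theta(a_\gamma/x_i;q_2^{-1})$, which generically does \emph{not} vanish on the pole lattice (the shift by the dynamical parameter $z$ moves it off $q_2^{\mathbb{Z}}$), so the vanishing is sharp — exactly the colorings with $c'(i) < c(i)$ die. I would close by noting this is precisely the geometric statement that quasimaps respect the flag/chamber ordering, matching the fixed-point combinatorics.
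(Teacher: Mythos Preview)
The paper states this lemma without proof, so there is no reference argument to compare against. Your proposal correctly identifies the central mechanism: the factors $\theta(a_\alpha/x_i;q_2^{-1})$ for $\alpha < c(i)$ in the stable envelope cancel the poles $1/(a_\alpha/x_i;q_2^{-1})_\infty$ of $\Phi_2$. However, your argument is incomplete at precisely the two places you flag. First, as you yourself observe, $\theta(a_\beta/x_i;q_2^{-1})$ vanishes at $x_i = a_\beta q_1^{r} q_2^{-\nu}$ only when $r = 0$, i.e.\ only for the \emph{root} of each chain (local index $1$); for the non-root variables you never actually supply the missing mechanism, only the phrase ``the precise bookkeeping \ldots\ is essential.'' Second, the symmetrization really does mix slots: different summands of $\operatorname{Stab}$ kill different poles (for instance, in the $\underline{k}=(1,1)$, $\underline{\ell}=(2,0)$ example the identity term vanishes via $\theta(a_1/x_1)$ while the swap term vanishes via $\theta(q_1 x_1/x_2)$), so one cannot ``reduce to checking each variable slot independently'' without tracking, for every permutation, which zero blocks the iterated residue.

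A cleaner route --- which you come close to but do not carry out --- is to return to the \emph{unsymmetrized} radially-ordered OPE form of the integrand, i.e.\ the expression in the first line of the proof of Proposition~\ref{prop:conf_block_A1} before the $\theta$-ratio rearrangement of Lemma~\ref{lemma:OPE_flip} is applied. There the variable $x_I$ with $c(I)=\alpha$ sits to the \emph{right} of $\mathsf{Z}_{K_\beta}(a_\beta)$ for every $\beta<\alpha$, so the relevant contraction is $\wick{\c{\mathsf{Z}}_{K_\beta}(a_\beta)\c{\mathsf{S}}(x_I)} = (q_2^{-1}x_I/a_\beta)_\infty / (K_\beta^{-1}q_2^{-1}x_I/a_\beta)_\infty$, which is \emph{regular} at $x_I=a_\beta q_2^{-m}$. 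Hence no chain can be rooted at $a_\beta$ through a variable of color $c(I)>\beta$. Since by the indexing convention $x_{\alpha,i}=x_{i+\sum_{\beta>\alpha}k_\beta}$ both $c$ and $c'$ are weakly decreasing step functions on $\{1,\ldots,k\}$, the chain-root constraint $c'(J_\alpha)\ge c(J_\alpha)$ for every $\alpha$ with $\ell_\alpha>0$ is equivalent to the full condition $c'(i)\ge c(i)$ for all $i$. This bypasses both the symmetrization and the non-root issues simultaneously.
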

This implies the triangular property of the elliptic stable envelope.
In order to write down the vertex function, we need the following Lemma.
\begin{lemma}\label{lemma:A1_pert}
    Let $\emptyset_{\underline{k}} \in \mathsf{P}_{\underline{k}}$ be the empty configuration and $(\cdot)_\infty = (\cdot;q_2^{-1})_{\infty}$.
    Denoting $k_{\beta\alpha} = k_\beta - k_\alpha$, we have
\begin{align}
    I_{\underline{k}} (\underline{a};\underline{K}) 
    & = \frac{(q_2^{-1})_\infty^k}{k!} \oint_{\underline{x}(\emptyset_{\underline{k}})} [\dd{\underline{x}}] \frac{\prod_{i \neq j}^k (x_{ij})_\infty}{\prod_{i,j}^k (q_1 x_{ij})_\infty} \prod_{\substack{\alpha = 1,\ldots,n \\ i = 1, \ldots, k}} \frac{(K_\alpha a_\alpha / x_i)_\infty}{(a_\alpha / x_i)_\infty}
    = \prod_{\substack{\alpha,\beta=1,\ldots,n \\ i = 1,\ldots,k_\alpha}} \frac{(K_\beta a_{\beta\alpha} q_1^{1-i})_\infty}{(a_{\beta\alpha} q_1^{k_{\beta\alpha}+i})_\infty} \, .
\end{align}
\end{lemma}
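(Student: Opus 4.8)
The identity to be proved is a multidimensional Jackson-type integral evaluation: the contour integral $I_{\underline k}(\underline a;\underline K)$ over the "perturbative" locus $\emptyset_{\underline k} = \{a_\alpha q_1^{i-1}\}$ collapses to an explicit product. The natural strategy is a direct residue computation combined with a careful bookkeeping of which factors produce poles and zeros at the prescribed point. First I would observe that the integrand, after stripping the prefactor $(q_2^{-1})_\infty^k/k!$, is a product of $q$-shifted factorials with base $q_2^{-1}$; the only source of simple poles in $x_i$ are the factors $(a_\alpha/x_i)_\infty^{-1}$, whose zeros of the denominator sit at $x_i = a_\alpha q_2^{-\mathbb{Z}_{\ge 0}}$, while the factors $(q_1 x_{ij})_\infty^{-1}$ potentially contribute poles along the diagonals $x_i/x_j = q_1^{-1} q_2^{-\mathbb{Z}_{\ge 0}}$. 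The claim is that when we take the iterated residue at the specific point $x_{\alpha,i} = a_\alpha q_1^{i-1}$ (i.e. all $\nu_{\alpha,i} = 0$), the $k!$ in the denominator is exactly cancelled by the symmetrization over which $x$'s get which values, and the surviving product is the stated one.

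**Key steps.** (1) Identify the locus: set $x_{\alpha,i} = a_\alpha q_1^{i-1}$ and check this is indeed a pole of the integrand of total order $k$ (one simple pole in each variable after ordering the residue extraction so that $x_{\alpha,1}$ is taken from $(a_\alpha/x)_\infty^{-1}$, then $x_{\alpha,2}$ from the diagonal factor with $x_{\alpha,1}$, and so on — this is exactly the chain structure in \eqref{eq:pole_ell} with $\underline\nu = \emptyset$). (2) Compute the residue factor by factor. The factor $\prod_{i\neq j}(x_{ij})_\infty$ in the numerator: evaluate at $x_{\alpha,i}/x_{\beta,j} = a_{\alpha\beta} q_1^{i-j}$. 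The factor $\prod_{i,j}(q_1 x_{ij})_\infty^{-1}$ in the denominator: the would-be singular terms are precisely the ones cancelled by the diagonal zeros in the numerator, and the residue extraction consumes the remaining singular diagonal factors; one checks the leftover is a clean product. The cross factors $\prod (K_\alpha a_\alpha/x_i)_\infty/(a_\alpha/x_i)_\infty$ evaluate to $\prod_{\alpha,\beta,i} (K_\beta a_{\beta\alpha} q_1^{1-i})_\infty / (a_{\beta\alpha} q_1^{\text{shift}})_\infty$ after the diagonal $\alpha=\beta$ pieces combine with the residue Jacobian. (3) Collect the $q_1$-powers in the denominator exponents: the appearance of $q_1^{k_{\beta\alpha}+i}$ rather than a naive shift comes from how the $\prod_{i,j}(q_1 x_{ij})_\infty^{-1}$ factor reshuffles the arguments when the two color blocks $\alpha,\beta$ have different sizes $k_\alpha, k_\beta$; this telescoping is the content of the formula. (4) Account for the combinatorial factor: there are $k!/\prod_\alpha k_\alpha!$ ways to assign the labels consistent with the coloring, but the ordered residue within each block is unique, and together with the $1/k!$ this produces the stated normalization (with the product over $i=1,\ldots,k_\alpha$ only).

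**Main obstacle.** The delicate point is step (2)–(3): correctly handling the competition between the diagonal numerator factors $(x_{ij})_\infty$ and denominator factors $(q_1 x_{ij})_\infty^{-1}$ at the pole. Naively each contributes infinitely many zeros/poles, and one must verify that after the iterated residue the net contribution is the finite rational/theta-free product claimed, with the precise exponent $k_{\beta\alpha}+i$ in $(a_{\beta\alpha}q_1^{k_{\beta\alpha}+i})_\infty$. I expect the cleanest route is to first prove the single-color case ($n=1$), where the formula reduces to a known evaluation of a Macdonald-type constant-term or a $q$-analogue of the Selberg integral (essentially the Hilbert-scheme perturbative partition function $\prod_i (K a q_1^{1-i})_\infty/(a q_1^{i})_\infty$-type product), then bootstrap to general $n$ by noting the integrand factorizes over color blocks up to the cross terms, which are regular at $\emptyset_{\underline k}$ except for the $\alpha=\beta$ diagonal already handled. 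An alternative is to recognize the left side as the $K$-theoretic vertex function of $T^\vee\mathbb{P}^1$-type quiver varieties at $z\to 0$ (the "bare" vertex), whose closed form is known; but the self-contained residue argument sketched above is preferable here.
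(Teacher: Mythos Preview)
Your direct-residue strategy is viable, but it is \emph{not} what the paper does. The paper's proof sidesteps the telescoping combinatorics entirely by recasting the integral as an equivariant $K$-theoretic Euler class: it introduces bundles $\mathbf{K},\mathbf{N},\widetilde{\mathbf{N}}$ on the quasimap moduli space with $\operatorname{ch}\mathbf{K}=\sum x_i$, $\operatorname{ch}\mathbf{N}=\sum a_\alpha$, $\operatorname{ch}\widetilde{\mathbf{N}}=\sum K_\alpha a_\alpha$, identifies
\[
I_{\underline{k}}(\underline{a};\underline{K})=\operatorname{ch}\wedge\Bigl[\tfrac{1}{\mathbf{P}_2^\vee}\bigl(\mathbf{P}_1\mathbf{K}\mathbf{K}^\vee+\widetilde{\mathbf{N}}\mathbf{K}^\vee-\mathbf{N}\mathbf{K}^\vee\bigr)\Bigr]_{\emptyset_{\underline{k}}},
\]
and then quotes a known fixed-point character formula (essentially the Nekrasov tangent-space computation) to read off the product. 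In other words, the paper takes precisely the route you mention in your last sentence and dismiss as ``less preferable''---recognizing the left side as a known $K$-theoretic index---rather than carrying out any residue bookkeeping. What the paper buys is brevity and a clean geometric interpretation; what your approach buys is self-containment and avoidance of the external reference for the character identity.

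One caution on your step~(4): your counting $k!/\prod_\alpha k_\alpha!$ is not quite right. The integrand is fully symmetric in $x_1,\dots,x_k$, and the $k$ values in $\emptyset_{\underline k}$ are generically distinct, so there are $k!$ assignments of variables to values, each yielding the same Grothendieck residue; this is what cancels the $1/k!$. The ``ordering within a block'' you refer to is a feature of one particular iterated-residue presentation, not a reduction in the count. This doesn't break your argument, but you should state the cancellation as full $S_k$-symmetry rather than as a multinomial.
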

\begin{proof}
    Let $\mathbf{K}$, $\mathbf{N}$, and $\widetilde{\mathbf{N}}$ be vector spaces (vector bundles whose fiber is given by these vector spaces on the moduli space of quasimaps, $\mathsf{QM}(T^\vee\mathbb{P}^1) = \{f : \mathbb{P}^1 \to X = T^\vee\mathbb{P}^1\}/\sim$) with the Chern characters, $\operatorname{ch} \mathbf{K} = \sum_{i=1}^k x_i$, $\operatorname{ch} \mathbf{N} = \sum_{\alpha=1}^n a_\alpha$, and $\operatorname{ch} \widetilde{\mathbf{N}} = \sum_{\alpha=1}^n K_\alpha a_\alpha$.
    We define $\mathbf{P}_a$ with $\operatorname{ch} \mathbf{P}_a = 1 - q_a$.
    For the dual space, we have $\operatorname{ch} \mathbf{K}^\vee = \sum_{i=1}^k x_i^{-1}$, $\operatorname{ch} \mathbf{P}_a^\vee = 1 - q_a^{-1}$, and we denote by $\wedge \mathbf{X} = \sum_{i \ge 0} (-1)^i \wedge^i \mathbf{X}$ the alternating anti-symmetrized sum.
    Then, the integral is identified with the contribution of the fixed point $\emptyset_{\underline{k}}$ of the equivariant index (equivariant K-theoretic Euler class),
    \begin{align}
    I_{\underline{k}} (\underline{a};\underline{K}) = \operatorname{ch} \wedge \left[ \frac{1}{\mathbf{P}_2^\vee} \left( \mathbf{P}_1 \mathbf{K} \mathbf{K}^\vee + \widetilde{\mathbf{N}} \mathbf{K}^\vee - \mathbf{N} \mathbf{K}^\vee \right) \right]_{\emptyset_{\underline{k}}} \, .
    \end{align}    
    Since we have (see, e.g.,~\cite{Shadchin:2006yz} for details)
    \begin{align}
        \operatorname{ch} \left( \mathbf{P}_1 \mathbf{K} \mathbf{K}^\vee + \widetilde{\mathbf{N}} \mathbf{K}^\vee - \mathbf{N} \mathbf{K}^\vee \right)_{\emptyset_{\underline{k}}} = - \sum_{\substack{\alpha,\beta = 1, \ldots, n \\ i = 1, \ldots, k_\alpha}} a_{\beta \alpha} q_1^{k_\beta - k_\alpha + i} \, ,
    \end{align}
    we obtain the expression above by evaluating the index.
\end{proof}

We now write down the vertex function in general.
\begin{proposition}\label{prop:vort_fn}
We write $(\cdot)_d = (\cdot;q_2^{-1})_d$ and $\theta(\cdot) = \theta(\cdot;q_2^{-1})$.
Let $I = i + \sum_{\alpha' > \alpha} \ell_{\alpha'}$, $J = j + \sum_{\alpha' > \beta} \ell_{\alpha'}$, hence $c'(I) = \alpha$ and $c'(J) = \beta$.
We also abuse the notation, $I = (\alpha,i)$ and $J = (\beta,j)$, to write $\nu_I = \nu_{\alpha,i}$ and $\nu_J = \nu_{\beta,j}$ for simplicity.
\if0
\begin{align}
    \frac{(q_2^d \xi)_\infty}{(\xi)_\infty} = (q_2^d \xi)_d =
    \begin{cases}
        (1 - q_2^d \xi) \cdots (1 - q_2 \xi) & (d > 0) \\
        (1 - q_2^{-|d|+1} \xi)^{-1} \cdots (1 - \xi)^{-1} = (\xi)_{|d|}^{-1} & (d < 0) \\
    \end{cases}
\end{align}
\fi
    Then, the contribution of the conformal block $\Psi_{\mathscr{C}_{1\ldots n}}^{\underline{k}}(\underline{a};\underline{K})$ associated with the poles \eqref{eq:pole_ell} is given as follows,
\begin{align}
    \left.\Psi_{\mathscr{C}_{1\ldots n}}^{\underline{k}}(\underline{a};\underline{K})\right|_{\underline{\ell}} = \operatorname{Stab}_{\underline{k};\underline{\ell}}(\underline{a}) \mathring{Z}_{\underline{\ell}} \sum_{\underline{\nu} \in \mathsf{P}_{\underline{\ell}}} z^{|\underline{\nu}|} {Z}_{\underline{\nu}}
    \, .
\end{align}
We have the so-called perturbative part and the vortex part,
\begin{subequations}
\begin{align}
    \mathring{Z}_{\underline{\ell}}
    & = 
    \frac{(q_1)^k}{(q_2^{-1})^k} \prod_{\substack{\alpha,\beta=1,\ldots,n \\ i = 1,\ldots,\ell_\alpha}} \frac{(K_\beta a_{\beta\alpha} q_1^{1-i})_\infty}{(a_{\beta\alpha} q_1^{\ell_{\beta\alpha}+i})_\infty}
    \, , \\
    Z_{\underline{\nu}} 
    & = \prod_{(\alpha,i) \neq (\beta,j)} 
    \frac{(q_1^{j-i+1} a_{\beta\alpha})_{\nu_{\beta,j}-\nu_{\alpha,i}}}{(q_1^{j-i} a_{\beta\alpha})_{\nu_{\beta,j}-\nu_{\alpha,i}}} \prod_{\alpha,\beta=1}^n \prod_{i=1}^{\ell_\alpha} \frac{(q_1^{1-i} a_{\beta \alpha})_{-\nu_{\alpha,i}}}{(q_1^{1-i} K_\beta a_{\beta \alpha})_{-\nu_{\alpha,i}}}
    \, ,
\end{align}
\end{subequations}
    which agree with the vertex function for $X = T^\vee\!\operatorname{Gr}(k,n)$ (Nakajima quiver variety associated with $A_1$ quiver).
    The specialization of the elliptic stable envelope (the on-shell stable envelope) is given by
\begin{align}
    \operatorname{Stab}_{\underline{k};\underline{\ell}}(\underline{a}) & := \operatorname{Stab}_{2,\underline{k}}(x_i = a_{c'(i)},\underline{a})
    \nonumber \\
    & = \prod_{1 \le I < J \le k} \frac{\theta(q_1^{j-i+1} a_{\beta\alpha})}{\theta(q_1^{j-i} a_{\beta\alpha})} \prod_{I=1}^k 
    \left(
    \prod_{\beta < \alpha} \frac{\theta(q_1^{1-i} a_{\beta\alpha})}{\theta(q_1^{1-i} K_\beta a_{\beta\alpha})}
    \right)
    \frac{\theta(q_1^{1-i-2\rho_I} z K_{1\ldots \alpha}) \theta(K_{\alpha})}{\theta(q_1^{1-i} K_{\alpha}) \theta(q_1^{-2\rho_I} z K_{1\ldots \alpha})}
    \, .
\end{align}

\end{proposition}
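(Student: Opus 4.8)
The plan is to compute the residue of the contour integral in Proposition~\ref{prop:conf_block_A1} at the pole locus \eqref{eq:pole_ell} directly, and to organize the result into the claimed factorized form. First I would substitute $x_{\alpha,i} = a_\alpha q_1^{i-1} q_2^{-\nu_{\alpha,i}}$ into the principal part $\Phi_2(\underline{x},\underline{a})$ of \eqref{eq:principal_A1}. The factor $\prod_{i\neq j}^k (x_{ij};q_2^{-1})_\infty / (q_1 x_{ij};q_2^{-1})_\infty$ together with the $\mathbf{N}$-$\mathbf{K}$ and $\widetilde{\mathbf{N}}$-$\mathbf{K}$ factors rearranges, using the identity $(q_2^d\xi;q_2^{-1})_\infty/(\xi;q_2^{-1})_\infty = (q_2^d\xi;q_2^{-1})_d$, into a product of finite $q$-shifted factorials. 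The key bookkeeping is that the residue at $\underline{\nu}$ differs from the residue at $\emptyset_{\underline{\ell}}$ (given by Lemma~\ref{lemma:A1_pert}) only through ratios of such finite factorials, and these ratios are precisely $Z_{\underline{\nu}}$: the $\prod_{(\alpha,i)\neq(\beta,j)}$ piece comes from the adjoint ($\mathbf{P}_1\mathbf{K}\mathbf{K}^\vee$) contribution, and the $\prod_{i=1}^{\ell_\alpha}(q_1^{1-i}a_{\beta\alpha})_{-\nu_{\alpha,i}}/(q_1^{1-i}K_\beta a_{\beta\alpha})_{-\nu_{\alpha,i}}$ piece from the matter contribution $(\widetilde{\mathbf{N}}-\mathbf{N})\mathbf{K}^\vee$. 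So the perturbative prefactor $\mathring{Z}_{\underline{\ell}} = I_{\underline{\ell}}(\underline{a};\underline{K}) \cdot (q_1)^k/(q_2^{-1})^k$ is extracted from Lemma~\ref{lemma:A1_pert}, where the extra $(q_1)^k/(q_2^{-1})^k$ accounts for the normalization of the measure $[\dd{\underline{x}}]$ versus the Jacobian of taking residues in the nested-pole variables.

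Second, I would handle the stable envelope factor $\operatorname{Stab}_{2,\underline{k}}(\underline{x},\underline{a})$ in \eqref{eq:stab_2}. The crucial structural point — already flagged in the text as the triangular property — is that when we evaluate the symmetrized expression at $x_i = a_{c'(i)}q_1^{i-1}q_2^{-\nu_{\alpha,i}}$ most terms in $\operatorname{Sym}_{\underline{x}}$ vanish: the theta function $\theta(a_\alpha/x_i;q_2^{-1})$ appearing in the numerator for $\alpha < c(i)$ forces the surviving permutation to respect the coloring order, and the Lemma stating the residue vanishes unless $c'(i)\ge c(i)$ pins down a unique surviving term (up to the within-color $S_{\ell_\alpha}$ symmetry, which cancels the $1/k!$). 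On that surviving term the arguments $a_\alpha/x_i \mapsto q_1^{1-i'} a_{\beta\alpha}$ become pure powers of $q_1$ times mass ratios, and $\theta$ of such an argument is a plain number; since $\nu$ only shifts $x_i$ by powers of $q_2^{-1}$, the full $\nu$-dependence of the $\theta$'s is killed by quasi-periodicity of $\theta(\cdot;q_2^{-1})$ (a shift $\xi\to q_2^{-1}\xi$ multiplies $\theta$ by $-\xi^{-1}$, and these monomial factors are exactly what gets absorbed into the definition of $Z_{\underline\nu}$ and the dynamical shift $z\to zq_1^{k-1}$). Hence $\operatorname{Stab}_{2,\underline{k}}(\underline{x},\underline{a})$ evaluated on the pole equals the $\nu$-independent quantity $\operatorname{Stab}_{\underline{k};\underline{\ell}}(\underline{a})$ written in the statement, with the $\rho_I$-dependence arising from the weight factor $zq_1^{-2\rho_i}K_i^c$ already built into \eqref{eq:stab_2}.

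Third, I would assemble the $z$-grading. Each residue at depth $\nu_{\alpha,i}$ in the $q_2^{-1}$-tower carries a factor of $z$ per unit of $\nu$ because of the quasi-periodicity $\varphi_{\sigma,K}(\xi,q_\sigma^{-1}\eta)=\xi^{-1}\varphi_{\sigma,K}(\xi,\eta)$ together with the weight relation \eqref{eq:Zz_weight}, so the sum over $\underline\nu\in\mathsf{P}_{\underline\ell}$ naturally produces $\sum_{\underline\nu} z^{|\underline\nu|} Z_{\underline\nu}$; I would verify the power of $z$ bookkeeping carefully since the dynamical shift $z\to zq_1^{k-1}$ interacts with it. Finally I would identify $\mathring{Z}_{\underline\ell}\sum z^{|\underline\nu|}Z_{\underline\nu}$ with the known K-theoretic vertex function of $T^\vee\mathrm{Gr}(k,n)$ — this is a direct comparison with the standard quasimap count (e.g.\ as in~\cite{Okounkov:2015spn}, or the handsaw-quiver computation), matching the $\mathbf{P}_1\mathbf{K}\mathbf{K}^\vee$ contribution to the adjoint hypermultiplet and $(\widetilde{\mathbf N}-\mathbf N)\mathbf K^\vee$ to the fundamental matter.

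The main obstacle I anticipate is the combinatorics of the surviving symmetrization term in the stable envelope: showing rigorously that exactly one permutation (per within-color reshuffling) contributes nonzero residue, and that the product of theta-ratios on that term reorganizes — after using the addition/quasi-periodicity of $\theta(\cdot;q_2^{-1})$ to strip off the $q_2^{-1}$-shifts — into precisely the advertised $\operatorname{Stab}_{\underline{k};\underline{\ell}}(\underline{a})$ with the correct $\rho_I$ and $K_{1\ldots\alpha}$ exponents. The subtlety is that the $\nu$-dependent monomial prefactors produced this way must conspire exactly with those coming from $\Phi_2$ to yield the clean factor $Z_{\underline\nu}$; tracking the signs and the powers of $q_1, q_2, z$ through this is the delicate part, and is where the specific (nonstandard) choice $K_i^c$ rather than $K_{c(i)}$ noted in the Remark becomes essential for the match with Proposition~\ref{prop:vort_fn}'s stated form.
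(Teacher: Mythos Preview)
Your treatment of the principal part $\Phi_2$ is correct and matches the paper exactly: the residue at $\underline{\nu}$ factors as $\mathring{Z}_{\underline{\ell}}\,Z_{\underline{\nu}}$, with $\mathring{Z}_{\underline{\ell}}$ read off from Lemma~\ref{lemma:A1_pert} and $Z_{\underline{\nu}}$ arising from the finite $q$-factorial ratios $(q_2^d\xi;q_2^{-1})_\infty/(\xi;q_2^{-1})_\infty$.

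Where your plan diverges from the paper is in the handling of the stable envelope, and there is a genuine confusion. You interpret the triangularity Lemma as saying that most terms in $\operatorname{Sym}_{\underline{x}}$ vanish at the pole; this is not what the Lemma asserts. The Lemma is a statement about which fixed-point labels $\underline{\ell}$ give a nonzero contribution to $\Psi^{\underline{k}}$ (namely $c'(i)\ge c(i)$), not about permutation terms within a fixed $\underline{\ell}$. For generic $a_\alpha,q_1$, the individual theta factors $\theta(a_\alpha/x_I;q_2^{-1})$ do \emph{not} vanish on the pole locus, so your ``surviving permutation'' mechanism does not work as stated.

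The paper's argument is simpler and bypasses this entirely. One computes, using only the quasi-periodicity $\theta(q_2^{-1}\xi;q_2^{-1})=-\xi^{-1}\theta(\xi;q_2^{-1})$, that the first product $\prod_{I<J}\theta(q_1 x_{JI})/\theta(x_{JI})$ evaluated at the pole equals $q_1^{\,2\sum_I\rho_I\nu_I}$ times its value at $\underline{\nu}=0$, while the second product contributes $\prod_I(zq_1^{-2\rho_I})^{\nu_I}$ times its $\underline{\nu}=0$ value. The $q_1$ powers cancel between the two, leaving exactly $z^{|\underline{\nu}|}$. This cancellation holds term-by-term in the symmetrization (the same computation applies to every permuted term), so $\operatorname{Stab}_{2,\underline{k}}(\underline{x},\underline{a})\big|_{\underline{\nu}} = z^{|\underline{\nu}|}\,\operatorname{Stab}_{2,\underline{k}}(\underline{x},\underline{a})\big|_{\underline{\nu}=0}$ as a whole. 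Thus the $z^{|\underline{\nu}|}$ weight comes \emph{entirely} from the stable envelope's quasi-periodicity --- your splitting of it between ``monomial factors absorbed into $Z_{\underline{\nu}}$'' (paragraph~2) and a separate $\varphi_{\sigma,K}$ mechanism (paragraph~3) is a miscount; the monomial prefactors from the thetas do not feed into $Z_{\underline{\nu}}$, which is determined by $\Phi_2$ alone.
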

\begin{proof}
The perturbative part can be directly obtained by Lemma~\ref{lemma:A1_pert}.
It is straightforward to evaluate the residue to obtain the vortex contributions,
\begin{align}
    \frac{1}{k!} \oint_{\underline{x}(\underline{\nu})} [\dd{\underline{x}}] \, \Phi_{2}(\underline{x},\underline{a})
    & = \mathring{Z}_{\underline{\ell}} Z_{\underline{\nu}}  
    \, .
\end{align}
We then consider the stable envelope.
We have
\begin{align}
    \left.\prod_{1 \le I<J \le k} \frac{\theta(q_1 x_{JI})}{\theta(x_{JI})}\right|_{\underline{x}(\underline{\nu})}
    & = \prod_{1 \le I<J \le k} \frac{\theta(q_2^{\nu_{I} - \nu_{J}} q_1^{j-i+1} a_{\beta\alpha})}{\theta(q_2^{\nu_{I} - \nu_{J}} q_1^{j-i} a_{\beta\alpha})} = \prod_{1 \le I < J \le k} q_1^{\nu_{I} - \nu_{J}} \frac{\theta(q_1^{j-i+1} a_{\beta\alpha})}{\theta(q_1^{j-i} a_{\beta\alpha})}
    \nonumber \\
    & = 
    q_1^{\sum_{I=1}^k 2 \rho_I \nu_I} \prod_{1 \le I < J \le k} \frac{\theta(q_1^{j-i+1} a_{\beta\alpha})}{\theta(q_1^{j-i} a_{\beta\alpha})}
    \, ,
\end{align}
and 
\begin{align}
    &
    \left.\prod_{I=1}^k \left(
    \prod_{\beta < \alpha} \frac{\theta(a_\beta/x_I)}{\theta(K_\beta a_\beta/x_I)}
    \right)
    \frac{\theta(z q_1^{-2\rho_I} K_{1\ldots \alpha} a_{\alpha} / x_I) \theta(K_{\alpha})}{\theta(K_{\alpha} a_{\alpha} / x_I) \theta(z q_1^{-2\rho_I} K_{1\ldots \alpha})}
    \right|_{\underline{x}(\underline{\nu})}
    \nonumber \\
    & =
    \prod_{I=1}^k \left( \prod_{\beta < \alpha} \frac{\theta(q_2^{\nu_{I}} q_1^{1-i} a_{\beta\alpha})}{\theta(q_2^{\nu_{I}} q_1^{1-i} K_\beta a_{\beta\alpha})}
    \right)
    \frac{\theta(q_2^{\nu_{I}} q_1^{1-i-2\rho_I} z K_{1\ldots \alpha}) \theta(K_{\alpha})}{\theta(q_2^{\nu_{I}} q_1^{1-i} K_{\alpha}) \theta(q_1^{-2\rho_I} z K_{1\ldots \alpha})}
    \nonumber \\
    & =
    \prod_{I=1}^k \left( z q_1^{-2\rho_I} \right)^{\nu_{I}}
    \left(
    \prod_{\beta < \alpha} \frac{\theta(q_1^{1-i} a_{\beta\alpha})}{\theta(q_1^{1-i} K_\beta a_{\beta\alpha})}
    \right)
    \frac{\theta(q_1^{1-i-2\rho_I} z K_{1\ldots \alpha}) \theta(K_{\alpha})}{\theta(q_1^{1-i} K_{\alpha}) \theta(q_1^{-2\rho_I} z K_{1\ldots \alpha})}
    \, .
\end{align}
Hence, the specialization of the stable envelope yields
\begin{align}
    \operatorname{Stab}_{2,\underline{k}}(\underline{x},\underline{a})\Big|_{\underline{x}(\underline{\nu})} & = z^{\sum_{I=1}^k \nu_I} \prod_{1 \le I < J \le k} \frac{\theta(q_1^{j-i+1} a_{\beta\alpha})}{\theta(q_1^{j-i} a_{\beta\alpha})} \prod_{I=1}^k 
    \left(
    \prod_{\beta < \alpha} \frac{\theta(q_1^{1-i} a_{\beta\alpha})}{\theta(q_1^{1-i} K_\beta a_{\beta\alpha})}
    \right)
    \frac{\theta(q_1^{1-i-2\rho_I} z K_{1\ldots \alpha}) \theta(K_{\alpha})}{\theta(q_1^{1-i} K_{\alpha}) \theta(q_1^{-2\rho_I} z K_{1\ldots \alpha})}
    \nonumber \\
    & = z^{|\underline{\nu}|} \operatorname{Stab}_{\underline{k};\underline{\ell}}(\underline{a})
    \, .
\end{align}
This completes the proof.
\end{proof}

\paragraph{5d gauge theory partition function}

We compare the vertex function with 5d gauge theory partition function (K-theoretic Nekrasov partition function).
Let $\underline{\lambda} = (\lambda_\alpha)_{\alpha=1,\ldots,n} = (\lambda_{\alpha,1} \ge \lambda_{\alpha,2} \ge \cdots )_{\substack{\alpha=1,\ldots,n \\ i = 1,\ldots, \infty}}$ be an $n$-tuple partition.
The full partition function of 5d $\mathcal{N}=1$ $\mathrm{U}(n)$ gauge theory on $\mathbb{C}^2 \times \mathbb{S}^1$ with $n$ fundamental and $n$ antifundamental matters is given by%
\footnote{%
Compared with \cite{Kimura:2015rgi,Kimura:2020jxl}, the $q$-parameters are flipped, $q_{1,2} \leftrightarrow q_{1,2}^{-1}$, and slightly modify the notation.
}
\begin{align}
    Z^\text{5d} = \sum_{\underline{\lambda}} \mathfrak{q}^{|\underline{\lambda}|} Z^\text{5d}_{\underline{\lambda}}
    \, ,
\end{align}
where
\begin{align}
    Z^\text{5d}_{\underline{\lambda}} = Z^\text{5d}_{\underline{\lambda}}(\underline{b},\underline{m},\underline{\widetilde{m}};q_{1,2}) = \prod_{(\alpha,i) \neq (\beta, j)} \frac{(b_{\beta\alpha} q_2^{\lambda_{\alpha,i} - \lambda_{\beta,j}} q_1^{i-j}; q_2^{-1})_\infty}{(b_{\beta\alpha} q_2^{\lambda_{\alpha,i} - \lambda_{\beta,j}} q_1^{i-j+1}; q_2^{-1})_\infty} 
    \prod_{\substack{\alpha, f=1,\ldots,n \\ i = 1,\ldots,\infty}} \frac{(\frac{\widetilde{m}_f}{b_\alpha} q_2^{\lambda_{\alpha,i}} q_1^{i};q_2^{-1})_\infty}{(\frac{m_f}{b_\alpha} q_2^{\lambda_{\alpha,i}} q_1^{i};q_2^{-1})_\infty} \, .
\end{align}
We denote the instanton fugacity by $\mathfrak{q}$, the multiplicative Coulomb moduli by $\underline{b} = \{ b_\alpha \}_{\alpha = 1,\ldots,n}$, the mass parameters by $\underline{m} = \{ m_f \}_{f=1,\ldots,n}$ and $\underline{\widetilde{m}} = \{ \widetilde{m}_f \}_{f=1,\ldots,n}$.

\begin{proposition}\label{prop:Z_5d}
    The 5d gauge theory partition function agrees with the vertex function under the identification (the root of Higgs branch) as follows,
    \begin{align}
        \mathfrak{q} = z
        \, , \quad 
        b_\alpha = \widetilde{m}_\alpha = a_\alpha q_1^{-\ell_\alpha}
        \, , \quad 
        m_\alpha = a_\alpha 
        \, , \quad
        K_\alpha = \frac{m_\alpha}{\widetilde{m}_\alpha} = q_1^{\ell_\alpha}
        \, , \quad 
        \lambda_{\alpha,i} = \nu_{\alpha,\ell_\alpha - i + 1}
        \, , \quad 
        (\alpha = 1, \ldots, n) \, .
        \label{eq:Higgsing}
    \end{align}
\end{proposition}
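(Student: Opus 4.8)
The plan is a direct term‑by‑term comparison of the two generating functions after imposing the substitution \eqref{eq:Higgsing}; this is the K‑theoretic ``Higgs‑branch localization'' computation, dual to the truncation of Proposition~\ref{prop:Higgs1}. First I would translate the Higgsed parameters into the combinations that actually appear in $Z^{\text{5d}}_{\underline{\lambda}}$: with $b_\alpha = \widetilde{m}_\alpha = a_\alpha q_1^{-\ell_\alpha}$ and $m_\alpha = a_\alpha$ one gets $b_{\beta\alpha} = a_{\beta\alpha} q_1^{-\ell_{\beta\alpha}}$, $\widetilde{m}_f/b_\alpha = a_{f\alpha} q_1^{\ell_\alpha - \ell_f}$, $m_f/b_\alpha = a_{f\alpha} q_1^{\ell_\alpha}$, $K_\alpha = m_\alpha/\widetilde{m}_\alpha = q_1^{\ell_\alpha}$, and $\mathfrak{q} = z$. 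Under the index reversal $\lambda_{\alpha,i} = \nu_{\alpha,\ell_\alpha - i + 1}$ (with $\lambda_{\alpha,i}=0$ for $i > \ell_\alpha$), $n$-tuples of partitions whose $\alpha$-th component has at most $\ell_\alpha$ rows are in bijection with $\underline{\nu} \in \mathsf{P}_{\underline{\ell}}$, and $|\underline{\lambda}| = |\underline{\nu}|$, so $\mathfrak{q}^{|\underline{\lambda}|} = z^{|\underline{\nu}|}$.

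Next I would establish the truncation: after $\widetilde{m}_\alpha = b_\alpha$ and $m_\alpha = a_\alpha$, the summand $Z^{\text{5d}}_{\underline{\lambda}}$ vanishes unless each $\lambda_\alpha$ has at most $\ell_\alpha$ rows. This is the $5d$ avatar of the $q$-segment / root-of-Higgs-branch condition of Proposition~\ref{prop:Higgs1} ($K_\alpha = q_1^{\mu_\alpha}$ with $\mu_\alpha = \ell_\alpha$), and it is cleanest to verify in the arm/leg finite-product form of the Nekrasov function, where the degenerate matter factor kills configurations not fitting in the $\ell_\alpha$-row strip. The $\underline{\lambda}$-sum then runs exactly over $\mathsf{P}_{\underline{\ell}}$ after the relabeling.

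Then comes the main bookkeeping. I would split $Z^{\text{5d}}_{\underline{\lambda}}$ into the vector part (the $b_{\beta\alpha}$-products) and the matter part (the fundamental/antifundamental products), and collapse each infinite product over the row index $i$ (and over $i,j$) into a finite product times an $\underline{\lambda}$-independent tail. With the Higgsed $b$'s the tail assembles into the perturbative factor $\mathring{Z}_{\underline{\ell}}$ of Proposition~\ref{prop:vort_fn}, equivalently $Z^{\text{5d}}_\emptyset$, which is evaluated via Lemma~\ref{lemma:A1_pert} at $K_\alpha = q_1^{\ell_\alpha}$; the finite products match $Z_{\underline{\nu}}$ factor-by-factor after $i \mapsto \ell_\alpha - i + 1$, the vector part producing the $(\alpha,i)\neq(\beta,j)$ double product (with $b_{\beta\alpha}q_1^{i-j} = a_{\beta\alpha}q_1^{j-i}$ converting partition combinatorics into inverse-partition combinatorics) and the matter part producing the remaining $\prod_{\alpha,\beta}\prod_i$ factor with the stated $(\cdot)_{-\nu_{\alpha,i}}$ shifted factorials. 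Summing over $\underline{\lambda}$ gives $Z^{\text{5d}} = \mathring{Z}_{\underline{\ell}}\sum_{\underline{\nu}\in\mathsf{P}_{\underline{\ell}}} z^{|\underline{\nu}|} Z_{\underline{\nu}}$, which is the vertex function of Proposition~\ref{prop:vort_fn} up to the overall factor $\operatorname{Stab}_{\underline{k};\underline{\ell}}(\underline{a})$, an $\underline{\lambda}$-independent constant absorbed into the normalization.

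I expect the main obstacle to be precisely this telescoping of the ``infinite product over $i$'' form of the $5d$ function: keeping straight which sub-products collapse to $1$, which to finite $q$-shifted factorials feeding $Z_{\underline{\nu}}$, and which to the perturbative tail $\mathring{Z}_{\underline{\ell}}$, all while the index reversal $i \leftrightarrow \ell_\alpha - i + 1$ swaps the roles of ``partition'' and ``inverse partition'' — this is where ordering and sign errors are easy to make. The secondary difficulty is establishing the vanishing of $Z^{\text{5d}}_{\underline{\lambda}}$ for $\lambda_\alpha$ with more than $\ell_\alpha$ rows directly on the $5d$ side, rather than by appealing to the dual statement on the vertex side.
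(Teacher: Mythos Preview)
Your approach is correct and follows essentially the same route as the paper: invoke the truncation to restrict each $\lambda_\alpha$ to at most $\ell_\alpha$ rows, perform the index reversal $i \mapsto \ell_\alpha - i + 1$, and match term-by-term with the vertex function of Proposition~\ref{prop:vort_fn}. The paper's proof is terser---it asserts the truncation via Proposition~\ref{prop:Higgs1}, writes the relabeled formula in one line, and declares agreement---whereas you spell out the perturbative/vortex splitting and the telescoping of the infinite row-products that the paper leaves implicit.
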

\begin{proof}
    Since $K_\alpha = q_1^{\ell_\alpha}$, the length of the partition $\lambda_\alpha$ is restricted by Proposition~\ref{prop:Higgs1}, and thus the range of the index $i$ becomes $\{ 1,\ldots,\ell_\alpha \}$ for $\alpha = 1,\ldots,n$.
    Then, denoting $\ell_{\beta\alpha} = \ell_\beta - \ell_\alpha$, we have
    \begin{align}
    Z^\text{5d}_{\underline{\lambda}}
    & = \prod_{(\alpha,i) \neq (\beta, j)} \frac{(b_{\beta\alpha} q_2^{\lambda_{\alpha,i} - \lambda_{\beta,j}} q_1^{i-j}; q_2^{-1})_\infty}{(b_{\beta\alpha} q_2^{\lambda_{\alpha,i} - \lambda_{\beta,j}} q_1^{i-j+1}; q_2^{-1})_\infty} 
    \prod_{\substack{\alpha, f=1,\ldots,n \\ i = 1,\ldots,\ell_\alpha}} \frac{(\frac{\widetilde{m}_f}{b_\alpha} q_2^{\lambda_{\alpha,i}} q_1^{i};q_2^{-1})_\infty}{(\frac{m_f}{b_\alpha} q_2^{\lambda_{\alpha,i}} q_1^{i};q_2^{-1})_\infty}
    \nonumber \\
    & = \prod_{(\alpha,i) \neq (\beta, j)} \frac{(b_{\beta\alpha} q_2^{\lambda_{\alpha,\ell_\alpha+1-i} - \lambda_{\beta,\ell_\beta+1-j}} q_1^{\ell_{\beta\alpha} + j-i}; q_2^{-1})_\infty}{(b_{\beta\alpha} q_2^{\lambda_{\alpha,\ell_\alpha+1-i} - \lambda_{\beta,\ell_\beta+1-j}} q_1^{\ell_{\beta\alpha} +j-i+1}; q_2^{-1})_\infty} 
    \prod_{\substack{\alpha, f=1,\ldots,n \\ i = 1,\ldots,\ell_\alpha}} \frac{(\frac{\widetilde{m}_f}{b_\alpha} q_2^{\lambda_{\alpha,\ell_\alpha+1-i}} q_1^{\ell_\alpha+1-i};q_2^{-1})_\infty}{(\frac{m_f}{b_\alpha} q_2^{\lambda_{\alpha,\ell_\alpha+1-i}} q_1^{\ell_\alpha+1-i};q_2^{-1})_\infty}
    \, ,
    \end{align}
    which agrees with the vertex function under the identification~\eqref{eq:Higgsing}.
\end{proof}
This phenomenon is known as the triality between the $q$-deformed W-algebra conformal block and 3d/5d gauge theory partition functions~\cite{Aganagic:2013tta,Aganagic:2014oia,Aganagic:2015cta}.
As mentioned in \S\ref{sec:double_screening_A1}, the 3d/5d correspondence would be interpreted as a two-dimensional analog of the one-leg DT/PT correspondence.

\subsubsection{Elliptic conformal block}

We consider an elliptic analog of the conformal block discussed in Proposition~\ref{prop:conf_block_A1},
\begin{align}
        \Psi_{\mathscr{C}_{1\ldots n}}^{\underline{k}}(\underline{a};\underline{K}) & = \tr \left[ p^{d} \mathsf{Z}^{(k_1)}_{K_1}(a_1) \mathsf{Z}^{(k_2)}_{K_2}(a_2) \cdots \mathsf{Z}^{(k_n)}_{K_n}(a_n) \right] / \tr \left[ p^{d} \mathsf{Z}_{K_1}(a_1) \mathsf{Z}_{K_2}(a_2) \cdots \mathsf{Z}_{K_n}(a_n) \right]
        \nonumber \\ 
        & = \frac{\mathsf{c}_{\underline{k}}(\underline{K})}{k!} \oint_\gamma [\dd{\underline{x}}] \, \Phi_{2}(\underline{x},\underline{a}) \operatorname{Stab}_{\underline{k}}
        (\underline{x},\underline{a})
        \, ,
\end{align}
with the elliptic nome $p \in \mathbb{C}^\times$ and the principal part,
\begin{align}
    \Phi_{\sigma}(\underline{x},\underline{a}) = \prod_{i \neq j}^k \frac{\Gamma(q_{\bar{\sigma}} x_j/x_i;q_\sigma^{-1},p)}{\Gamma(x_j/x_i;q_\sigma^{-1},p)} \prod_{\substack{i=1,\ldots,k \\ \alpha=1,\ldots,n}} \frac{\Gamma( a_\alpha/x_i;q_\sigma^{-1},p)}{\Gamma(K_\alpha a_\alpha/x_i;q_\sigma^{-1},p)} 
    \, ,
\end{align}
where we denote by $\Gamma(\cdot)$ the elliptic gamma function~\eqref{eq:ell_gamma_fn}, and the trace is taken over the Fock space generated by all the free fields and $d$ is the corresponding degree counting operator. 
See, e.g., \cite{Kimura:2016dys}.
This principal part is reduced to the previous one presented in \eqref{eq:principal_A1} in the limit $p \to 0$.
This elliptic conformal block is identified with the partition function of 4d mass-deformed $\mathcal{N} = 2$ gauge theory (softly broken to $\mathcal{N}=1$ theory) on $\mathbb{C}_{q_2} \times \mathcal{E}_p$~\cite{Longhi:2019hdh}.

\paragraph{Elliptic vertex function}

Let us focus on the simplest example, which is an elliptic analog of the conformal block~\eqref{eq:TP1_block} associated with quasimaps to $T^\vee\mathbb{P}^1$.
We write $\Gamma(\cdot) = \Gamma(\cdot;q_2^{-1},p)$ and $\theta(\cdot) = \theta(\cdot;q_2^{-1})$.
Then, in the chamber $\mathscr{C}_{12}: a_1 \succ a_2$, we have%
\footnote{%
We can similarly obtain the vertex function for general $(n,k)$ given as a summation over the reverse partitions as discussed in \S\ref{sec:vort_part_fn}.
}
\begin{subequations}\label{eq:TP1_block12_elliptic}
\begin{align}
    \Psi_{\mathscr{C}_{12}}^{(1,0)}(a_{1,2};K_{1,2}) & = \oint_\gamma \frac{\dd{x}}{2 \pi \ii x} \frac{\Gamma(a_1/x,a_2/x)}{\Gamma(K_1 a_1/x,K_1 a_1/x)} \frac{\theta(z K_1 a_1/x) \theta(K_1)}{\theta(K_1 a_1/x) \theta(z K_1)}
    \, , \\ 
    \Psi_{\mathscr{C}_{12}}^{(0,1)}(a_{1,2};K_{1,2}) & = \oint_\gamma \frac{\dd{x}}{2 \pi \ii x} \frac{\Gamma(a_1/x,a_2/x)}{\Gamma(K_1 a_1/x,K_2 a_2/x)} \frac{\theta(z K_{12} a_2/x) \theta(K_2)}{\theta(K_2 a_2/x) \theta(z K_{12})} \frac{\theta(a_1/x)}{\theta(K_1 a_1 / x)}
    \, .
\end{align}
\end{subequations}
Similarly, in the other chamber $\mathscr{C}_{21}: a_1 \prec a_2$, we have
\begin{subequations}\label{eq:TP1_block21_elliptic}
\begin{align}
    \Psi_{\mathscr{C}_{21}}^{(1,0)}(a_{1,2};K_{1,2}) & = \oint_\gamma \frac{\dd{x}}{2 \pi \ii x} \frac{\Gamma(a_1/x,a_2/x)}{\Gamma(K_1 a_1/x,K_1 a_1/x)} \frac{\theta(z K_{12} a_1/x) \theta(K_1)}{\theta(K_1 a_1/x) \theta(z K_{12})} \frac{\theta(a_2/x)}{\theta(K_2 a_2 / x)}
    \, , \\ 
    \Psi_{\mathscr{C}_{21}}^{(0,1)}(a_{1,2};K_{1,2}) & = \oint_\gamma \frac{\dd{x}}{2 \pi \ii x} \frac{\Gamma(a_1/x,a_2/x)}{\Gamma(K_1 a_1/x,K_1 a_1/x)} \frac{\theta(z K_2 a_2/x) \theta(K_2)}{\theta(K_2 a_2/x) \theta(z K_2)}
    \, .
\end{align}
\end{subequations}

\begin{lemma}
    Taking the contour integral, we have the elliptic vertex function associated with each vacuum,
    \begin{subequations}
    \begin{align}
        \Psi_{\mathscr{C}_{12}}^{(1,0)}(a_{1,2};K_{1,2})\Big|_{\{a_1 q_2^{-d}\}} & = 
        \frac{\Gamma(q_2^{-1},a_{21})}{\Gamma(K_1,K_2 a_{21})} {_2 E_1}
        \left[
        \begin{matrix}
            q_2^{-1} K_1 & q_2^{-1} K_2 a_{21} \\ q_2^{-1} a_{21}
        \end{matrix} ; q_2^{-1}, p ; z K_{12} 
        \right]
        \, , \\
        \Psi_{\mathscr{C}_{12}}^{(1,0)}(a_{1,2};K_{1,2})\Big|_{\{a_2 q_2^{-d}\}} & = 
        \frac{\theta(z K_1 a_{12}, K_1) }{\theta(K_1 a_{12}, z K_1)}
        \frac{\Gamma(q_2^{-1},a_{12})}{\Gamma(K_2,K_1 a_{12})} {_2 E_1}
        \left[
        \begin{matrix}
            q_2^{-1} K_2 & q_2^{-1} K_1 a_{12} \\ q_2^{-1} a_{12}
        \end{matrix} ; q_2^{-1}, p ; z K_{12} 
        \right]
        \, , \\
        \Psi_{\mathscr{C}_{12}}^{(0,1)}(a_{1,2};K_{1,2})\Big|_{\{a_1 q_2^{-d}\}} & = 0
        \, , \\
        \Psi_{\mathscr{C}_{12}}^{(0,1)}(a_{1,2};K_{1,2})\Big|_{\{a_2 q_2^{-d}\}} & = 
        \frac{\theta(a_{12}) }{\theta(K_1 a_{12}) }
        \frac{\Gamma(q_2^{-1},a_{12})}{\Gamma(K_2,K_1 a_{12})} {_2 E_1}
        \left[
        \begin{matrix}
            q_2^{-1} K_2 & q_2^{-1} K_1 a_{12} \\ q_2^{-1} a_{12}
        \end{matrix} ; q_2^{-1}, p ; z K_{12} 
        \right]
        \, ,
    \end{align}
    \end{subequations}
    where ${}_2 E_1$ is the elliptic hypergeometric series (Definition~\ref{def:E-hypergeometric}).
    These contributions are summarized in a matrix form as follows,
    \begin{align}
        \Psi_{\mathscr{C}_{12}}(a_{1,2};K_{1,2}) & = 
        \begin{pmatrix}
            \Psi_{\mathscr{C}_{12}}^{(1,0)}(a_{1,2};K_{1,2}) \\ \Psi_{\mathscr{C}_{12}}^{(0,1)}(a_{1,2};K_{1,2})
        \end{pmatrix}
        \nonumber \\
        & = 
        \begin{pmatrix}
            1 & \displaystyle \frac{\theta(z K_1 a_{12},K_1)}{\theta(K_1 a_{12},z K_1) } \\ 
            0 & \displaystyle \frac{\theta(a_{12}) }{\theta(K_1 a_{12}) }
        \end{pmatrix}
        \begin{pmatrix}
        \displaystyle 
        \frac{\Gamma(q_2^{-1},a_{21})}{\Gamma(K_1,K_2 a_{21})} {_2 E_1}
        \left[
        \begin{matrix}
            q_2^{-1} K_1 & q_2^{-1} K_2 a_{21} \\ q_2^{-1} a_{21}
        \end{matrix} ; q_2^{-1}, p ; z K_{12} 
        \right] \\[1.5em]
        \displaystyle 
        \frac{\Gamma(q_2^{-1},a_{12})}{\Gamma(K_2,K_1 a_{12})} {_2 E_1}
        \left[
        \begin{matrix}
            q_2^{-1} K_2 & q_2^{-1} K_1 a_{12} \\ q_2^{-1} a_{12}
        \end{matrix} ; q_2^{-1}, p ; z K_{12} 
        \right]
        \end{pmatrix}
        \, .
    \end{align}
    Similarly, we have
    \begin{align}
        \Psi_{\mathscr{C}_{21}}(a_{1,2};K_{1,2}) & = 
        \begin{pmatrix}
            \Psi_{\mathscr{C}_{21}}^{(1,0)}(a_{1,2};K_{1,2}) \\ \Psi_{\mathscr{C}_{21}}^{(0,1)}(a_{1,2};K_{1,2})
        \end{pmatrix}
        \nonumber \\
        & = 
        \begin{pmatrix}
            \displaystyle \frac{\theta(a_{21})}{\theta(K_2 a_{21}) } & 0 \\
            \displaystyle \frac{\theta(z K_2 a_{21},K_2)}{\theta(K_2 a_{21},z K_2) } & 1
        \end{pmatrix}
        \begin{pmatrix}
        \displaystyle \frac{\Gamma(q_2^{-1},a_{21})}{\Gamma(K_1,K_2 a_{21})} {_2 E_1}
        \left[
        \begin{matrix}
            q_2^{-1} K_1 & q_2^{-1} K_2 a_{21} \\ q_2^{-1} a_{21}
        \end{matrix} ; q_2^{-1}, p ; z K_{12}
        \right] \\[1.5em]
        \displaystyle \frac{\Gamma(q_2^{-1},a_{12})}{\Gamma(K_2,K_1 a_{12})} {_2 E_1}
        \left[
        \begin{matrix}
            q_2^{-1} K_2 & q_2^{-1} K_1 a_{12} \\ q_2^{-1} a_{12}
        \end{matrix} ; q_2^{-1}, p ; z K_{12}
        \right]
        \end{pmatrix}
        \, ,
    \end{align}
    from which we obtain the elliptic $R$-matrix as before (Proposition~\ref{prop:R-matrix}).
\end{lemma}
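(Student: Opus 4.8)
The plan is to evaluate the single-variable contour integrals in \eqref{eq:TP1_block12_elliptic} and \eqref{eq:TP1_block21_elliptic} by residues, treating the elliptic case as the $p$-deformation of the trigonometric computation carried out just below \eqref{eq:TP1_block}. Each integrand is the product of the elliptic gamma ratio $\prod_{\alpha=1,2}\Gamma(a_\alpha/x;q_2^{-1},p)/\Gamma(K_\alpha a_\alpha/x;q_2^{-1},p)$ with theta-function prefactors; since $\Gamma(z;q_2^{-1},p)\to(z;q_2^{-1})_\infty^{-1}$ as $p\to 0$, the output must reduce to the formulas recorded below \eqref{eq:TP1_block}, which is the first consistency check. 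The engine of the computation is the shift equation $\Gamma(q_2^{-1}w;q_2^{-1},p)=\theta(w;p)\,\Gamma(w;q_2^{-1},p)$: the poles of $\Gamma(a_\alpha/x;q_2^{-1},p)$ lie along the tower $x=a_\alpha q_2^{-d}$, $d\in\mathbb{Z}_{\ge 0}$ (together with a $p$-lattice discussed below), and applying the shift equation $d$ times turns the gamma ratio into a ratio of elliptic theta-Pochhammer symbols of base $q_2^{-1}$, which is precisely the summand of the series ${}_2E_1$ of Definition~\ref{def:E-hypergeometric}.

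First I would handle $\Psi^{(1,0)}_{\mathscr{C}_{12}}$. Summing residues over $\{x=a_1 q_2^{-d}\}_{d\ge 0}$: the $d=0$ term is the residue of the elliptic gamma at argument $1$, producing the overall constant $\Gamma(q_2^{-1};q_2^{-1},p)$ times the remaining gamma factors evaluated at $x=a_1$, namely $\Gamma(a_{21})/(\Gamma(K_1)\Gamma(K_2 a_{21}))$, while the theta prefactor $\theta(zK_1 a_1/x)\theta(K_1)/(\theta(K_1 a_1/x)\theta(zK_1))$ equals $1$ there; the ratio of the $(d+1)$-th residue to the $d$-th, read off from the shift equation, is exactly the ratio of consecutive coefficients of the claimed ${}_2E_1$ with upper parameters $q_2^{-1}K_1$, $q_2^{-1}K_2 a_{21}$, lower parameter $q_2^{-1}a_{21}$, and argument $zK_{12}$. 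For the tower $\{x=a_2 q_2^{-d}\}_{d\ge 0}$ the same mechanism applies after pulling out the theta prefactor $\theta(zK_1 a_{12},K_1)/\theta(K_1 a_{12},zK_1)$, which is $d$-independent by quasi-periodicity of $\theta(\cdot;q_2^{-1})$. For $\Psi^{(0,1)}_{\mathscr{C}_{12}}$ the extra factor $\theta(a_1/x;q_2^{-1})$ vanishes on the whole tower $x=a_1 q_2^{-d}$, killing those residues and giving the stated $0$, whereas on the tower $x=a_2 q_2^{-d}$ it contributes the $d$-independent prefactor $\theta(a_{12})/\theta(K_1 a_{12})$ times the same ${}_2E_1$. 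All these series converge under the standing assumption $|zK_{12}|<1$. The chamber $\mathscr{C}_{21}$ follows either by the identical computation or by the relabeling $1\leftrightarrow 2$ that swaps the chambers of Proposition~\ref{prop:conf_block_A1}.

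Collecting the four contributions into the two-component vector, the $d$-dependent parts combine into the \emph{same} vector of two ${}_2E_1$ series in both chambers, while the $d$-independent theta prefactors found above assemble into the triangular matrices displayed in the statement. These matrices are the on-shell specialization of the off-shell elliptic stable envelope \eqref{eq:stab_2} for $(n,k)=(2,1)$ (the quiver variety being $T^\vee\mathbb{P}^1$), which is why they are triangular; they depend only on theta functions of base $q_2^{-1}$ and coincide with the matrices appearing in Proposition~\ref{prop:R-matrix}. Eliminating the common ${}_2E_1$ vector then gives $\Psi_{\mathscr{C}_{12}}=R_{21}\,\Psi_{\mathscr{C}_{21}}$ with $R_{21}$ the product of the two triangular stable-envelope matrices, i.e.\ exactly the elliptic $R$-matrix of Proposition~\ref{prop:R-matrix}; in particular the $R$-matrix is unchanged by turning on the nome $p$, which enters only the gamma-function part of the block.

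The main obstacle is the pole bookkeeping for the elliptic gamma function: unlike in the trigonometric case, $\Gamma(a_\alpha/x;q_2^{-1},p)$ carries a full two-dimensional lattice of poles $x=a_\alpha q_2^{-d}p^{m}$ ($d,m\ge 0$), and one must argue that the reference contour $\gamma$ of Definition~\ref{def:Z_screened} --- chosen so that the integral is invariant under $x\mapsto q_2^{-1}x$ --- together with the quasi-periodicity of $\Gamma$ and $\theta$ organizes this lattice so that the residue sum reduces to a single geometric tower in $d$, the $p$-direction being repackaged into the theta-Pochhammer symbols of ${}_2E_1$, and hence that the answer is a genuine elliptic hypergeometric (rather than double) series. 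The remaining steps --- checking the precise normalization against Definition~\ref{def:E-hypergeometric} and justifying analytic continuation away from $|zK_{12}|<1$ --- are routine.
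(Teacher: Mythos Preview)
Your proposal is correct and follows the approach the paper leaves implicit: the lemma is stated without proof, but the intended argument is the direct $p$-deformation of the trigonometric residue computation carried out explicitly just below \eqref{eq:TP1_block}, which is exactly what you do. Your identification of the 2D pole lattice of $\Gamma(\cdot;q_2^{-1},p)$ as the only nontrivial point is apt; the paper sidesteps it by the contour prescription in Definition~\ref{def:Z_screened} (invariance under $x\mapsto q_2^{-1}x$), under which only the principal $q_2$-tower contributes and the $p$-direction is absorbed into the $\theta(\cdot;p)$-Pochhammer symbols via the shift equation for $\Gamma$, as you indicate.
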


\paragraph{Elliptic $q$-KZ equation}

Let us consider the difference equation for the two-point elliptic conformal block.
\begin{lemma}
    We have
    \begin{align}
        \Psi^{(1,0)}_{\mathscr{C}_{12}}(a_1,pa_2;K_{1,2}) = \Psi^{(1,0)}_{\mathscr{C}_{21}}(a_1,a_2;K_{1,2})\Big|_{z \to z K_2^{-1}}
        \, , \quad 
        \Psi^{(0,1)}_{\mathscr{C}_{12}}(a_1,pa_2;K_{1,2}) = \Psi^{(0,1)}_{\mathscr{C}_{21}}(a_1,a_2;K_{1,2})\Big|_{z \to z K_1}
        \, .
    \end{align}
\end{lemma}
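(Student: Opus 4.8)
The plan is to argue from the explicit contour-integral representations \eqref{eq:TP1_block12_elliptic} and \eqref{eq:TP1_block21_elliptic}, trading the shift $a_2 \to p a_2$ for the quasi-periodicity of the elliptic gamma function~\eqref{eq:ell_gamma_fn}, namely $\Gamma(p\,w) = \theta(w)\,\Gamma(w)$ with $\Gamma(\cdot) = \Gamma(\cdot;q_2^{-1},p)$ and $\theta(\cdot) = \theta(\cdot;q_2^{-1})$ as in the text, together with the multiplicativity $K_{12} = K_1 K_2$ of the weight parameters. The only place $a_2$ enters the measure is through $\Gamma(a_2/x)/\Gamma(K_2 a_2/x)$, and
\begin{align}
    \frac{\Gamma(p a_2/x)}{\Gamma(K_2 p a_2/x)} = \frac{\theta(a_2/x)}{\theta(K_2 a_2/x)}\,\frac{\Gamma(a_2/x)}{\Gamma(K_2 a_2/x)} \, ,
\end{align}
so the $p$-shift of $a_2$ reproduces the original measure times the extra factor $\theta(a_2/x)/\theta(K_2 a_2/x)$ --- precisely the factor by which the $\mathscr{C}_{21}$ stable envelope differs from the $\mathscr{C}_{12}$ one in the corresponding vacuum. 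I would also use the companion relation $\Gamma(p^{-1}w) = \Gamma(w)/\theta(p^{-1}w)$ for the theta prefactors.

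For the $(1,0)$ component this is already the whole story. The stable-envelope prefactor $\theta(zK_1 a_1/x)\theta(K_1)/[\theta(K_1 a_1/x)\theta(zK_1)]$ of $\Psi^{(1,0)}_{\mathscr{C}_{12}}$ carries no $a_2$, so after the substitution the integrand of $\Psi^{(1,0)}_{\mathscr{C}_{12}}(a_1, p a_2)$ becomes the unshifted $\mathscr{C}_{12}$ integrand times $\theta(a_2/x)/\theta(K_2 a_2/x)$; on the other side the integrand of $\Psi^{(1,0)}_{\mathscr{C}_{21}}(a_1,a_2)$ carries the prefactor $\theta(zK_{12}a_1/x)\theta(K_1)/[\theta(K_1 a_1/x)\theta(zK_{12})]$ and the same $\theta(a_2/x)/\theta(K_2 a_2/x)$, and the replacement $z \to zK_2^{-1}$ turns $zK_2^{-1}K_{12}$ into $zK_1$. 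Thus the two integrands coincide as functions of $x$. Since $|p|<1$ the shift does not carry $a_2$ past $a_1$; in the overlap regime $|p|\,|a_2| < |a_1| < |a_2|$ both blocks are defined, the cycle $\gamma$ encloses the same poles on both sides (the tower $x \in a_2 q_2^{-\mathbb{Z}_{\ge 0}}$ contributing nothing because of the zero of $\theta(a_2/x)$), and the integrals agree, which is the first identity.

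For the $(0,1)$ component the same substitution inserts $\theta(a_2/x)/\theta(K_2 a_2/x)$ and, by $K_1 K_2 = K_{12}$, matches the $z$-theta of $\Psi^{(0,1)}_{\mathscr{C}_{21}}$ after $z \to zK_1$; but here the stable-envelope prefactor of $\Psi^{(0,1)}_{\mathscr{C}_{12}}$ does depend on $a_2$, so I would proceed via the closed forms of the preceding Lemma. Under $a_{12}\to p^{-1}a_{12}$ its theta prefactor cancels exactly against the factor produced by $\Gamma(p^{-1}w) = \Gamma(w)/\theta(p^{-1}w)$ in the $\Gamma$-ratio, leaving
\begin{align}
    \Psi^{(0,1)}_{\mathscr{C}_{12}}(a_1, p a_2; K_{1,2}) = \frac{\Gamma(q_2^{-1},a_{12})}{\Gamma(K_2,K_1 a_{12})}\;{}_2 E_1\!\left[\begin{matrix} q_2^{-1}K_2 & q_2^{-1}K_1 p^{-1}a_{12} \\ q_2^{-1}p^{-1}a_{12} \end{matrix} ; q_2^{-1}, p ; zK_{12} \right] \, .
\end{align}
On the other side $\Psi^{(0,1)}_{\mathscr{C}_{21}}(a_1,a_2)\big|_{z\to zK_1}$ is, by the preceding Lemma, a two-term combination of ${}_2 E_1$'s coming from the $a_1$- and $a_2$-towers with argument $zK_1 K_{12}$. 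Equating the two is then a connection (three-term) relation for the elliptic hypergeometric series ${}_2 E_1$ --- equivalently its $p$-difference equation --- expressing a ${}_2 E_1$ with two parameters shifted by $p$ through a linear combination of two ${}_2 E_1$'s with modified argument; this is a standard fact of elliptic hypergeometric calculus. Equivalently, at the integrand level, the difference of the two integrands written over the common measure is a total difference under $x \mapsto q_2^{-1}x$ by a Riemann three-term theta identity, and its contour integral vanishes since $\gamma$ is chosen invariant under $x \mapsto q_2^{-1}x$.

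The routine ingredients --- the $\Gamma$ quasi-periodicity, $K_1 K_2 = K_{12}$, the $z$-shifts reproducing the operator weights of \eqref{eq:Zz_weight}, and the pole bookkeeping for $\gamma$ in the $(1,0)$ case --- present no difficulty. The main obstacle is the $(0,1)$ case: exhibiting the three-term theta (or ${}_2 E_1$) identity that reconciles the single surviving term on the $\mathscr{C}_{12}$ side with the two-term $\mathscr{C}_{21}$ expansion, and checking that the deformation of $\gamma$ between the two chambers crosses exactly the poles accounted for by that identity. Once both components are in place, assembling them with the elliptic $R$-matrix of Proposition~\ref{prop:R-matrix} yields the dynamical elliptic $q$-KZ equation.
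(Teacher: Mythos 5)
Your treatment of the first identity is correct and is exactly the paper's mechanism: the only $a_2$-dependence of the $\Psi^{(1,0)}_{\mathscr{C}_{12}}$ integrand sits in $\Gamma(a_2/x)/\Gamma(K_2 a_2/x)$, the quasi-periodicity $\Gamma(pw)=\theta(w)\Gamma(w)$ produces precisely the factor $\theta(a_2/x)/\theta(K_2 a_2/x)$ that distinguishes the two chambers, and $z\to zK_2^{-1}$ converts $\theta(zK_{12}a_1/x)/\theta(zK_{12})$ into $\theta(zK_1 a_1/x)/\theta(zK_1)$.

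For the second identity you have missed the step that makes it equally immediate, and the route you substitute for it leaves a genuine gap. In the $(0,1)$ component the single screening variable is attached to $a_2$, so the shift $a_2\to pa_2$ must be accompanied by the change of variables $x\to px$ in the contour integral (this is the substitution $x_i\to p\,x_i$, $i=1,\ldots,k_n$, used in the proof of Lemma~\ref{lemma:n_gen}). Under $x\to px$ one has $\Gamma(p^{-1}w)=\Gamma(w)/\theta(p^{-1}w)$, and the ratios $\theta(p^{-1}K_\alpha a_\alpha/x)/\theta(p^{-1}a_\alpha/x)$ so produced cancel identically against $\theta(a_1/x)/\theta(K_1 a_1/x)$ (already present in the envelope) and $\theta(a_2/x)/\theta(K_2 a_2/x)$ (produced by the first quasi-periodicity step), while $\theta(zK_{12}\,p a_2/(px))=\theta(zK_{12}a_2/x)$; the integrand then coincides on the nose with that of $\Psi^{(0,1)}_{\mathscr{C}_{21}}$ with $z\to zK_1$, and no identity for ${}_2E_1$ is needed. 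By contrast, your plan defers the entire content of the second identity to an unspecified ``three-term connection relation'' for ${}_2E_1$, which you yourself flag as the main obstacle. Worse, the relation you would need is not a generic contiguous relation: your own computation leaves $\Psi^{(0,1)}_{\mathscr{C}_{12}}(a_1,pa_2)$ as a single ${}_2E_1$ (after absorbing the $p$-shifted parameters, its argument becomes $zK_1K_{12}$), whereas $\Psi^{(0,1)}_{\mathscr{C}_{21}}\big|_{z\to zK_1}$ is the two-term combination of the $a_1$- and $a_2$-tower blocks with prefactors $\frac{\theta(zK_1K_2 a_{21},K_2)}{\theta(K_2 a_{21},zK_1K_2)}$ and $1$ from the preceding Lemma. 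Reconciling a one-term with a two-term expression is the full connection formula between the two pole towers, i.e.\ essentially the $R$-matrix statement that this Lemma is meant to feed into; you do not supply it, so the second half of the Lemma is not proved.
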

\begin{proof}
    This immediately follows from the integral formulas \eqref{eq:TP1_block12_elliptic} and \eqref{eq:TP1_block21_elliptic}.
\end{proof}
Defining the weight factor (see also \eqref{eq:Zz_weight}),
\begin{align}
    \kappa_\alpha = K_\alpha q_1^{-2k_\alpha} \, ,
    \label{eq:weight_fac}
\end{align}
and imposing the zero weight condition, $K_{12} = q_1^2$, we may rewrite 
\begin{align}
        \Psi^{(1,0)}_{\mathscr{C}_{12}}(a_1,pa_2;K_{1,2}) = \Psi^{(1,0)}_{\mathscr{C}_{21}}(a_1,a_2;K_{1,2})\Big|_{z \to z \kappa_2^{-1}}
        \, , \quad 
        \Psi^{(0,1)}_{\mathscr{C}_{12}}(a_1,pa_2;K_{1,2}) = \Psi^{(0,1)}_{\mathscr{C}_{21}}(a_1,a_2;K_{1,2})\Big|_{z \to z \kappa_2^{-1}}
        \, .
\end{align}
Hence, we have
\begin{align}
        \Psi_{\mathscr{C}_{12}}\Big|_{a_2 \to p a_2} = \Psi_{\mathscr{C}_{21}}\Big|_{z \to z \kappa_2^{-1}} \, .
\end{align}
We remark that the role of the zero weight condition to obtain the difference equation was first discussed in~\cite{Foda:1993fg}.
See also a recent monograph on this subject~\cite{Konno2020}.

\begin{proposition}
    Under the zero weight condition $K_{12} = q_1^2$, we have the elliptic $q$-KZ equation, 
    \begin{align}
        \Psi_{\mathscr{C}_{12}}\Big|_{\substack{z \to z \kappa_2 \\ a_2 \to p a_2}} = \Psi_{\mathscr{C}_{21}} = R_{12 \to 21} \Psi_{\mathscr{C}_{12}} 
        \, .
    \end{align}
    Similarly, we have 
    \begin{align}
        \Psi_{\mathscr{C}_{12}}\Big|_{\substack{z \to z \kappa_1 \\ a_1 \to p a_1}}
        = R_{21 \to 12} \Psi_{\mathscr{C}_{21}}\Big|_{\substack{z \to z \kappa_1 \\ a_1 \to p a_1}} = R_{21 \to 12}\Big|_{\substack{z \to z \kappa_1 \\ a_1 \to p a_1}} \Psi_{\mathscr{C}_{12}}
        \, .
    \end{align}
\end{proposition}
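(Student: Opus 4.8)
The plan is to obtain both elliptic $q$-KZ relations by composing two facts already established for the two-point $(n,k)=(2,1)$ block: the chamber-change relation of Proposition~\ref{prop:R-matrix} and the quasi-periodicity relation stated just above. First I would record the inputs: Proposition~\ref{prop:R-matrix} gives $\Psi_{\mathscr{C}_{12}}=R_{21\to12}\,\Psi_{\mathscr{C}_{21}}$, equivalently $\Psi_{\mathscr{C}_{21}}=R_{12\to21}\,\Psi_{\mathscr{C}_{12}}$ with $R_{12\to21}=R_{21\to12}^{-1}$; and, under the zero weight condition $K_{12}=q_1^2$ with $\kappa_\alpha=K_\alpha q_1^{-2k_\alpha}$, the quasi-periodicity relation reads $\Psi_{\mathscr{C}_{12}}\big|_{z\to z\kappa_2,\,a_2\to pa_2}=\Psi_{\mathscr{C}_{21}}$.

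For the first identity I would simply chain these, $\Psi_{\mathscr{C}_{12}}\big|_{z\to z\kappa_2,\,a_2\to pa_2}=\Psi_{\mathscr{C}_{21}}=R_{12\to21}\,\Psi_{\mathscr{C}_{12}}$, which is exactly the claim. For the second identity I would first prove the $1\leftrightarrow2$ analog of the quasi-periodicity relation, namely $\Psi_{\mathscr{C}_{21}}\big|_{z\to z\kappa_1,\,a_1\to pa_1}=\Psi_{\mathscr{C}_{12}}$, directly from the explicit contour integrals \eqref{eq:TP1_block12_elliptic}--\eqref{eq:TP1_block21_elliptic}: the shift $a_1\to pa_1$ acts on each factor through the quasi-periodicity of the elliptic gamma function (a shift of its argument by $p$ multiplies by a $\theta(\cdot;q_2^{-1})$) and of the theta prefactors, and the resulting component-wise $z$-shifts (by $K_1$ and $K_2^{-1}$, as in the Lemma above) collapse into a single power of $\kappa_1$ precisely because of $K_{12}=q_1^2$. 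Applying the substitution $z\to z\kappa_1,\ a_1\to pa_1$ to Proposition~\ref{prop:R-matrix} and then inserting this analog gives $\Psi_{\mathscr{C}_{12}}\big|_{z\to z\kappa_1,\,a_1\to pa_1}=R_{21\to12}\big|_{z\to z\kappa_1,\,a_1\to pa_1}\,\Psi_{\mathscr{C}_{21}}\big|_{z\to z\kappa_1,\,a_1\to pa_1}=R_{21\to12}\big|_{z\to z\kappa_1,\,a_1\to pa_1}\,\Psi_{\mathscr{C}_{12}}$. The structural reason this case differs from the first is that the dynamical and spectral arguments $z$ and $a_{12}$ entering $R_{21\to12}$ are themselves moved by the shift ($z\to z\kappa_1$ and $a_{12}\to p\,a_{12}$), which is why the shifted $R$-matrix appears here, whereas in the first identity the chamber-change relation was applied at unshifted arguments.

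The main obstacle is bookkeeping rather than a new idea: one must check that the theta factors generated by $\Gamma(p a_1/x;q_2^{-1},p)$, $\theta(p a_1/x;q_2^{-1})$, $\theta(z\kappa_1 K_1 a_1/x;q_2^{-1})$ and their analogs reassemble exactly into the entries of the shifted $2\times2$ stable-envelope matrices of Proposition~\ref{prop:R-matrix}, and that the zero weight condition $K_{12}=q_1^2$ is used consistently when matching the $z$-shifts across the two vacua. Since the whole computation stays within the explicit $2\times2$ matrices and the explicit integrals \eqref{eq:TP1_block12_elliptic}--\eqref{eq:TP1_block21_elliptic}, no analyticity issue beyond those already assumed ($|zK_{12}|<1$, $|p|<1$) arises; only the consistent tracking of the $\kappa_\alpha$-, $p$-, and $K_\alpha$-dependence requires care.
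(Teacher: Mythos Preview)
Your proposal is correct and follows essentially the same route as the paper. The paper does not give an explicit proof environment for this Proposition; it is presented as an immediate consequence of the preceding quasi-periodicity relation $\Psi_{\mathscr{C}_{12}}\big|_{a_2\to pa_2}=\Psi_{\mathscr{C}_{21}}\big|_{z\to z\kappa_2^{-1}}$ (under the zero weight condition) combined with the chamber-change relation of Proposition~\ref{prop:R-matrix}, and the second identity is dismissed with ``Similarly''---exactly the two ingredients you chain together, including the $1\leftrightarrow2$ analog of the quasi-periodicity and the observation that the shift must act on the $R$-matrix in the second case.
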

This is the simplest case of the elliptic $q$-KZ equation.
Let us discuss a general situation as follows.


\begin{lemma}\label{lemma:n_gen}
Let $\Psi_{\mathscr{C}_{1\ldots n}}$ be the elliptic conformal block in the chamber $\mathscr{C}_{1\ldots n}$.
Then, under the zero weight condition $K_1 \cdots K_n = q_1^{2k}$, the following holds,
\begin{align}
    \Psi_{\mathscr{C}_{1\ldots n}}^{\underline{k}}\Big|_{a_n \to p a_n} = \Psi_{\mathscr{C}_{n1\ldots n-1}}^{\underline{k}}\Big|_{z \to z \kappa_n^{-1}}
    \, ,
\end{align}
where $\kappa_n$ is the weight factor defined in \eqref{eq:weight_fac}.
\end{lemma}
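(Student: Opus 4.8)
The plan is to reduce the general $n$-point statement to a single ``shift past one vertex operator'' computation, exactly as in the two-point case already worked out in the excerpt. First I would write the elliptic conformal block as a trace $\tr[p^d \mathsf{Z}^{(k_1)}_{K_1}(a_1)\cdots \mathsf{Z}^{(k_n)}_{K_n}(a_n)]$ divided by the trace of the unscreened operators, and observe that the substitution $a_n \to p a_n$ can be absorbed by the cyclic property of the trace together with the conjugation action of $p^d$. Concretely, $p^d \mathsf{Z}^{(k_n)}_{K_n}(a_n) p^{-d} = \mathsf{Z}^{(k_n)}_{K_n}(p\, a_n)$ up to the standard scaling of the screening-variable contour (which is what the shift-invariant contour $\gamma$ in Definition~\ref{def:Z_screened} is designed to accommodate), so that
\begin{align}
    \tr\!\left[ p^d \mathsf{Z}^{(k_1)}_{K_1}(a_1)\cdots \mathsf{Z}^{(k_{n-1})}_{K_{n-1}}(a_{n-1}) \mathsf{Z}^{(k_n)}_{K_n}(p\,a_n) \right]
    = \tr\!\left[ p^d \mathsf{Z}^{(k_n)}_{K_n}(a_n) \mathsf{Z}^{(k_1)}_{K_1}(a_1)\cdots \mathsf{Z}^{(k_{n-1})}_{K_{n-1}}(a_{n-1}) \right].
\end{align}
The right-hand side is, up to the zero-mode bookkeeping, the elliptic block in the chamber $\mathscr{C}_{n1\ldots n-1}$ where $\mathsf{Z}_{K_n}$ has been moved to the leftmost (outermost) position.

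The second step is to track the dynamical parameter $z$ through this cyclic move. The operator $\mathsf{Z}^{(k_n)}_{K_n}(a_n)$ carries the zero-mode weight $K_n q_1^{-2k_n}=\kappa_n$ by \eqref{eq:Zz_weight}, so commuting it past all the other vertex operators (equivalently, past the product $\prod_{\alpha\ne n}\mathsf{Z}^{(k_\alpha)}_{K_\alpha}$, whose total weight, under the zero weight condition $K_1\cdots K_n = q_1^{2k}$, is $\kappa_n^{-1}$) multiplies $z$ by $\kappa_n^{-1}$ via the commutation relations \eqref{eq:OPE_z}. The normalization by the unscreened trace removes any $\underline{k}$-independent ratio, so the net effect on $\Psi^{\underline{k}}$ is precisely $z \to z\kappa_n^{-1}$. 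Here the zero weight condition is exactly what guarantees that the overall weight of the full string of operators is trivial, which is needed for the trace to be nonzero and for the $p^d$-conjugation to act cleanly; this is the elliptic analogue of the two-point argument using $K_{12}=q_1^2$ in the excerpt.

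The last step is to check that the $p^d$-conjugation indeed reproduces the contour prescription defining $\Psi_{\mathscr{C}_{n1\ldots n-1}}$ rather than some deformed contour: this is where the main subtlety lies. One must verify that shifting $a_n\to pa_n$ and simultaneously relabeling the radial ordering (so that $a_n$ becomes the largest modulus) leaves the screening contour $\gamma$ in the homology class stipulated in Definition~\ref{def:Z_screened}; the invariance of $\gamma$ under $x_i \to q_2^{-1}x_i$ and the elliptic quasi-periodicity of $\varphi_{2,K}$ and of the elliptic principal part $\Phi_2$ are what make this work, but one should confirm no residues are picked up or lost when $a_n$ crosses the other $a_\alpha$ in modulus — i.e., that the only poles contributing remain those of the form \eqref{eq:pole_ell} and the triangularity of the stable envelope suppresses the rest. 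I expect this contour/pole-crossing bookkeeping to be the main obstacle; the algebra of the OPEs and the weight tracking are routine given Definition~\ref{def:OPEs} and \eqref{eq:Zz_weight}.
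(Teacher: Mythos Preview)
Your approach via trace cyclicity is genuinely different from the paper's. The paper does not invoke cyclicity of the trace or the conjugation $p^{d}\,\mathsf{Z}(a)\,p^{-d}=\mathsf{Z}(pa)$ at all; it works entirely at the level of the explicit integral formula. Starting from the elliptic block written with $\Gamma(\,\cdot\,;q_2^{-1},p)$ in the principal part and theta functions in the stable envelope, the paper substitutes $a_n\to p\,a_n$, uses the functional relation $\Gamma(pz)/\Gamma(z)=\theta(z;q_2^{-1})$ to convert the shifted gammas into extra theta factors, then changes integration variables $x_i\to p\,x_i$ for precisely the $k_n$ screening variables attached to the $n$-th vertex, and finally cyclically relabels $(x_1,\ldots,x_k)\mapsto(x_{k_n+1},\ldots,x_k,x_1,\ldots,x_{k_n})$. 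After this, the integrand is matched term by term with that of $\Psi_{\mathscr{C}_{n1\ldots n-1}}^{\underline{k}}$ at $z\to z\kappa_n^{-1}$, the zero weight condition $K_{1\ldots n}=q_1^{2k}$ being used only at the last step to align the $q_1$-powers in the theta arguments.

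Your route is more structural and closer in spirit to the standard vertex-operator-trace derivation of $q$-KZ; it explains \emph{why} the identity should hold without unpacking the stable envelope. Its cost is exactly what you already flag: the contour compatibility and the precise $z$-shift must still be verified, and in practice that verification amounts to the paper's computation (the change $x_i\to p\,x_i$ \emph{is} the contour check, and the relabeling is what produces the shift in $\rho_i$ and $K_i^c$, hence in $z$). Your phrasing of the $z$-shift as ``commuting past the other operators'' is slightly loose---cyclicity is not commutation---but the conclusion is correct since the zero-mode sector is abelian. The paper's approach is more pedestrian but fully self-contained; yours would be cleaner if the contour issue could be dispatched abstractly rather than by the explicit change of variables.
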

\begin{proof}
Denoting $\Gamma(\cdot) = \Gamma(\cdot;q_2^{-1},p)$ and $\theta(\cdot) = \theta(\cdot,q_2^{-1})$, the elliptic conformal block associated with the configuration $\underline{k}=(k_1,\ldots,k_n)$ is given by
\begin{align}
    \Psi_{\mathscr{C}_{1\ldots n}}^{\underline{k}}
    & = \frac{1}{k!} \oint_\gamma [\dd{\underline{x}}] \prod_{i \neq j}^k \frac{\Gamma(q_1 x_{ij})}{\Gamma(x_{ij})} \prod_{\substack{i=1,\ldots,k \\ \alpha=1,\ldots,n}} \frac{\Gamma( a_\alpha/x_i)}{\Gamma(K_\alpha a_\alpha/x_i)} 
    \nonumber \\
    & \qquad \times
    \operatorname*{Sym}_{\underline{x}}
    \prod_{i<j}^k \frac{\theta(q_1 x_{ji})}{\theta(x_{ji})}
    \prod_{i=1}^k \left[ \left( \prod_{\alpha < c(i)} \frac{\theta(a_\alpha/x_i)}{\theta(K_\alpha a_\alpha / x_i)} \right) \frac{\theta(z q_1^{-2\rho_i} K_{i}^c a_{c(i)}/x_i) \theta(K_{c(i)})}{\theta(K_{c(i)} a_{c(i)} / x_i) \theta(z q_1^{-2\rho_i} K_i^c)} \right]
    \, .
\end{align}
Then, recalling $K_i^c = K_1 K_2 \cdots K_{c(i)} = K_{1\ldots c(i)}$, we have
\begin{align}
    \Psi_{\mathscr{C}_{1\ldots n}}^{\underline{k}}\Big|_{a_n \to p a_n} 
    & = \frac{1}{k!} \oint_\gamma [\dd{\underline{x}}] \prod_{i \neq j}^k \frac{\Gamma(q_1 x_{ij})}{\Gamma(x_{ij})} \prod_{\substack{i=1,\ldots,k \\ \alpha=1,\ldots,n}} \frac{\Gamma( a_\alpha/x_i)}{\Gamma(K_\alpha a_\alpha/x_i)} \operatorname*{Sym}_{\underline{x}} \prod_{i<j}^k \frac{\theta(q_1 x_{ji})}{\theta(x_{ji})}
    \nonumber \\
    & \qquad \times
    \prod_{i=k_n+1}^k \left[ \frac{\theta(a_n/x_i)}{\theta(K_n a_n/x_i)} \left( \prod_{\alpha < c(i)} \frac{\theta(a_\alpha/x_i)}{\theta(K_\alpha a_\alpha / x_i)} \right) \frac{\theta(z q_1^{-2\rho_i} K_{1\ldots c(i)} a_{c(i)}/x_i) \theta(K_{c(i)})}{\theta(K_{c(i)} a_{c(i)} / x_i) \theta(z q_1^{-2\rho_i} K_{1\ldots c(i)})} \right]
    \nonumber \\
    & \qquad \times \prod_{i=1}^{k_n} \left[ \left( \prod_{\alpha = 1}^n \frac{\theta(a_\alpha/x_i)}{\theta(K_\alpha a_\alpha / x_i)} \right) \frac{\theta(p z q_1^{-2\rho_i} K_{1\ldots n} a_{n}/x_i) \theta(K_{n})}{\theta(p K_{n} a_{n} / x_i) \theta(z q_1^{-2\rho_i} K_{1\ldots n})} \right]
    \nonumber \\
    & \stackrel{\substack{x_i \to p x_i \\ i=1,\ldots,k_n}}{=} \frac{1}{k!} \oint_\gamma [\dd{\underline{x}}] \prod_{i \neq j}^k \frac{\Gamma(q_1 x_{ij})}{\Gamma(x_{ij})} \prod_{\substack{i=1,\ldots,k \\ \alpha=1,\ldots,n}} \frac{\Gamma( a_\alpha/x_i)}{\Gamma(K_\alpha a_\alpha/x_i)} \operatorname*{Sym}_{\underline{x}} \prod_{1 \le i<j \le k}' \frac{\theta(q_1 x_{ji})}{\theta(x_{ji})}
    \nonumber \\
    & \qquad \times
    \prod_{i=k_n+1}^k \left[ \frac{\theta(a_n/x_i)}{\theta(K_n a_n/x_i)} \left( \prod_{\alpha < c(i)} \frac{\theta(a_\alpha/x_i)}{\theta(K_\alpha a_\alpha / x_i)} \right) \frac{\theta(z q_1^{-2\rho_i} K_{1\ldots c(i)} a_{c(i)}/x_i) \theta(K_{c(i)})}{\theta(K_{c(i)} a_{c(i)} / x_i) \theta(z q_1^{-2\rho_i} K_{1\ldots c(i)})} \right]
    \nonumber \\
    & \qquad \times 
    \prod_{i=1}^{k_n}
    \left[ \frac{\theta(z q_1^{-2\rho_i} K_{1\ldots n} a_{n}/x_i) \theta(K_{n})}{\theta(K_{n} a_{n} / x_i) \theta(z q_1^{-2\rho_i} K_{1\ldots n})} \right]
    \, ,
\end{align}
where we write
\begin{align}
    \prod_{1 \le i<j \le k}' \frac{\theta(q_1 x_{ji})}{\theta(x_{ji})} = \prod_{1 \le i \le k_n < j \le k} \frac{\theta(q_1 x_{ij})}{\theta(x_{ij})} \prod_{1 \le i<j \le k_n} \frac{\theta(q_1 x_{ji})}{\theta(x_{ji})} \prod_{k_n + 1 \le i<j \le k} \frac{\theta(q_1 x_{ji})}{\theta(x_{ji})}
    \, .
\end{align}
Moreover, due to the symmetrization, we may change the ordering of the $x$-variables, $(x_1, \ldots, x_k) \mapsto (x_{k_n+1},\ldots,x_k,x_1,\ldots,x_{k_n})$, to compute
\begin{align}
    \Psi_{\mathscr{C}_{1\ldots n}}^{\underline{k}}\Big|_{a_n \to p a_n} 
    & = \frac{1}{k!} \oint_\gamma [\dd{\underline{x}}] \prod_{i \neq j}^k \frac{\Gamma(q_1 x_{ij})}{\Gamma(x_{ij})} \prod_{\substack{i=1,\ldots,k \\ \alpha=1,\ldots,n}} \frac{\Gamma( a_\alpha/x_i)}{\Gamma(K_\alpha a_\alpha/x_i)} \operatorname*{Sym}_{\underline{x}} \prod_{1 \le i<j \le k} \frac{\theta(q_1 x_{ji})}{\theta(x_{ji})}
    \nonumber \\
    & \qquad \times
    \prod_{i=1}^{k-k_n} \left[ \frac{\theta(a_n/x_i)}{\theta(K_n a_n/x_i)} \left( \prod_{\alpha < c(i)} \frac{\theta(a_\alpha/x_i)}{\theta(K_\alpha a_\alpha / x_i)} \right) \frac{\theta(z q_1^{-2\rho_{i+k_n}} K_{1\ldots c(i)} a_{c(i)}/x_i) \theta(K_{c(i)})}{\theta(K_{c(i)} a_{c(i)} / x_i) \theta(z q_1^{i+k_n-k} K_{1\ldots c(i)})} \right]
    \nonumber \\
    & \qquad \times 
    \prod_{i=k-k_n+1}^{k}
    \left[ \frac{\theta(z q_1^{-2\rho_{i+k_n-k}} K_{1\ldots n} a_{n}/x_i) \theta(K_{n})}{\theta(K_{n} a_{n} / x_i) \theta(z q_1^{i+k_n-2k} K_{1\ldots n})} \right]
    \, .
\end{align}
This is identified with $\Psi_{\mathscr{C}_{n1\ldots n-1}}^{\underline{k}}\Big|_{z \to z K_n^{-1} q_1^{k_n}}$ under the zero weight condition $K_{1\ldots n} = q_1^{2k}$.
\end{proof}

Set
\begin{align}
    \kappa_\alpha^c = \prod_{\alpha' < \alpha} \kappa_{\alpha'}
    \, .
\end{align}
Then, we have the dynamical elliptic $q$-KZ equation for the elliptic conformal block.    
\begin{theorem}\label{thm:e_qKZ}
    The elliptic conformal block obeys the dynamical elliptic $q$-KZ equation,
    \begin{align}
        \Psi_{\mathscr{C}_{1\ldots n}}\Big|_{\substack{z \to z \kappa_\alpha \\ a_\alpha \to p a_\alpha}} = R_{\alpha+1,\alpha}(z \kappa^c_{\alpha+1}) \cdots R_{n,\alpha} (z \kappa^c_{n})\Big|_{a_\alpha \to p a_\alpha} R_{1,\alpha} (z) \cdots R_{\alpha-1,\alpha} (z \kappa^c_{\alpha-1}) \Psi_{\mathscr{C}_{1\ldots n}}
        \, .
    \end{align}
    The zero weight condition implies the periodicity $\kappa_{\alpha}^c = \kappa_{\alpha+n}^c$ for $\alpha = 1,\ldots,n$.
\end{theorem}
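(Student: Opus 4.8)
The plan is to derive the equation by composing the chamber-change relations generated by the elliptic $R$-matrix (Proposition~\ref{prop:R-matrix}) with the cyclic shift identity of Lemma~\ref{lemma:n_gen}, tracking how the dynamical parameter $z$ is transported through each move. First I would upgrade Proposition~\ref{prop:R-matrix} to the $n$-point elliptic block: interchanging two neighbouring insertions $\mathsf{Z}^{(k_\beta)}_{K_\beta}(a_\beta)$ and $\mathsf{Z}^{(k_\gamma)}_{K_\gamma}(a_\gamma)$ is, by the OPE rewriting of Lemma~\ref{lemma:OPE_flip}, implemented by the elliptic $R$-matrix acting on the two corresponding tensor factors, and — through the zero mode relation \eqref{eq:OPE_z} together with the weight \eqref{eq:Zz_weight} — the dynamical argument of that $R$-matrix is $z$ times the product of the weight factors $\kappa_\bullet$ of the insertions lying on one fixed side of the transposed pair. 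Iterating these elementary transpositions, carrying $a_\alpha$ leftward from its home position to the first slot picks up the matrices $R_{1,\alpha},\ldots,R_{\alpha-1,\alpha}$ with the $\kappa^c$-shifted arguments predicted by this rule,
\begin{align}
    \Psi_{\mathscr{C}_{\alpha,1,\ldots,\alpha-1,\alpha+1,\ldots,n}} = R_{1,\alpha}(z)\, R_{2,\alpha}(z\kappa^c_2) \cdots R_{\alpha-1,\alpha}(z\kappa^c_{\alpha-1})\, \Psi_{\mathscr{C}_{1\ldots n}} \, ,
\end{align}
and likewise, carrying $a_\alpha$ leftward from the last slot back to its home position yields a relation of the same shape $\Psi_{\mathscr{C}_{1\ldots n}} = R_{\alpha+1,\alpha}(z\kappa_\alpha^{-1}\kappa^c_{\alpha+1})\cdots R_{n,\alpha}(z\kappa_\alpha^{-1}\kappa^c_n)\, \Psi_{\mathscr{C}_{1,\ldots,\alpha-1,\alpha+1,\ldots,n,\alpha}}$, whose arguments will be promoted to $z\kappa^c_{\alpha+1},\ldots,z\kappa^c_n$ once the $z\to z\kappa_\alpha$ substitution of the next step is applied.

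Second I would apply Lemma~\ref{lemma:n_gen}, rewritten — by solving it for the cyclically permuted block and then substituting $z \to z\kappa_\alpha$ — in the form $\Psi_{\mathscr{C}_{\alpha,1,\ldots,\alpha-1,\alpha+1,\ldots,n}} = \Psi_{\mathscr{C}_{1,\ldots,\alpha-1,\alpha+1,\ldots,n,\alpha}}\big|_{z \to z\kappa_\alpha,\, a_\alpha \to p a_\alpha}$, which moves $a_\alpha$ from the first slot to the last and trades that for the substitution $z \to z\kappa_\alpha$, $a_\alpha \to p a_\alpha$. Substituting this into the first $R$-matrix relation and eliminating the intermediate block by the second $R$-matrix relation (evaluated after $z\to z\kappa_\alpha$ and $a_\alpha\to pa_\alpha$, so that its matrices carry $a_\alpha\to pa_\alpha$ and acquire the arguments $z\kappa^c_{\alpha+1},\ldots,z\kappa^c_n$), the inverse on that side cancels and one is left precisely with the stated identity. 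The periodicity is immediate: the zero-weight condition gives $\prod_{\alpha=1}^n \kappa_\alpha = q_1^{2k}\, q_1^{-2k} = 1$, hence $\kappa^c_{\alpha+n} = \kappa^c_\alpha \prod_{\beta=1}^n \kappa_\beta = \kappa^c_\alpha$, so the $q$-KZ system descends to a flat connection on the elliptic curve of dynamical parameters.

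The hard part will be the dynamical bookkeeping: verifying that each elementary transposition $R$-matrix picks up exactly the argument dictated by the weight rule, so that the long ordered products assemble to $R_{1,\alpha}(z)\cdots R_{\alpha-1,\alpha}(z\kappa^c_{\alpha-1})$ and $R_{\alpha+1,\alpha}(z\kappa^c_{\alpha+1})\cdots R_{n,\alpha}(z\kappa^c_n)$ with no stray or missing weight factor, and that the general-$\alpha$ version of Lemma~\ref{lemma:n_gen} holds with precisely the shift $z \to z\kappa_\alpha^{-1}$. No new analytic input beyond the contour-shift manipulations already present in the proof of Lemma~\ref{lemma:n_gen} and the explicit $(n,k)=(2,1)$ computation should be needed, but composing the $\kappa^c$-shifts through a long product is delicate; I would first check the equation against the integral formula of the elliptic conformal block in the smallest genuinely new cases, $(n,k)=(3,1)$ and $(n,k)=(2,2)$, and note that the path-independence of transporting $a_\alpha$ around the insertions — hence the consistency of the resulting connection — is guaranteed by the dynamical Yang--Baxter equation satisfied by the elliptic $R$-matrix.
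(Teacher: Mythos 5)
Your proposal assembles the theorem exactly as the paper intends: the cyclic-shift identity of Lemma~\ref{lemma:n_gen} (with $z\to z\kappa_\alpha^{-1}$) combined with iterated adjacent chamber changes via the elliptic $R$-matrices of Proposition~\ref{prop:R-matrix}, whose dynamical arguments are fixed by the weight factors \eqref{eq:weight_fac} of the spectator insertions, and the periodicity from $\prod_\alpha\kappa_\alpha=1$ under the zero-weight condition. This is the same route the paper takes (it states the theorem directly after Lemma~\ref{lemma:n_gen} with no further proof), and your bookkeeping of the $\kappa^c$-shifted arguments is consistent with the stated equation.
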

This elliptic $q$-KZ solution was first constructed in the context of elliptic quantum group $U_{q,p}(\widehat{\mathfrak{sl}}_2)$~\cite{Foda:1993fg,Felder:1994pb,Konno:2017mos}.
We emphasize that the conformal block of the W-algebra $\mathrm{W}_{q_{1,2}}(A_1)$ itself satisfies the $q$-KZ equation, directly establishing the quantum $q$-Langlands correspondence between the electric block and the magnetic block.
Moreover, we do not need to insert an additional stable envelope for the elliptic setup compared with the trigonometric situation~\cite{Aganagic:2016jmx}.

\subsubsection{Trigonometric conformal block}

We denote the plethystic exponential by $\operatorname{PE}$.
We write $(\cdot)_\infty = (\cdot;q_2^{-1})_\infty$ and $[\cdot] = \operatorname{PE}[\cdot]^{-1}$.
We consider the trigonometric conformal block associated with the configuration $\underline{k}=(k_1,\ldots,k_n)$ given by
\begin{align}
    \Psi_{\mathscr{C}_{1\ldots n}}^{\underline{k}}
    & = \frac{1}{k!} \exp\left( \frac{\log z}{\log q_2^{-1}} \sum_{\alpha=1}^n \frac{\mu_\alpha}{2} \log a_\alpha \right) \oint_\gamma [\dd{\underline{x}}] \prod_{i=1}^k x_i^{-\frac{\log z}{\log q_2^{-1}}} \prod_{i \neq j}^k \frac{(x_{ij})_\infty}{(q_1 x_{ij})_\infty} \prod_{\substack{i=1,\ldots,k \\ \alpha=1,\ldots,n}} \frac{(K_\alpha a_\alpha/x_i)_\infty}{( a_\alpha/x_i)_\infty} 
    \nonumber \\
    & \qquad \times
    \operatorname*{Sym}_{\underline{x}} \left[
    \prod_{i<j}^k \frac{[q_1 x_{ji}]}{[x_{ji}]} 
    \prod_{i=1}^k \prod_{\alpha < c(i)} \frac{[a_\alpha/x_i]}{[K_\alpha a_\alpha / x_i]} \right]
    \, .
\end{align}

\begin{lemma}
    The trigonometric version of Lemma~\ref{lemma:n_gen} holds,
\begin{align}
    \Psi^{\underline{k}}_{\mathscr{C}_{1\ldots n}}\Big|_{a_n \to q_2^{-1} a_n} = z^{\frac{\mu_n}{2}-k_n} \Psi^{\underline{k}}_{\mathscr{C}_{n1\ldots n-1}}
    \, ,
\end{align}    
    where the factor $\frac{\mu_n}{2}-k_n$ corresponds to the weight associated with the $n$-th operator (see also~\eqref{eq:weight_fac}).
\end{lemma}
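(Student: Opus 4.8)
The plan is to run the argument exactly parallel to the proof of Lemma~\ref{lemma:n_gen}, the only genuinely new feature being the bookkeeping of the $z$-monomial prefactor. As there, I would first discard the overall $1/k!$ together with $\operatorname*{Sym}_{\underline x}$: the non-symmetric part of the integrand is integrated against the $\mathfrak S_k$-invariant measure and against the $\mathfrak S_k$-invariant factors $\prod_i x_i^{-\log z/\log q_2^{-1}}$, $\prod_{i\neq j}(x_{ij})_\infty/(q_1 x_{ij})_\infty$ and $\prod_{i,\alpha}(K_\alpha a_\alpha/x_i)_\infty/(a_\alpha/x_i)_\infty$, so the symmetrization is cosmetic and one may work with the desymmetrized integrand. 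Throughout, recall $[\,y\,]=\operatorname{PE}[y]^{-1}=1-y$ and write $S=\{1,\dots,k_n\}$ for the colour-$n$ integration variables (so $i\in S$ iff $c(i)=n$), consistent with the convention $x_{\alpha,i}=x_{i+\sum_{\beta>\alpha}k_\beta}$.

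The two moves are: substitute $a_n\mapsto q_2^{-1}a_n$ in the whole integrand, then change variables $x_i\mapsto q_2^{-1}x_i$ for $i\in S$ only --- legitimate because $\gamma$ is by construction invariant under precisely these shifts (Definition~\ref{def:Z_screened}). The prefactor $\exp\!\big(\tfrac{\log z}{\log q_2^{-1}}\sum_\alpha\tfrac{\mu_\alpha}{2}\log a_\alpha\big)$ contributes $z^{\mu_n/2}$ from its $\alpha=n$ term, and each of the $k_n$ factors $x_i^{-\log z/\log q_2^{-1}}$ with $i\in S$ picks up $(q_2^{-1})^{-\log z/\log q_2^{-1}}=z^{-1}$, for a total $z^{-k_n}$; together these produce the asserted $z^{\mu_n/2-k_n}$, which is the weight $\tfrac{\mu_n}{2}-k_n$ of the $n$-th operator in \eqref{eq:weight_fac}.

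It then remains to verify that, after these two moves and the relabelling $(x_1,\dots,x_k)\mapsto(x_{k_n+1},\dots,x_k,x_1,\dots,x_{k_n})$ used in Lemma~\ref{lemma:n_gen}, the rest of the integrand is literally that of $\Psi^{\underline k}_{\mathscr C_{n1\ldots n-1}}$. Using $(q_2^{-1}y;q_2^{-1})_\infty=(y;q_2^{-1})_\infty/(1-y)$ and $(q_2 y;q_2^{-1})_\infty=(1-q_2 y)(y;q_2^{-1})_\infty$, I would collect the residual factors leg by leg. For $i\notin S$ the only change in the matter part is the extra $[a_n/x_i]/[K_n a_n/x_i]$ produced by the $\alpha=n$ factor, which is precisely what the new coloring $c'$ of $\mathscr C_{n1\ldots n-1}$ prepends once $a_n$ becomes the leading leg. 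For $i\in S$ the $\alpha=n$ matter factor is invariant (the two shifts cancel), while the $\alpha<n$ matter factors yield $\prod_{\alpha<n}(1-K_\alpha q_2 a_\alpha/x_i)/(1-q_2 a_\alpha/x_i)$, which is exactly the reciprocal of the shift of $\prod_{\alpha<c(i)}[a_\alpha/x_i]/[K_\alpha a_\alpha/x_i]$; they cancel, so the colour-$n$ variables end up with no boundary factor at all --- matching the empty product $\prod_{\alpha<c'(i)}$ carried by the leading leg in $\mathscr C_{n1\ldots n-1}$.

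The delicate point, and the one computation I would actually write out in full, is the mixed $S$--$S^c$ factors. For an unordered pair $\{u\in S^c,v\in S\}$ the shift turns the two ordered factors of $\prod_{i\neq j}(x_{ij})_\infty/(q_1 x_{ij})_\infty$ into $(q_2^{-1}v/u;q_2^{-1})_\infty/(q_1 q_2^{-1}v/u;q_2^{-1})_\infty$ and $(q_2 u/v;q_2^{-1})_\infty/(q_1 q_2 u/v;q_2^{-1})_\infty$, and turns the relevant $[q_1 x_{ji}]/[x_{ji}]$ factor into $(1-q_1 q_2 u/v)/(1-q_2 u/v)$; one must check that the product of these three equals $\dfrac{(v/u;q_2^{-1})_\infty}{(q_1 v/u;q_2^{-1})_\infty}\dfrac{(u/v;q_2^{-1})_\infty}{(q_1 u/v;q_2^{-1})_\infty}\dfrac{1-q_1 v/u}{1-v/u}$, which is exactly what the reordered coloring of $\mathscr C_{n1\ldots n-1}$ demands. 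This drops out immediately from the two $q$-factorial identities above --- the spurious $(1-q_2 u/v)$ and $(1-q_1 q_2 u/v)$ telescope against the $[q_1 x_{ji}]/[x_{ji}]$ shift --- but it is precisely the place where the interplay of the ``$\Phi$-part'' shift with the stable-envelope shift has to be handled with care, analogous to the passage to the primed product in the proof of Lemma~\ref{lemma:n_gen}. Granting this, the reordering converts $c$ into $c'$, the prefactor and the power-law factors match after permuting the $\alpha$-labels, and the asserted identity follows.
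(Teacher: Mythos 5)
Your proof is correct and follows essentially the same route as the paper's: shift $a_n \to q_2^{-1}a_n$, then $x_i \to q_2^{-1}x_i$ for the colour-$n$ variables, read off $z^{\mu_n/2-k_n}$ from the exponential prefactor and the power-law factors, and cyclically relabel to land on $\Psi_{\mathscr{C}_{n1\ldots n-1}}$. The mixed-pair identity you single out and verify is precisely the paper's passage to the primed product $\prod_{1\le i<j\le k}'[q_1 x_{ji}]/[x_{ji}]$, and your remaining cancellations (the $\alpha=n$ leg prepended for $i\notin S$, the empty boundary product for $i\in S$) match the paper's intermediate expressions exactly, with only cosmetic differences in how the symmetrization and the order of the two shifts are handled.
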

\begin{proof}
We have
\begin{align}
    & \Psi_{\mathscr{C}_{1\ldots n}}^{\underline{k}}\Big|_{a_n \to q_2^{-1} a_n}
    \nonumber \\
    & = \frac{z^{\frac{\mu_\alpha}{2}}}{k!} \exp\left( \frac{\log z}{\log q_2^{-1}} \sum_{\alpha=1}^n \frac{\mu_\alpha}{2} \log a_\alpha \right) \oint_\gamma \prod_{i=1}^k \frac{\dd{x}_i}{2 \pi \ii x_i} x_i^{-\frac{\log z}{\log q_2^{-1}}} \prod_{i \neq j}^k \frac{(x_{ij})_\infty}{(q_1 x_{ij})_\infty} \prod_{\substack{i=1,\ldots,k \\ \alpha=1,\ldots,n}} \frac{(K_\alpha a_\alpha/x_i)_\infty}{( a_\alpha/x_i)_\infty} 
    \nonumber \\
    & \qquad \times
    \operatorname*{Sym}_{\underline{x}}
    \prod_{i<j}^k \frac{[q_1 x_{ji}]}{[x_{ji}]} 
    \prod_{i=1}^k \left[ \frac{[a_n/x_i]}{[K_n a_n / x_i]} \prod_{\alpha < c(i)} \frac{[a_\alpha/x_i]}{[K_\alpha a_\alpha / x_i]} \right]
    \nonumber \\
    & \stackrel{\substack{x_i \to q_2^{-1} x_i \\ i=1,\ldots,k_n}}{=} \frac{z^{\frac{\mu_\alpha}{2}-k_n}}{k!} \exp\left( \frac{\log z}{\log q_2^{-1}} \sum_{\alpha=1}^n \frac{\mu_\alpha}{2} \log a_\alpha \right) \oint_\gamma \prod_{i=1}^k \frac{\dd{x}_i}{2 \pi \ii x_i} x_i^{-\frac{\log z}{\log q_2^{-1}}} \prod_{i \neq j}^k \frac{(x_{ij})_\infty}{(q_1 x_{ij})_\infty} \prod_{\substack{i=1,\ldots,k \\ \alpha=1,\ldots,n}} \frac{(K_\alpha a_\alpha/x_i)_\infty}{( a_\alpha/x_i)_\infty}
    \nonumber \\
    & \hspace{10em} \times 
    \operatorname*{Sym}_{\underline{x}}
    \prod_{1 \le i<j \le k}' \frac{[q_1 x_{ji}]}{[x_{ji}]} 
    \prod_{i=k_n+1}^k \left[ \frac{[a_n/x_i]}{[K_n a_n / x_i]} \prod_{\alpha < c(i)} \frac{[a_\alpha/x_i]}{[K_\alpha a_\alpha / x_i]} \right]
    \, ,
\end{align}
where we write
\begin{align}
    \prod_{1 \le i<j \le k}' \frac{[q_1 x_{ji}]}{[x_{ji}]} = \prod_{1 \le i \ge k_n < j \ge k} \frac{[q_1 x_{ij}]}{[x_{ij}]} \prod_{1 \le i < j \le k_n} \frac{[q_1 x_{ji}]}{[x_{ji}]} \prod_{k_n+1 \le i < j \le k} \frac{[q_1 x_{ji}]}{[x_{ji}]}
    \, .
\end{align}
Changing the ordering of the $x$-variables, $(x_1,\ldots,x_k) \to (x_{k_n+1},\ldots,x_k,x_1,\ldots,x_{k_n})$ (the stable envelope is invariant under this change due to the symmetrization), we obtain the result.
\end{proof}

\begin{proposition}\label{prop:t_qKZ}
    Let $h_\alpha$ be an operator that counts the weight of the $\alpha$-th operator.
    Then, we have the $q$-KZ equation,
    \begin{align}
        \Psi_{\mathscr{C}_{1\ldots n}}\Big|_{a_\alpha \to q_2^{-1} a_\alpha} = R_{\alpha+1,\alpha} (q_2 a_{\alpha+1} / a_\alpha) \cdots R_{n,\alpha} (q_2 a_n / a_\alpha) z^{h_n} R_{1,\alpha} (a_1/a_\alpha) \cdots R_{\alpha-1,\alpha}(a_{\alpha - 1}/a_\alpha) \Psi_{\mathscr{C}_{1\ldots n}}
        \, .
    \end{align}
\end{proposition}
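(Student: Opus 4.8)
The plan is to obtain this as the trigonometric degeneration ($p\to 0$) of the elliptic $q$-KZ equation of Theorem~\ref{thm:e_qKZ}, reusing its two structural ingredients. The first is the trigonometric shift lemma proved just above, $\Psi^{\underline{k}}_{\mathscr{C}_{1\ldots n}}|_{a_n\to q_2^{-1}a_n}=z^{\mu_n/2-k_n}\,\Psi^{\underline{k}}_{\mathscr{C}_{n1\ldots n-1}}$, which cyclically rotates the last operator to the front at the cost of a power of $z$ equal to its weight. The second is the $n$-point generalization of the change-of-chamber relation of Proposition~\ref{prop:R-matrix}: interchanging two adjacent labels in the chamber order, $\mathscr{C}_{\ldots\,\alpha\,\beta\,\ldots}\leftrightarrow\mathscr{C}_{\ldots\,\beta\,\alpha\,\ldots}$, is implemented by the two-body $R$-matrix acting on the $\alpha$-th and $\beta$-th tensor factors, and iterating such moves realizes an arbitrary permutation of the order --- consistently, by the Yang--Baxter equation together with the unitarity $R_{12}=R_{21}^{-1}$. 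The whole argument is then the bookkeeping that rewrites a shift $a_\alpha\to q_2^{-1}a_\alpha$ of an arbitrary slot back in terms of $\Psi_{\mathscr{C}_{1\ldots n}}$ itself.

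Concretely, I would first transport $\Psi_{\mathscr{C}_{1\ldots n}}$ to the chamber $\mathscr{C}_{1\ldots\widehat{\alpha}\ldots n\,\alpha}$, in which $a_\alpha$ occupies the last slot, by sliding the index $\alpha$ rightward past $\alpha+1,\ldots,n$; this exhibits $\Psi_{\mathscr{C}_{1\ldots n}}$ as the ordered product of the two-body matrices $R_{j,\alpha}$, $j=\alpha+1,\ldots,n$, with spectral parameters $a_j/a_\alpha$ read off as in Proposition~\ref{prop:R-matrix}, applied to $\Psi_{\mathscr{C}_{1\ldots\widehat{\alpha}\ldots n\,\alpha}}$. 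Performing the substitution $a_\alpha\to q_2^{-1}a_\alpha$ then turns these arguments into $q_2 a_j/a_\alpha$ --- this is the left group $R_{\alpha+1,\alpha}(q_2 a_{\alpha+1}/a_\alpha)\cdots R_{n,\alpha}(q_2 a_n/a_\alpha)$ of the assertion --- while on $\Psi_{\mathscr{C}_{1\ldots\widehat{\alpha}\ldots n\,\alpha}}$ the shifted variable sits last, so the shift lemma applies. Its proof (symmetrization over $\underline{x}$, the relabelling $x_i\mapsto q_2^{-1}x_i$ of the color-$\alpha$ variables, and the attendant reshuffling of the chamber data $\rho_i$, $K_i^c$) goes through verbatim in any chamber whose last operator is the one being shifted, producing $z^{\mu_\alpha/2-k_\alpha}\,\Psi_{\mathscr{C}_{\alpha\,1\ldots\widehat{\alpha}\ldots n}}$.

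It then remains to transport the rotated block $\Psi_{\mathscr{C}_{\alpha\,1\ldots\widehat{\alpha}\ldots n}}$ back to the reference chamber: sliding $\alpha$ leftward from the front past $1,\ldots,\alpha-1$ yields the right group $R_{1,\alpha}(a_1/a_\alpha)\cdots R_{\alpha-1,\alpha}(a_{\alpha-1}/a_\alpha)$ acting on $\Psi_{\mathscr{C}_{1\ldots n}}$, now with no factor of $q_2$ since the substitution has already been absorbed. Composing the three steps, and recognising that multiplying each graded component of the $\alpha$-th tensor factor by $z^{\mu_\alpha/2-k_\alpha}$ is exactly the action of $z^{h_\alpha}$ --- consistent with \eqref{eq:Zz_weight} and \eqref{eq:weight_fac} --- one reads off the claimed equation, with the twist $z^{h_\alpha}$ sandwiched between the two groups of $R$-matrices.

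The main obstacle --- essentially the only nontrivial point, the rest being limits of computations already carried out in this section --- is the precise bookkeeping: establishing the $n$-point change-of-chamber relation in the first place; checking that the extra factor of $q_2$ decorates exactly the $R$-matrices with an index $>\alpha$ and not those of the right group, which is forced by the radial-ordering convention $x\succ x'$ of Definition~\ref{def:OPEs} together with $|q_2^{-1}|<1$; verifying that the $z$-twist emitted by the shift lemma is $z^{h_\alpha}$ acting on the $\alpha$-th factor rather than a power of $z$ elsewhere; and matching the order of the $R$-matrices within each group to the order of the elementary slides. A useful global check is that the whole identity is the $p\to 0$ degeneration of Theorem~\ref{thm:e_qKZ}, the dynamical shifts $z\to z\kappa^c_\bullet$ of the elliptic $R$-matrices collapsing in that limit onto the single trigonometric twist $z^{h_\alpha}$.
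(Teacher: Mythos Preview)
Your proposal is correct and follows exactly the route the paper intends. In fact the paper gives no explicit proof of Proposition~\ref{prop:t_qKZ}: it is stated immediately after the trigonometric shift lemma and is meant to be read in parallel with Theorem~\ref{thm:e_qKZ}, whose proof is also left implicit once Lemma~\ref{lemma:n_gen} is established. Your three-step argument (transport $\alpha$ to the last slot by adjacent $R$-matrices, apply the shift lemma there, transport back) is precisely the standard derivation of the $q$-KZ equation from a cyclic relation plus a two-body exchange relation, and it is the argument the paper is invoking by analogy with the elliptic case.

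Two minor remarks. First, you correctly identify the twist as $z^{h_\alpha}$, the weight of the $\alpha$-th operator; the $z^{h_n}$ appearing in the displayed equation is a typo in the paper (indeed the shift lemma produces $z^{\mu_\alpha/2-k_\alpha}$, not $z^{\mu_n/2-k_n}$, once $\alpha$ has been moved to the last position). Second, your framing of the result as the $p\to 0$ limit of Theorem~\ref{thm:e_qKZ} is a useful consistency check but not the paper's logic here: the trigonometric case is treated directly via its own shift lemma, and the point of the remark following the proposition is precisely that the dynamical $z$-shifts in the elliptic $R$-matrices disappear in the trigonometric degeneration, leaving only the single global twist. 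Your derivation already reflects this, since the trigonometric off-shell envelope has no $z$-dependence and hence the $R$-matrices built from it carry no dynamical argument.
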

Compared with the elliptic case (Theorem~\ref{thm:e_qKZ}), there is no dependence of the dynamical parameter $z$ in the trigonometric $q$-KZ equation except for the factor $z^{h_n}$.

\subsection{Double screening current}\label{sec:double_screening_A1}

We now consider the vertex operator screened by both types of the screening currents $\mathsf{S}_{1,2}$.
We assume $|q_{1,2}|>1$.
\begin{definition}
    We define the $(m_1|m_2)$-screened vertex operator,
    \begin{align}
        \mathsf{Z}_K^{(m_1|m_2)}(a) = \oint_\gamma 
        [\dd{\underline{x}}] \prod_{\sigma \in \underline{\mathbf{2}}}
        \left( \prod_{i=1,\ldots,m_\sigma}^{\curvearrowleft} \varphi_{\sigma,K}(a/x_{\sigma,i},zK) \mathsf{S}_\sigma(x_{\sigma,i}) \right) \mathsf{Z}_K(a)
        \, , \qquad 
        [\dd{\underline{x}}] = \prod_{\sigma \in \underline{\mathbf{2}}} \prod_{i=1}^{m_\sigma} \frac{\dd{x}_{\sigma,i}}{2 \pi \ii x_{\sigma,i}} 
        \, .
    \end{align}
    We apply the ordered product under the radial ordering configuration,
    \begin{align}
        \begin{cases}
            x_{1,m_1} \succ x_{1,m_1-1} \succ \cdots \succ x_{1,2} \succ x_{1,1} \succ a \\
            x_{2,m_2} \succ x_{2,m_2-1} \succ \cdots \succ x_{2,2} \succ x_{2,1} \succ a
        \end{cases}
    \end{align}
    Since $\mathsf{S}_1$ and $\mathsf{S}_2$ commute, we do not need to specify the ordering between the $x_1$- and $x_2$-variables.
\end{definition}

Let $(\underline{k}_1|\underline{k}_2) = (k_{1,1},k_{1,2},\ldots,k_{1,n} \mid k_{2,1},k_{2,2},\ldots,k_{2,n})$.
We write $\bar{\sigma} = \underline{\mathbf{2}} \backslash \sigma$ for $\sigma \in \underline{\mathbf{2}}$.
Then we have the conformal block of the screened vertex operators in the chamber denoted by $\mathscr{C}_{12\ldots n}: a_1 \succ a_2 \succ \cdots \succ a_n$,
\begin{align}
    & \Psi_{\mathscr{C}_{1\ldots n}}^{(\underline{k}_1|\underline{k}_2)}(\underline{a};\underline{K}) = \bra{0} \mathsf{Z}^{(k_{1,1}|k_{2,1})}_{K_1}(a_1) \mathsf{Z}^{(k_{1,2}|k_{2,2})}_{K_2}(a_2) \cdots \mathsf{Z}^{(k_{1,n}|k_{2,n})}_{K_n}(a_n) \ket{0} / \bra{0} \mathsf{Z}_{K_1}(a_1) \mathsf{Z}_{K_2}(a_2) \cdots \mathsf{Z}_{K_n}(a_n) \ket{0}
    \nonumber \\ 
    & = \prod_{\sigma \in \underline{\mathbf{2}}} \frac{\mathsf{c}_{\sigma,\underline{k}_\sigma}(\underline{K})}{k_\sigma!} \oint_\gamma [\dd{\underline{x}}] \prod_{\sigma \in \underline{\mathbf{2}}} \Phi_{\sigma}(\underline{x}_\sigma,\underline{a}) \operatorname{Stab}_{\sigma,{\underline{k}}_\sigma}(\underline{x}_\sigma,\underline{a}) 
    \prod_{\substack{i=1,\ldots,k_1 \\ j = 1,\ldots,k_2}}\frac{-q_{12} x_{1,i} x_{2,j}}{(1 - q_2 x_{1,i}/x_{2,j})(1 - q_1 x_{2,j}/x_{1,i})}
    \, .
\end{align}    
This conformal block computes the intersecting vortex theory supported on $\mathbb{C}_{q_1}$ and $\mathbb{C}_{q_2}$~\cite{Gomis:2016ljm,Pan:2016fbl,Kimura:2021ngu}.
From the geometric point of view, it is constructed from the intersection of (moduli spaces of) quasimaps, which gives rise to the doubled version of the handsaw quiver (Fig.~\ref{fig:double_A1}).
It can be seen as $\widehat{A}_1$ quiver theory as it involves two gauge nodes with two bifundamental matters.
In fact, this configuration is closely related to $\mathrm{U}(k_1|k_2)$ supergroup gauge theory~\cite{Kimura:2019msw,Kimura:2023iup}.

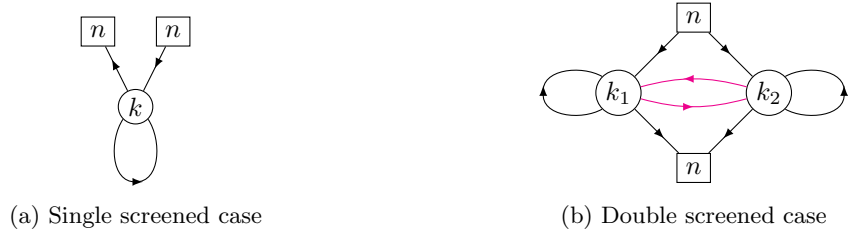
\begin{figure}[t]
    \centering
    \begin{subfigure}[b]{0.4\textwidth}
         \centering
         \begin{tikzpicture}
             \node[circle,draw=black,minimum size=5pt,inner sep=1.5pt] (k) at (0,0) {$k$};
             \node[rectangle,draw=black] (n1) at (.5,1) {$n$};
             \node[rectangle,draw=black] (n2) at (-.5,1) {$n$};
             \draw [->-] (n1) -- (k);
             \draw [-<-] (n2) -- (k);
             \draw (k) to [bend right = 45,in=260] ++(0,-1) to [bend right = 45,out=-80] (k);
             \draw [-latex] (0,-1) -- ++ (.1,0);
         \end{tikzpicture}
         \caption{Single screened case}
         \label{fig:single_A1}
     \end{subfigure}
     \qquad
    \begin{subfigure}[b]{0.4\textwidth}
         \centering
         \begin{tikzpicture}
             \node[circle,draw=black,minimum size=5pt,inner sep=1.5pt] (k1) at (-1,0) {$k_1$};
             \node[circle,draw=black,minimum size=5pt,inner sep=1.5pt] (k2) at (1,0) {$k_2$};
             \node[rectangle,draw=black] (n1) at (0,1) {$n$};
             \node[rectangle,draw=black] (n2) at (0,-1) {$n$};
             \draw [->-] (n1) -- (k1);
             \draw [-<-] (n2) -- (k1);
             \draw [->-] (n1) -- (k2);
             \draw [-<-] (n2) -- (k2);             
             \draw (k1) to [bend right = 45,in=260] ++(-1,0) to [bend right = 45,out=-80] (k1);
             \draw (k2) to [bend right = 45,in=260] ++(1,0) to [bend right = 45,out=-80] (k2);             
             \draw [-latex] (-2,0) -- ++ (0,.1);
             \draw [-latex] (2,0) -- ++ (0,.1);
             \draw [->-,magenta] (k1) to [bend right = 15] (k2);
             \draw [-<-,magenta] (k1) to [bend left = 15] (k2);             
         \end{tikzpicture}
         \caption{Double screened case}
         \label{fig:double_A1}
     \end{subfigure}     
    \caption{Quivers for $\mathrm{W}_{q_{1,2}}(A_1)$ block. The red edge describes the interaction of two nodes (the so-called 0d theory contribution).}
    \label{fig:quiver_A1}
\end{figure}

\paragraph{Pole structure}

We reparametrize the $x_{1,2}$-variables, $(x_{\sigma,i})_{i=1,\ldots,k_\sigma} = (x_{\sigma,\alpha,i})_{\alpha=1,\ldots,n,i=1,\ldots,k_{\sigma,\alpha}}$.
There are three possible types of the poles for $\sigma \in \underline{\mathbf{2}}$, $\bar{\sigma} = \underline{\mathbf{2}} \backslash \sigma$, (i) $x_{\sigma,\alpha,i} = a_\alpha q_{\sigma}^{\mathbb{Z}_{\le 0}}$, (ii) $x_{\sigma,\alpha,i} = x_{\sigma,\alpha,j (< i)} q_{\bar{\sigma}} q_\sigma^{\mathbb{Z}_{\le 0}}$, and (iii) $x_{\sigma,\alpha,i} = x_{\bar{\sigma},\alpha,j} q_\sigma$.
Here we take the following poles:
\begin{align}
    x_{\sigma,\alpha,1} = a_\alpha q_{\sigma}^{-\nu_{\sigma,\alpha,1}} 
    \, , \qquad &
    x_{\sigma,\alpha,i} = x_{\sigma,\alpha,i-1} q_{\bar\sigma} q_\sigma^{-\nu_{\sigma,\alpha,i} + \nu_{\sigma,\alpha,i-1}} = a_\alpha q_{\bar\sigma}^{i-1} q_\sigma^{-\nu_{\sigma,\alpha,i}}
    \, ,
\end{align}
where $(\nu_{\sigma,\alpha})_{\alpha = 1,\ldots,n}$ is a set of reverse partitions,
\begin{align}
    \nu_{1,\alpha} = (0 \le \nu_{1,\alpha,1} \le \cdots \le \nu_{1,\alpha,k_{1,\alpha}})
    \, , \qquad 
    \nu_{2,\alpha} = (-k_{1,\alpha} \le \nu_{2,\alpha,1} \le \cdots \le \nu_{2,\alpha,k_{2,\alpha}})
    \, .
\end{align}
The entries of $\nu_{1,\alpha}$ are all non-negative, while those of $\nu_{2,\alpha}$ can be negative, $\nu_{2,\alpha,i} \ge - k_{1,\alpha}$. 
We take the type (iii) poles for the $x_2$-variable to have the negative value through the following path, $x_{1,\alpha,1} = a_\alpha$, $x_{1,\alpha,2} = a_\alpha q_2$, \ldots, $x_{1,\alpha,d} = a_\alpha q_2^{d-1}$, then $x_{2,\alpha,1} = x_{1,\alpha,d} q_2 = a_\alpha q_2^{d}$, which corresponds to $\nu_{2,\alpha,1} = -d$, and also $\nu_{1,\alpha,i} = 0$ for $i = 1,\ldots, d$ with $d = \#\{ \nu_{1,\alpha,i} = 0, i = 1,\ldots,k_{1,\alpha} \} = k_{1,\alpha} - \ell(\nu_{1,\alpha})$ (the length of partition $\lambda$ denoted by $\ell(\lambda) = \lambda_1^\rmT$).
\begin{example}\label{ex:2dPT_n=1}
    In the case $n = 1$ with $k_1 = k_{1,1} = 6$, $k_2 = k_{2,1} = 8$, we may have the following configuration,
    \begin{align}
    \begin{tikzpicture}[baseline=(current bounding  box.center),scale=.5]
        \draw[-latex] (0,0)--(0,7) node [above] {$q_2$};
        \draw[-latex] (0,0)--(9,0) node [right] {$q_1$};
        \filldraw[draw=black,fill=blue!60,opacity=.5] (0,4) -- ++(-1,0) -- ++ (0,1) -- ++ (-1,0) -- ++(0,1) -- ++ (2,0) -- cycle;
        \draw (0,4) -- ++(2,0);
        \filldraw[draw=black,fill=red!60,opacity=.5] (2,4) -- ++(0,-1) -- ++(2,0) -- ++(0,-2) -- ++(1,0) -- ++(0,-1) -- ++(1,0) -- ++ (0,-2) -- ++ (1,0) -- ++ (-1,0) -- ++(1,0) -- ++(0,-1) -- ++(1,0) -- ++(0,7) -- cycle;
        \filldraw[fill=green!40,opacity=.5] (0,4) -- ++(8,0) -- ++ (0,2) -- ++(-8,0) -- cycle;
        \node at (-3,5) {$-\nu_1$};
        \node at (6.2,2) {$\nu_2^\rmT$};
        \node at (4,5) {$\ell(\nu_1) \times k_2$};
    \end{tikzpicture}
    \end{align}
    where 
    \begin{align}
        \nu_1 = (0,0,0,0,1,2) 
        \, , \qquad 
        \nu_2 = (-4,-4,-3,-3,-1,0,2,3)
        \, ,
    \end{align}
    with $\ell(\nu_1) = 2$ and thus $d = 4$.
\end{example}

Combining $-\nu_{1,\alpha}$, $\nu_{2,\alpha}^\rmT$, and also the rectangular part of the size $\ell(\nu_{1,\alpha}) \times k_{2,\alpha}$ (the green part in Example~\ref{ex:2dPT_n=1}), the whole part is interpreted as a (reverse) fat hook partition, which parametrizes the irreducible representation of supergroup $\mathrm{U}(k_{1,\alpha}|k_{2,\alpha})$.
This counting rule is naturally seen as a two-dimensional version of the two-leg PT count, i.e., PT2 vertex for $\mathbb{C}^2$:
The equivariant fixed points are described by a two-dimensional partition with two boundary conditions parametrized by one-dimensional partitions ($k_{1,\alpha}$ and $k_{2,\alpha}$ for each $\alpha = 1, \ldots,n$).
See~\cite{Monavari:2025xrs} for a recent progress in this direction.

\begin{proposition}\label{prop:Higgs2}
Let $K_\alpha = q_1^{\mu_{2,\alpha}} q_2^{\mu_{1,\alpha}}$, $\mu_{\sigma,\alpha} \in \mathbb{Z}_{\ge 0}$ for $\sigma=1,2$, $\alpha = 1,\ldots,n$.
Then, $\Psi^{\underline{k}} = 0$ unless $0 \le k_{\sigma,\alpha} \le \mu_{\sigma,\alpha}$.
\end{proposition}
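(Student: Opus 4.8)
The claim $k_{\sigma,\alpha}\ge 0$ is immediate from the definition, so the content is the upper bound. The plan is to mirror the proof of Proposition~\ref{prop:Higgs1}, now applied separately to the two factors $\Phi_1$ and $\Phi_2$ appearing in the integrand of $\Psi^{(\underline{k}_1|\underline{k}_2)}_{\mathscr{C}_{1\ldots n}}$: I will show that whenever some $k_{\sigma,\alpha}$ exceeds $\mu_{\sigma,\alpha}$, every pole of the cascade type described just above the proposition is killed by a zero of the numerator of $\Phi_\sigma$, so that the contour integral vanishes. The two bounds have to be derived in the order $\sigma=1$ then $\sigma=2$.

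First I would treat $\sigma=1$. Substituting $K_\alpha=q_1^{\mu_{2,\alpha}}q_2^{\mu_{1,\alpha}}$ and evaluating the numerator factor of $\Phi_1$ at the pole $x_{1,\alpha,i}=a_\alpha q_2^{\,i-1}q_1^{-\nu_{1,\alpha,i}}$ gives
\begin{align}
    \left(\frac{K_\alpha a_\alpha}{x_{1,\alpha,i}};q_1^{-1}\right)_\infty = \left(q_1^{\mu_{2,\alpha}+\nu_{1,\alpha,i}}\,q_2^{\mu_{1,\alpha}-i+1};q_1^{-1}\right)_\infty .
\end{align}
For generic $q_{1,2}$ this has a simple zero exactly when $\mu_{1,\alpha}-i+1=0$, the accompanying condition $\mu_{2,\alpha}+\nu_{1,\alpha,i}\ge 0$ being automatic from $\mu_{2,\alpha}\ge 0$ and $\nu_{1,\alpha,i}\ge 0$. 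Hence, as soon as $k_{1,\alpha}\ge\mu_{1,\alpha}+1$, the box $i=\mu_{1,\alpha}+1$ of colour $\alpha$ carries such a zero; since this factor depends only on the single variable $x_{1,\alpha,i}$ and the corresponding cascade pole is simple, the residue vanishes, giving $\Psi^{(\underline{k}_1|\underline{k}_2)}=0$ unless $k_{1,\alpha}\le\mu_{1,\alpha}$ for all $\alpha$.

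Next I would treat $\sigma=2$, now using the bound just obtained. Evaluating the numerator factor of $\Phi_2$ at the pole $x_{2,\alpha,i}=a_\alpha q_1^{\,i-1}q_2^{-\nu_{2,\alpha,i}}$ gives
\begin{align}
    \left(\frac{K_\alpha a_\alpha}{x_{2,\alpha,i}};q_2^{-1}\right)_\infty = \left(q_1^{\mu_{2,\alpha}-i+1}\,q_2^{\mu_{1,\alpha}+\nu_{2,\alpha,i}};q_2^{-1}\right)_\infty ,
\end{align}
which has a zero exactly when $\mu_{2,\alpha}-i+1=0$ together with $\mu_{1,\alpha}+\nu_{2,\alpha,i}\ge 0$. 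In contrast to the single-screened case, the entries of $\nu_{2,\alpha}$ can be negative, so this last inequality is not automatic; it follows, however, from the parametrization constraint $\nu_{2,\alpha,i}\ge-k_{1,\alpha}$ combined with $k_{1,\alpha}\le\mu_{1,\alpha}$, since then $\mu_{1,\alpha}+\nu_{2,\alpha,i}\ge\mu_{1,\alpha}-k_{1,\alpha}\ge 0$. Thus if $k_{2,\alpha}\ge\mu_{2,\alpha}+1$ the box $i=\mu_{2,\alpha}+1$ again carries a zero of $\Phi_2$, forcing $\Psi^{(\underline{k}_1|\underline{k}_2)}=0$ unless $k_{2,\alpha}\le\mu_{2,\alpha}$.

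The step I expect to need the most care is precisely this coupling between the two truncations: the $\sigma=2$ bound genuinely uses the $\sigma=1$ bound through the reach $\nu_{2,\alpha,i}\ge-k_{1,\alpha}$ of the type~(iii) connecting poles, so one must check that the negative end of $\nu_{2,\alpha}$ does not spoil the vanishing --- which it does not, exactly because that reach is already cut off once $k_{1,\alpha}\le\mu_{1,\alpha}$. One should also record, as in Proposition~\ref{prop:Higgs1}, that the vanishing numerator factors depend only on the relevant $x_{\sigma,\alpha,i}$ and that the cascade poles picked up here are simple (the spurious higher-order and off-cascade poles being removed by the stable envelope, cf.\ the pole-structure discussion above), and verify that the extra interaction factor $\prod_{i,j}\tfrac{-q_{12}x_{1,i}x_{2,j}}{(1-q_2x_{1,i}/x_{2,j})(1-q_1x_{2,j}/x_{1,i})}$ does not contribute a compensating pole at those loci --- which can be seen by extracting the residue in $x_{\sigma,\alpha,i}$ while the variables of the other screening are still unspecialized --- so that a simple zero genuinely annihilates the residue.
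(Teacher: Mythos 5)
Your proposal is correct and follows exactly the mechanism the paper intends: the paper gives no explicit proof of Proposition~\ref{prop:Higgs2}, only the remark that it is a ``double version'' of Proposition~\ref{prop:Higgs1}, and your argument is precisely that mechanism applied to the numerator factors $(K_\alpha a_\alpha/x_{\sigma,\alpha,i};q_\sigma^{-1})_\infty$ of $\Phi_1$ and $\Phi_2$ at the cascade poles. The one point you add beyond the paper's sketch --- that the $\sigma=2$ truncation is not automatic because $\nu_{2,\alpha,i}$ may be negative, and that it is rescued by $\nu_{2,\alpha,i}\ge -k_{1,\alpha}$ together with the already-established bound $k_{1,\alpha}\le\mu_{1,\alpha}$ --- is a genuine and necessary detail that the paper leaves implicit, and you have handled it correctly.
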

This is a double version of Proposition~\ref{prop:Higgs1}, which is also called the (two-dimensional version of) pit condition~\cite{Bershtein:2018SM}.
Moreover, Proposition~\ref{prop:Z_5d} can be also generalized to the current case:
Starting with the 5d $\mathcal{N}=1$ $\mathrm{U}(n)$ gauge theory on $\mathbb{C}_{q_1} \times \mathbb{C}_{q_2} \times \mathbb{S}^1$ with $n$ fundamental and $n$ antifundamental matters with the Higgsing condition $K_\alpha = m_\alpha/\widetilde{m}_\alpha = q_1^{\ell_{2,\alpha}} q_2^{\ell_{1,\alpha}}$, we obtain the intersecting defect partition function, which realizes the double screening current situation~\cite{Dimofte:2010tz,Bonelli:2011fq,Gomis:2016ljm,Pan:2016fbl,Gorsky:2017hro,Nieri:2017ntx,Kimura:2021ngu}.
As mentioned above, the double screening current situation may be seen as the two-dimensional two-leg PT count.
From this point of view, the 3d/5d relation is then interpreted as the two-dimensional version of DT/PT correspondence, equivalence of D0 brane counting and D2 brane counting.
We see the higher-dimensional uplift of this story in the following section.

\section{W$_{q_{1,2,3,4}}(\widehat{A}_0)$ conformal block}

We consider the W-algebra associated with $\widehat{A}_0$ quiver (Jordan quiver) introduced in~\cite{Kimura:2015rgi}, which depends on four parameters $\underline{q} = ( q_\sigma )_{\sigma \in \underline{\mathbf{4}}}$ with the condition,
\begin{align}
 q_1 q_2 q_3 q_4 = 1 \, .
 \label{eq:CY4_cond}
\end{align}
Identifying $\{q_{1,2,3,4}\}$ with the equivariant parameters associated with $\mathbb{C}^4$, this is the Calabi--Yau 4-fold condition.

\begin{definition}\label{def:OPEs_A0}
    For $A \subseteq \underline{\mathbf{4}}$, we write $\bar{A} = \underline{\mathbf{4}} \backslash A$, i.e., $\bar{\sigma} = \underline{\mathbf{4}}\backslash\{\sigma\}$, $\overline{\sigma \sigma'} = \underline{\mathbf{4}}\backslash\{\sigma,\sigma'\}$.
    We define the vertex operators $\{\mathsf{S}_{1,2,3,4}, \mathsf{Z}_K\}$ associated with $\widehat{A}_0$ quiver, obeying the following OPEs for $|q_\sigma|>1$, $\sigma \in \underline{\textbf{4}}$,
 \begin{subequations}\label{eq:OPE_A0}
    \begin{gather}
        \wick{ \c {\mathsf{S}}_\sigma(x) \c {\mathsf{S}}_\sigma(x') } = 
        \frac{(x'/x;q_\sigma^{-1})_\infty}{(q_\sigma^{-1} x'/x;q_\sigma^{-1})_\infty}
        \prod_{\sigma' \in \bar{\sigma}} 
        \frac{(q_{\overline{\sigma\sigma'}} x'/x;q_\sigma^{-1})_\infty}{(q_{\sigma'} x'/x;q_\sigma^{-1})_\infty}
        \, , \label{eq:OPE_A0_SS} \\
        \wick{ \c {\mathsf{S}}_\sigma(x) \c {\mathsf{S}}_{\sigma'}(x') } = \mathscr{S}_{\overline{\sigma\sigma'}}(q_\sigma x'/x)
        \, , \qquad 
        \wick{ \c {\mathsf{S}}_{\sigma'}(x') \c {\mathsf{S}}_\sigma(x) } = \mathscr{S}_{\overline{\sigma\sigma'}}(q_{\sigma'} x/x')
        \, , \qquad (\sigma \neq \sigma') \label{eq:OPE_A0_SS_ab} \\
        \wick{ \c {\mathsf{S}}_\sigma(x) \c {\mathsf{Z}}_K(x') } = \frac{(K x'/x;q_\sigma^{-1})_\infty}{(x'/x;q_\sigma^{-1})_\infty}
        \, , \qquad 
        \wick{ \c {\mathsf{Z}}_K(x') \c {\mathsf{S}}_\sigma(x) } = \frac{(q_\sigma^{-1} x/x';q_\sigma^{-1})_\infty}{(K^{-1} q_\sigma^{-1}x/x';q_\sigma^{-1})_\infty} \, , \\
        \mathsf{Z}_K(x) z = (K z) \mathsf{Z}_K(x) \, ,
    \end{gather}
 \end{subequations}
 with the $\mathscr{S}$ function,
 \begin{align}
     \mathscr{S}_{\sigma\sigma'}(x) = \frac{(1 - x q_{\sigma})(1 - x q_{\sigma'})}{(1 - x)(1 - x q_{\sigma\sigma'})}
     \, .
 \end{align}
 For $|q_\sigma| < 1$, we instead have
 \begin{subequations}
 \begin{gather}
        \wick{ \c {\mathsf{S}}_\sigma(x) \c {\mathsf{S}}_\sigma(x') } = \frac{(x'/x;q_\sigma)_\infty}{(q_\sigma x'/x;q_\sigma)_\infty}
        \prod_{\sigma' \in \bar{\sigma}} 
        \frac{(q_{\overline{\sigma\sigma'}}^{-1} x'/x;q_\sigma)_\infty}{(q_{\sigma'}^{-1} x'/x;q_\sigma)_\infty}
        \, , \label{eq:OPE_A0_SS2} \\
        \wick{ \c {\mathsf{S}}_\sigma(x) \c {\mathsf{Z}}_K(x') } = \frac{(q_\sigma x'/x;q_\sigma)_\infty}{(q_\sigma K x'/x;q_\sigma)_\infty}
        \, , \qquad 
        \wick{ \c {\mathsf{Z}}_K(x') \c {\mathsf{S}}_\sigma(x) } = \frac{(K^{-1} x/x';q_\sigma)_\infty}{(x/x';q_\sigma)_\infty} \, .
 \end{gather}
 \end{subequations}
\end{definition}
The free field realization of these vertex operators is given in \eqref{eq:free_field}.
Compared with the $\mathsf{Z}$-operator shown in \eqref{eq:D8_VO}, we have $\mathsf{Z}_K(x) = \mathsf{Z}(Kx)/\mathsf{Z}(x)$.
In this case, we do not impose the non-trivial relation between $z$ and $\mathsf{S}_\sigma$ as in \eqref{eq:OPE_z}.
We remark that $\mathsf{S}_\sigma$ and $\mathsf{S}_{\sigma'}$ commute if $\sigma \neq \sigma'$.
See \eqref{eq:OPE_A0_SS_ab}.
    
\begin{remark}
    The vertex operator $\mathsf{Z}$ would be identified with (the rank-one case of) $\phi_i$ in \cite[\S2.3]{Shiraishi:2019JIS}, associated with the framing spaces of the affine Laumon space.
    Moreover, (the rank-one case of) their screening current $S_i$ agrees with (one of) ours, $\mathsf{S}$.
\end{remark}

\begin{lemma}\label{lemma:SS_A0^}
 For $\sigma \in \underline{\mathbf{4}}$, we write $\bar{\sigma} = \underline{\mathbf{4}} \backslash \sigma$.
 For $|q_\sigma| > 1$, we may rewrite the screening current OPE as follows,
 \begin{subequations}
  \begin{align}
   \wick{ \c {\mathsf{S}}_\sigma(x) \c {\mathsf{S}}_\sigma(x') } & = \frac{((x'/x)^{\pm 1};q_\sigma^{-1})_\infty (q_{\sigma'\sigma''} (x'/x)^{\pm 1};q_\sigma^{-1})_\infty}{(q_{\sigma'} (x'/x)^{\pm 1};q_\sigma^{-1})_\infty (q_{\sigma''} (x'/x)^{\pm 1};q_\sigma^{-1})_\infty} \frac{\theta(q_{\sigma'} x'/x;q_\sigma^{-1}) \theta(q_{\sigma''} x'/x;q_\sigma^{-1})}{\theta(x'/x;q_\sigma^{-1}) \theta(q_{\sigma'\sigma''} x'/x;q_\sigma^{-1})}
   \, ,
   \qquad (\sigma',\sigma'' \in \bar{\sigma}) \\
   & = \frac{((x'/x)^{\pm 1};q_\sigma^{-1})_\infty }{\prod_{\sigma' \in \bar{\sigma}}(q_{\sigma'} (x'/x)^{\pm 1};q_\sigma^{-1})_\infty } \frac{\prod_{\sigma' \in \bar{\sigma}} \theta(q_{\sigma'} x'/x;q_\sigma^{-1})}{\theta(x'/x;q_\sigma^{-1})} \, .
  \end{align}
 \end{subequations}
\end{lemma}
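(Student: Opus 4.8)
The plan is to derive both identities directly from the definition \eqref{eq:OPE_A0_SS}, exactly in the spirit of Lemma~\ref{lemma:OPE_flip}; the only extra ingredient is the Calabi--Yau relation \eqref{eq:CY4_cond}. Write $w = x'/x$ and $p = q_\sigma^{-1}$, so $|p| < 1$ and all infinite products below converge. The elementary facts I will lean on are the defining factorisation $\theta(z;p) = (z;p)_\infty (pz^{-1};p)_\infty$ --- equivalently $(\alpha w;p)_\infty (\gamma w^{-1};p)_\infty = \theta(\alpha w;p)$ whenever $\alpha\gamma = p$ --- the reflection $\theta(z;p) = \theta(pz^{-1};p)$, and the one-step telescoping $(z;p)_\infty/(pz;p)_\infty = 1-z$.

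First I would use \eqref{eq:CY4_cond} to normalise the right-hand side of \eqref{eq:OPE_A0_SS}. With $\bar\sigma = \{\sigma',\sigma'',\sigma'''\}$, the condition $q_1q_2q_3q_4 = 1$ gives $q_{\bar\sigma} = q_\sigma^{-1}$ and $q_{\overline{\sigma\tau}} = (q_\sigma q_\tau)^{-1} = pq_\tau^{-1}$ for each $\tau\in\bar\sigma$, so that the OPE factor of \eqref{eq:OPE_A0_SS} equals $\frac{(w;p)_\infty}{(pw;p)_\infty}\prod_{\tau\in\bar\sigma}\frac{(pq_\tau^{-1}w;p)_\infty}{(q_\tau w;p)_\infty}$. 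The point is that every one of these four ratios is \emph{balanced}: it has the form $(\alpha w;p)_\infty/(\gamma w;p)_\infty$ with $\alpha\gamma = p$ (for the first factor $\alpha=1,\gamma=p$; for the $\tau$-factor $\alpha = pq_\tau^{-1},\gamma = q_\tau$).

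For a balanced ratio one has the flip $\frac{(\alpha w;p)_\infty}{(\gamma w;p)_\infty} = \frac{\theta(\alpha w;p)}{\theta(\gamma w;p)}\cdot\frac{(\alpha w^{-1};p)_\infty}{(\gamma w^{-1};p)_\infty}$, obtained by multiplying through by $(\alpha w^{-1};p)_\infty(\gamma w^{-1};p)_\infty$ over itself and regrouping via $(\alpha w;p)_\infty(\gamma w^{-1};p)_\infty = \theta(\alpha w;p)$ and $(\gamma w;p)_\infty(\alpha w^{-1};p)_\infty = \theta(\gamma w;p)$; this is precisely the computation underlying Lemma~\ref{lemma:OPE_flip}. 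Performing it on every factor of the product above and collecting the surviving $(\cdot^{\pm1})$-type products into the denominator together with the theta functions yields the second claimed identity, the numerator $q$-shifted factorials all being absorbed into the thetas. For the first identity I would trade away the third index using \eqref{eq:CY4_cond} once more: since $q_{\sigma'''} = q_\sigma^{-1}q_{\sigma'}^{-1}q_{\sigma''}^{-1}$ and $q_{\overline{\sigma\sigma'''}} = q_{\sigma'}q_{\sigma''} = q_{\sigma'\sigma''}$, the $\sigma'''$-factor becomes $\frac{(q_{\sigma'\sigma''}w;p)_\infty}{(q_\sigma^{-1}q_{\sigma'}^{-1}q_{\sigma''}^{-1}w;p)_\infty}$, and running the flip only on the $\sigma'$- and $\sigma''$-factors leaves the extra $\theta(q_{\sigma'\sigma''}w;p)^{-1}$ in the denominator and $(q_{\sigma'\sigma''}w^{\pm1};p)_\infty$ in the numerator --- the asserted first form.

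I expect the only real work to be bookkeeping: keeping the $f(w^{\pm1})$ shorthand straight, and tracking the monomial prefactors produced by $\theta(z^{-1};p) = -z^{-1}\theta(z;p)$, which must cancel in pairs for the two sides to close. It is exactly the balanced pairing $\alpha\gamma = p$ --- i.e. the Calabi--Yau condition --- that makes this cancellation happen factor by factor, so this is the one step where \eqref{eq:CY4_cond} is genuinely used. If one prefers to sidestep the bookkeeping, both sides of each identity are meromorphic in $w\in\mathbb{C}^\times$, and one can instead check that they have the same zero and pole divisors and the same behaviour as $w\to 0$, which already forces them to coincide.
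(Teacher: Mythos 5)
The paper states this lemma without proof (Proposition~\ref{prop:conf_block_A0^} simply invokes it, as Lemma~\ref{lemma:OPE_flip} was invoked in the $A_1$ case), so your argument has to stand on its own --- and it does not close. Your setup is right: with $w=x'/x$, $p=q_\sigma^{-1}$, the Calabi--Yau condition turns \eqref{eq:OPE_A0_SS} into $\frac{(w;p)_\infty}{(pw;p)_\infty}\prod_{\tau\in\bar\sigma}\frac{(pq_\tau^{-1}w;p)_\infty}{(q_\tau w;p)_\infty}$ with every ratio balanced, and your flip $\frac{(\alpha w;p)_\infty}{(\gamma w;p)_\infty}=\frac{\theta(\alpha w;p)}{\theta(\gamma w;p)}\frac{(\alpha w^{-1};p)_\infty}{(\gamma w^{-1};p)_\infty}$ for $\alpha\gamma=p$ is a correct identity. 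The gap is the sentence ``performing it on every factor \dots yields the second claimed identity'': that flip converts every $w$-factorial into a $w^{-1}$-factorial and never produces the symmetric $(\cdot\,w^{\pm1};p)_\infty$ products of the target. The identity that does produce them is the \emph{squared} one, $\bigl(\tfrac{(\alpha w;p)_\infty}{(\gamma w;p)_\infty}\bigr)^2=\tfrac{(\alpha w^{\pm1};p)_\infty}{(\gamma w^{\pm1};p)_\infty}\tfrac{\theta(\alpha w;p)}{\theta(\gamma w;p)}$, so a single contraction is only a ``square root'' of $[\text{symmetric}]\times[\text{theta}]$, and the leftover after your manipulation is a ratio of infinite products, not a monomial that ``cancels in pairs''. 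Your own fallback (comparing divisors) would in fact expose the problem rather than finish the proof: dividing the first displayed form by the second gives exactly $(q_\sigma^{-1}(x'/x)^{\pm1};q_\sigma^{-1})_\infty$, so the two right-hand sides are not even equal to each other, and each one reproduces the contraction with $x$ and $x'$ interchanged, i.e.\ the theta arguments should be $x/x'$ rather than $x'/x$ (the two radial orderings are genuinely different functions, not equal up to a monomial). The statement therefore cannot be proved as printed; the spurious $(q_\sigma^{-1}(x'/x)^{\pm1};q_\sigma^{-1})_\infty$ must be deleted from the second form and the theta arguments inverted, which is also what consistency with $\Phi^{\mathrm{M}}$ and $\widetilde{\operatorname{Stab}}^{\mathrm{M}}$ in Proposition~\ref{prop:conf_block_A0^} demands.

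The correct mechanism is close to, but not the same as, your flip: complete each \emph{individual} factorial to a theta using its balanced partner at inverted argument, namely $(pq_\tau^{-1}w;p)_\infty=\theta(q_\tau w^{-1};p)/(q_\tau w^{-1};p)_\infty$ and $(pw;p)_\infty=\theta(w^{-1};p)/(w^{-1};p)_\infty$, both of which are instances of $\theta(z;p)=(z;p)_\infty(pz^{-1};p)_\infty$ made available by \eqref{eq:CY4_cond}. Substituting these gives in one line
\begin{align}
\frac{(w;p)_\infty}{(pw;p)_\infty}\prod_{\tau\in\bar\sigma}\frac{(pq_\tau^{-1}w;p)_\infty}{(q_\tau w;p)_\infty}
=\frac{(w^{\pm1};p)_\infty}{\prod_{\tau\in\bar\sigma}(q_\tau w^{\pm1};p)_\infty}\,
\frac{\prod_{\tau\in\bar\sigma}\theta(q_\tau w^{-1};p)}{\theta(w^{-1};p)}\, ,
\end{align}
which is the corrected second form; treating the remaining direction $\sigma'''\in\bar\sigma\setminus\{\sigma',\sigma''\}$ instead by $(q_{\sigma'''}w;p)_\infty=\theta(q_{\sigma'\sigma''}w^{-1};p)/(q_{\sigma'\sigma''}w^{-1};p)_\infty$ yields the corrected first form. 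Rebuild your proof around these substitutions and amend the statement accordingly.
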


\subsection{Single screening current}

In this subsection, we set $\mathsf{S}(x) \equiv \mathsf{S}_4(x)$ and assume $|q_4| > 1$.
We apply the same definition for the screened vertex operator as before (Definition~\ref{def:Z_screened}).

\begin{proposition}\label{prop:conf_block_A0^}
    We have the two different integral formulas for the conformal block in the chamber $\mathscr{C}_{12\ldots n}: a_1 \succ a_2 \succ \cdots \succ a_n$,
    \begin{subequations}
    \begin{align}
        \Psi_{\mathscr{C}_{1\ldots n}}^{\underline{k}}(\underline{a};\underline{K}) & = \bra{0} \mathsf{Z}^{(k_1)}_{K_1}(a_1) \mathsf{Z}^{(k_2)}_{K_2}(a_2) \cdots \mathsf{Z}^{(k_n)}_{K_n}(a_n) \ket{0} / \bra{0} \mathsf{Z}_{K_1}(a_1) \mathsf{Z}_{K_2}(a_2) \cdots \mathsf{Z}_{K_n}(a_n) \ket{0}
        \nonumber \\ 
        & = \frac{\mathsf{c}_{4,\underline{k}}(\underline{K})}{k!} \oint_\gamma [\dd{\underline{x}}] \, \Phi_{4;\sigma'}^\text{F}(\underline{x},\underline{a})\widetilde{\operatorname{Stab}}_{4;\sigma',\underline{k}}^\text{F}
        (\underline{x},\underline{a}) \, , \label{eq:block_A0^_F} \\
        & = \frac{\mathsf{c}_{4,\underline{k}}(\underline{K})}{k!} \oint_\gamma [\dd{\underline{x}}] \, \Phi_{4}^\text{M}(\underline{x},\underline{a})\widetilde{\operatorname{Stab}}_{4,\underline{k}}^\text{M}
        (\underline{x},\underline{a}) \label{eq:block_A0^_M} \, ,
    \end{align}
    \end{subequations}
    where the principal parts are defined by
    \begin{subequations}\label{eq:principal_part_A0^}
        \begin{align}
            \Phi_{\sigma;\sigma'}^\text{F}(\underline{x},\underline{a}) & = \prod_{i \neq j}^k \frac{(x_{ij};q_\sigma^{-1})_\infty (q_{\overline{\sigma \sigma'}} x_{ij};q_\sigma^{-1})_\infty}{\prod_{\sigma'' \in \overline{\sigma \sigma'}}(q_{\sigma''} x_{ij};q_\sigma^{-1})_\infty} \prod_{\substack{i=1,\ldots,k \\ \alpha=1,\ldots,n}} \frac{(K_\alpha a_\alpha/x_i;q_\sigma^{-1})_\infty}{( a_\alpha/x_i;q_\sigma^{-1})_\infty}
            \, , \\
            \Phi_\sigma^\text{M}(\underline{x},\underline{a}) & = \prod_{i \neq j}^k \frac{(x_{ij};q_\sigma^{-1})_\infty}{\prod_{\sigma' \in \overline{\sigma }}(q_{\sigma'} x_{ij};q_\sigma^{-1})_\infty} \prod_{\substack{i=1,\ldots,k \\ \alpha=1,\ldots,n}} \frac{(K_\alpha a_\alpha/x_i;q_\sigma^{-1})_\infty}{( a_\alpha/x_i;q_\sigma^{-1})_\infty} \, ,
        \end{align}
    \end{subequations}
    with the constant term under a slight modification from \eqref{eq:c_const},
    \begin{align}
        \mathsf{c}_{\sigma,\underline{k}}(\underline{K}) = \prod_{i=1}^k \frac{\theta(K_{c(i)};q_\sigma^{-1})}{\theta(z K_i^c;q_\sigma^{-1})} \, .
        \label{eq:c_const2}
    \end{align}
    We define two different off-shell elliptic (pre-)stable envelopes as follows,
\begin{subequations}
    \begin{align}
        \widetilde{\operatorname{Stab}}_{\sigma;\sigma',\underline{k}}^\text{F}(\underline{x},\underline{a}) 
        & = \operatorname{Sym} \prod_{i<j}^k \frac{\prod_{\sigma''\in\overline{\sigma\sigma'}} \theta(q_{\sigma''} x_{ji};q_\sigma^{-1})}{\theta(x_{ji};q_\sigma^{-1})\theta(q_{\overline{\sigma\sigma'}} x_{ji};q_\sigma^{-1})} \nonumber \\
        & \qquad \times \prod_{i=1}^k \frac{\prod_{\alpha < c(i)} \theta(a_\alpha / x_i ; q_\sigma^{-1}) \theta(z K_i^c a_{c(i)} / x_i ; q_\sigma^{-1}) \prod_{\alpha > c(i)} \theta(K_\alpha a_\alpha / x_i ; q_\sigma^{-1})}{\prod_{\alpha=1}^k \theta(K_\alpha a_\alpha / x_i ; q_\sigma^{-1})} 
        \, , \\
        \widetilde{\operatorname{Stab}}_{\sigma,\underline{k}}^\text{M}(\underline{x},\underline{a}) 
        & = \operatorname{Sym} \prod_{i<j}^k \frac{\theta(q_{1,2,3} x_{ji};q_\sigma^{-1})}{\theta(x_{ji};q_\sigma^{-1})} \nonumber \\
        & \qquad \times \prod_{i=1}^k \frac{\prod_{\alpha < c(i)} \theta(a_\alpha / x_i ; q_\sigma^{-1}) \theta(z K_i^c a_{c(i)} / x_i ; q_\sigma^{-1}) \prod_{\alpha > c(i)} \theta(K_\alpha a_\alpha / x_i ; q_\sigma^{-1})}{\prod_{\alpha=1}^k \theta(K_\alpha a_\alpha / x_i ; q_\sigma^{-1})} \, .
    \end{align}
\end{subequations}
\end{proposition}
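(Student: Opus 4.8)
The plan is to follow the proof of Proposition~\ref{prop:conf_block_A1} almost verbatim: evaluate the normalized matrix element directly from the operator product expansions of Definition~\ref{def:OPEs_A0} (with $\mathsf{S}\equiv\mathsf{S}_4$, $|q_4|>1$), and then reorganize the resulting integrand into a non-elliptic ``principal part'' times an elliptic ``off-shell (pre-)stable envelope'' times the overall constant. The two inequivalent ways to carry out this last step are supplied by the two rewritings of the screening self-contraction $\wick{ \c {\mathsf{S}}_4(x) \c {\mathsf{S}}_4(x') }$ in Lemma~\ref{lemma:SS_A0^}; they produce, respectively, the ``F'' and ``M'' formulas \eqref{eq:block_A0^_F} and \eqref{eq:block_A0^_M}.

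\textbf{Main computation.} First I would normal-order $\bra{0}\mathsf{Z}^{(k_1)}_{K_1}(a_1)\cdots\mathsf{Z}^{(k_n)}_{K_n}(a_n)\ket{0}$ using Definition~\ref{def:Z_screened} with $\mathsf{S}=\mathsf{S}_4$. Since every screening current is the same operator $\mathsf{S}_4$, the mutual contractions $\wick{ \c {\mathsf{Z}}_{K_\alpha}(a_\alpha) \c {\mathsf{Z}}_{K_\beta}(a_\beta) }$ are exactly cancelled by the denominator, leaving only the self-contractions $\wick{ \c {\mathsf{S}}_4(x_i) \c {\mathsf{S}}_4(x_j) }$ over all pairs of the $k=\sum_\alpha k_\alpha$ integration variables together with the mixed contractions $\wick{ \c {\mathsf{S}}_4(x_i) \c {\mathsf{Z}}_{K_\alpha}(a_\alpha) }$ and $\wick{ \c {\mathsf{Z}}_{K_\alpha}(a_\alpha) \c {\mathsf{S}}_4(x_i) }$, the latter occurring precisely when $\mathsf{Z}_{K_\alpha}$ with $\alpha<c(i)$ sits to the left of $\mathsf{S}_4(x_i)$. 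For the mixed contractions I would apply the $\mathsf{S}$--$\mathsf{Z}$ flip of Lemma~\ref{lemma:OPE_flip}, which holds here unchanged because the $\mathsf{S}$--$\mathsf{Z}$ OPEs of Definition~\ref{def:OPEs_A0} coincide with those of Definition~\ref{def:OPEs}; converting each $\mathsf{Z}_{K_\alpha}$-to-the-left contraction into the $\mathsf{S}_4$-to-the-left one produces a factor $\theta(a_\alpha/x_i;q_4^{-1})/\theta(K_\alpha a_\alpha/x_i;q_4^{-1})$ for every pair with $\alpha<c(i)$, exactly as in the proof of Proposition~\ref{prop:conf_block_A1}. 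Multiplying these against the kernel functions $\varphi_{4,K_{c(i)}}(a_{c(i)}/x_i,\,zK_i^c)$ carried by the screened operators --- where the momentum argument is dressed from $zK_{c(i)}$ to $zK_i^c$ because $z$ still braids with $\mathsf{Z}_K$ through $\mathsf{Z}_K(x)z=(Kz)\mathsf{Z}_K(x)$, but, unlike \eqref{eq:OPE_z}, no power of $q$ enters since $z$ commutes with all $\mathsf{S}_\sigma$ in Definition~\ref{def:OPEs_A0} --- one checks as in the $A_1$ case that the one-body data reorganize into the constant $\mathsf{c}_{4,\underline{k}}(\underline{K})$ of \eqref{eq:c_const2} times the ratio $\prod_{\alpha<c(i)}\theta(a_\alpha/x_i;q_4^{-1})\,\theta(zK_i^c a_{c(i)}/x_i;q_4^{-1})\prod_{\alpha>c(i)}\theta(K_\alpha a_\alpha/x_i;q_4^{-1})\big/\prod_{\alpha}\theta(K_\alpha a_\alpha/x_i;q_4^{-1})$, which is the one-body factor common to both $\widetilde{\operatorname{Stab}}^\text{F}$ and $\widetilde{\operatorname{Stab}}^\text{M}$; the absence of the $q_{\bar{\sigma}}^{-2\rho_i}$ shift present in \eqref{eq:c_const} and \eqref{eq:stab_2} is exactly this last point. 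The overall $\operatorname{Sym}$ and the $1/k!$ come from the symmetry of the contour in the $x_i$.

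\textbf{Splitting via Lemma~\ref{lemma:SS_A0^}, and the main obstacle.} It remains to distribute $\prod \wick{ \c {\mathsf{S}}_4(x_i) \c {\mathsf{S}}_4(x_j) }$ between the non-elliptic part and the two-body part of the stable envelope. Using the first rewriting of Lemma~\ref{lemma:SS_A0^}, which depends on a choice of distinguished index $\sigma'\in\bar{4}$ whose theta-part involves the complementary pair $\overline{4\sigma'}$, the infinite products assemble into $\Phi^\text{F}_{4;\sigma'}$ while the leftover theta-ratio becomes $\prod_{i<j}\prod_{\sigma''\in\overline{4\sigma'}}\theta(q_{\sigma''}x_{ji};q_4^{-1})\big/\bigl(\theta(x_{ji};q_4^{-1})\theta(q_{\overline{4\sigma'}}x_{ji};q_4^{-1})\bigr)$, i.e.\ the two-body part of $\widetilde{\operatorname{Stab}}^\text{F}_{4;\sigma',\underline{k}}$, which yields \eqref{eq:block_A0^_F}; using instead the second rewriting --- which treats the indices $1,2,3$ symmetrically and where one invokes the Calabi--Yau four-fold condition $q_1q_2q_3q_4=1$ of \eqref{eq:CY4_cond} to trade $q_4^{-1}$ for $q_{1,2,3}$ --- the infinite products assemble into $\Phi^\text{M}_4$ and the leftover becomes $\prod_{i<j}\theta(q_{1,2,3}x_{ji};q_4^{-1})/\theta(x_{ji};q_4^{-1})$, giving \eqref{eq:block_A0^_M}. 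The main obstacle is precisely this theta-function bookkeeping: one must use the functional equation $\theta(w^{-1};p)=-w^{-1}\theta(w;p)$ and the shift $(\xi;q)_\infty=(1-\xi)(q\xi;q)_\infty$ to move monomial and shift factors between the non-elliptic and the elliptic parts so that each side reproduces the stated $\Phi$ and $\widetilde{\operatorname{Stab}}$ on the nose, while keeping careful track of the ranges $\alpha<c(i)$ versus $\alpha>c(i)$ generated by the radial reordering of the $\mathsf{Z}$'s. One should also note that both integrands are genuinely elliptic (invariant under $x_i\to q_4^{-1}x_i$), which follows from the quasi-periodicity of $\varphi_{4,K}$ recorded after its definition together with the functional equation of $\theta$, so that the contour $\gamma$ of Definition~\ref{def:Z_screened} is well defined; and that the value of the block does not depend on the auxiliary choice of $\sigma'$ --- only its split between $\Phi^\text{F}$ and $\widetilde{\operatorname{Stab}}^\text{F}$ does.
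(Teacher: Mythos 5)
Your proposal is correct and follows essentially the same route as the paper: the paper's proof is simply the remark that the result ``directly follows from Definition~\ref{def:OPEs_A0} with Lemma~\ref{lemma:SS_A0^}'', i.e.\ the computation of Proposition~\ref{prop:conf_block_A1} repeated with the $\widehat{A}_0$ OPEs, the two rewritings of the self-contraction in Lemma~\ref{lemma:SS_A0^} yielding the Fock and MacMahon splittings respectively. You correctly identify the one genuine deviation from the $A_1$ case --- the disappearance of the $q_{\bar\sigma}^{-2\rho_i}$ shifts in \eqref{eq:c_const2} and in the one-body theta factors because $z$ no longer braids with $\mathsf{S}_\sigma$ --- so nothing further is needed.
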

\begin{proof}
    We may apply the same analysis as discussed in Proposition~\ref{prop:conf_block_A1}.
    In this case, we instead use the OPEs given in Definition~\ref{def:OPEs_A0} together with Lemma~\ref{lemma:SS_A0^} to obtain this formula.
\end{proof}
For the moment, the choice of $\sigma' \in \bar{4}$ in $\Phi_{4;\sigma'}^\text{F}$ and $\widetilde{\operatorname{Stab}}^\text{F}_{4;\sigma'}$ is not unique.
It would be determined by the specialization of the parameter $K_\alpha = q_{\sigma'}$.
The conformal block in the ``Fock formula''~\eqref{eq:block_A0^_F} is identified as the Coulomb branch localization formula of the 3d ADHM sigma model~\cite{Choi:2019zpz,Crew:2020psc}, where the corresponding matter content is described by the quiver diagram given in Fig.~\ref{fig:single_F}.
Namely, it turns out to be the vertex function associated with the quasimaps to the Hilbert scheme of points on $\mathbb{C}^2$ denoted by $\operatorname{Hilb}_k(\mathbb{C}^{2})$ for $n = 1$~\cite{Smirnov:2019rmq}, and the corresponding Quot scheme for $n > 1$.
In the ``MacMahon formula''~\eqref{eq:block_A0^_M}, it turns out to be similarly the vertex function for the quasimaps to $\operatorname{Hilb}_k(\mathbb{C}^{3})$~\cite{Cao:2023lon,Piazzalunga:2023qik} and also the higher-rank analog.
See the corresponding quiver in Fig.~\ref{fig:single_M}.
One can realize two different forms based on the same vertex operator as observed before~\cite{Kimura:2023bxy}.

The stable envelopes $\widetilde{\operatorname{Stab}}^\text{F,M}$ are associated with the degree $n$ tensor product of Fock and the MacMahon modules of quantum toroidal $\mathfrak{gl}_1$ as discussed below.
We remark that $\widetilde{\operatorname{Stab}}^\text{F,M}$ defined above are not yet precisely the stable envelopes:
The expression of $\widetilde{\operatorname{Stab}}^\text{F}$ does not agree with that given by Smirnov~\cite{Smirnov:2018drm} (in the case $n = 1$, hence associated with $\operatorname{Hilb}_k(\mathbb{C}^{2})$) although they have a close form.
The stable envelope for $\mathbb{C}^{3}$ is not yet even constructed.
For general $(n,k)$, the stable envelopes for $\mathbb{C}^2$ and $\mathbb{C}^3$ should be parametrized by $n$-tuple 2d and 3d partitions of size $k$.
For this purpose, it would be plausible to update the definition of the screened vertex operator (Definition~\ref{def:Z_screened}) in order that it depends on 2d/3d partitions.
We may need the tree structure as discussed in \cite{Smirnov:2018drm} for the case $\mathbb{C}^2$.
In order to emphasize this point, we would also call $\widetilde{\operatorname{Stab}}^\text{F,M}$ the ``pre-stable envelopes''.

\subsubsection{Fock formula}

\paragraph{Pole structure}
Let $x_{\alpha,i} = x_{i+\sum_{\beta > \alpha}k_\beta}$ for $i = 1,\ldots, k_\alpha$ as before.
The poles of the contour integral are parametrized by $n$-tuple partitions $\underline{\lambda} = (\lambda_1,\ldots,\lambda_n)$, hence corresponding to the $n$-tensor product of the Fock modules of quantum toroidal $\mathfrak{gl}_1$,
\begin{align}
 x_{\alpha,1} = a_\alpha q_4^{-n_{\alpha,1}} 
 \, , \quad 
 x_{\alpha,i} = 
 x_{\alpha,j(<i)} q_{1} q_4^{-n_{\alpha,i}} 
 \quad \text{or} \quad 
 x_{\alpha,j(<i)} q_{2} q_4^{-n_{\alpha,i}} 
 \, .
\end{align} 
Hence, we have
\begin{align}
 \underline{x}(\underline{\pi}) = \{ x_{\alpha,i} \mid x_{\alpha,i} = a_\alpha q_1^{i_1-1} q_2^{i_2-1} q_4^{-\pi_{\alpha,i}}, i = (i_1,i_2) \in \lambda_\alpha \} \, ,
    \label{eq:poles_PT1}
\end{align} 
where we write the position of the $i$-th coordinate in the partition $\lambda_\alpha$ by $i = (i_1, i_2)$ for simplicity.
As pointed out in \cite{Crew:2020psc}, the configuration denoted by $\underline{\pi} = \{ \pi_{\alpha,i}, i = (i_1,i_2) \in \lambda_\alpha \}_{\alpha = 1,\ldots,n}$ is an $n$-tuple reverse plane partition whose base is fixed by the partitions $\underline{\lambda} = \{\lambda_\alpha\}$.
We denote by $\mathsf{PP}_{\underline{\lambda}}$ the set of such $n$-tuple reverse plane partitions.
Similarly to Propositions~\ref{prop:Higgs1} and \ref{prop:Higgs2}, we should impose $K_\alpha = q_3$ to realize this pole structure.
Namely we take $\Phi_{4;3}^\text{F}$ for the principal part of the conformal block.
It has been known that this box counting structure agrees with the one-leg PT3 count.

\paragraph{Vertex function}

In order to deal with the conformal block, we introduce the combinatorial notations.
Let $\lambda_\alpha$ be a partition and $i = (i_1,i_2) \in \lambda_\alpha$ for $\alpha = 1, \ldots, n$.
We denote by $\lambda_\alpha^\rmT$ the transposition of $\lambda_\alpha$.
We define the arm length and the leg length as follows,
    \begin{align}
        \textsl{a}_\alpha(i) = \lambda_{\alpha,i_1} - i_2
        \, , \qquad 
        \textsl{l}_\alpha(i) = \lambda_{\alpha,i_2}^\rmT - i_1
        \, .
    \end{align}
We have the following formula.
\begin{lemma}
    Let $\emptyset_{\underline{\lambda}} \in \mathsf{PP}_{\underline{\lambda}}$ be the empty configuration, and we write $(\cdot)_\infty = (\cdot;q_4^{-1})_\infty$.
    Then, we have the following combinatorial formula for the contour integral,
    \begin{align}
    I_{\underline{\lambda}} (\underline{a}) 
    & = \frac{(q_4^{-1})_\infty^k}{k!} \oint_{\emptyset_{\underline{\lambda}}} [\dd{\underline{x}}] \prod_{i \neq j}^k (x_{ij})_\infty \prod_{i,j}^k \frac{(q_{12} x_{ij})_\infty}{(q_{1,2} x_{ij})_\infty} \prod_{\substack{\alpha = 1,\ldots,n \\ i = 1, \ldots, k}} \frac{1}{(a_\alpha / x_i)_\infty(q_{12} x_i / a_\alpha)_\infty}
    \nonumber \\ &  
    = \prod_{1 \le \alpha,\beta \le n} \left[ \prod_{i \in \lambda_\beta} (a_{\beta \alpha} q_1^{\textsl{l}_\alpha(i) + 1} q_2^{-\textsl{a}_\beta(i)})^{-1}_\infty \prod_{i \in \lambda_\alpha} (a_{\beta \alpha} q_1^{-\textsl{l}_\beta(i)} q_2^{\textsl{a}_\alpha(i)+1})^{-1}_\infty \right]
    \, .
\end{align}
\end{lemma}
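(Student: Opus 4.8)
The plan is to follow the same route as in the proof of Lemma~\ref{lemma:A1_pert}: realize $I_{\underline{\lambda}}$ as the contribution of the single torus-fixed point $\emptyset_{\underline{\lambda}}$ to an equivariant $K$-theoretic index, rewrite that index as $\operatorname{ch}\wedge$ of an explicit virtual character built from $\mathbf{K},\mathbf{N},\mathbf{P}_a$, and then invoke the classical arm--leg formula for the tangent space at a monomial fixed point. First I would introduce vector spaces with $\operatorname{ch}\mathbf{K} = \sum_{i=1}^k x_i$, $\operatorname{ch}\mathbf{N} = \sum_{\alpha=1}^n a_\alpha$, and $\mathbf{P}_a$ with $\operatorname{ch}\mathbf{P}_a = 1 - q_a$, so that $\operatorname{ch}\mathbf{P}_1\mathbf{P}_2 = (1-q_1)(1-q_2)$, and use $(\xi;q_4^{-1})_\infty = \operatorname{ch}\wedge[\xi/\mathbf{P}_4^\vee]$ for a single weight $\xi$ together with $\operatorname{ch}\wedge[-\chi] = (\operatorname{ch}\wedge[\chi])^{-1}$. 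Collecting the $\mathbf{K}\mathbf{K}^\vee$-type factors $\prod_{i\ne j}(x_{ij})_\infty\prod_{i,j}(q_{12}x_{ij})_\infty\prod_{i,j}(q_{1,2}x_{ij})^{-1}_\infty$ together with the prefactor $(q_4^{-1};q_4^{-1})_\infty^k$ — which supplies exactly the $i=j$ diagonal terms absent from $\prod_{i\ne j}(x_{ij})_\infty$ — and the matter factors $\prod_{\alpha,i}\big((a_\alpha/x_i)_\infty(q_{12}x_i/a_\alpha)_\infty\big)^{-1}$, the whole integrand becomes $\operatorname{ch}\wedge[\mathbf{Y}/\mathbf{P}_4^\vee]$ with
\begin{align}
    \mathbf{Y} = \mathbf{P}_1\mathbf{P}_2\,\mathbf{K}\mathbf{K}^\vee - \mathbf{N}\mathbf{K}^\vee - q_{12}\,\mathbf{N}^\vee\mathbf{K} \, ,
\end{align}
the potentially divergent trivial character cancelling because the leading term of $\operatorname{ch}\mathbf{P}_1\mathbf{P}_2$ is $1$, which matches the regularization provided by the $(q_4^{-1};q_4^{-1})_\infty^k$ prefactor. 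Taking the residue at $\emptyset_{\underline{\lambda}}$ then amounts to evaluating this index at $\operatorname{ch}\mathbf{K} = \sum_{\alpha=1}^n a_\alpha\sum_{(i_1,i_2)\in\lambda_\alpha}q_1^{i_1-1}q_2^{i_2-1}$, with the $(q_4^{-1};q_4^{-1})_\infty^k/k!$ absorbing the Vandermonde and automorphism contributions exactly as in Lemma~\ref{lemma:A1_pert}.

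It then remains to evaluate $\operatorname{ch}\mathbf{Y}$ at this fixed point. This is the standard character of the (virtual) tangent space of the Quot scheme of points on $\mathbb{C}^2$ at a monomial fixed point: decomposing the adjoint-type term $\mathbf{P}_1\mathbf{P}_2\mathbf{K}\mathbf{K}^\vee$ via the usual hook telescoping (cf.~\cite{Shadchin:2006yz}) and combining with the two framing terms one finds
\begin{align}
    \operatorname{ch}\mathbf{Y}\big|_{\emptyset_{\underline{\lambda}}} = - \sum_{1 \le \alpha,\beta \le n}\left( \sum_{i \in \lambda_\beta} a_{\beta\alpha}\, q_1^{\textsl{l}_\alpha(i)+1} q_2^{-\textsl{a}_\beta(i)} + \sum_{i \in \lambda_\alpha} a_{\beta\alpha}\, q_1^{-\textsl{l}_\beta(i)} q_2^{\textsl{a}_\alpha(i)+1} \right) \, ,
\end{align}
where the arm and leg of a box are taken with respect to the partition indicated (and may be negative for a box lying outside it). Since $\operatorname{ch}\wedge[\mathbf{Y}/\mathbf{P}_4^\vee] = \operatorname{ch}\wedge[-(\cdots)/\mathbf{P}_4^\vee]$ sends each monomial $a_{\beta\alpha}q_1^c q_2^d$ to $(a_{\beta\alpha}q_1^c q_2^d;q_4^{-1})^{-1}_\infty$, this reproduces the claimed product. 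The Calabi--Yau condition $q_1q_2q_3q_4=1$ enters only cosmetically, through $q_{12} = q_3^{-1}q_4^{-1}$ in the bookkeeping.

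The main obstacle is precisely the arm--leg identity for $\operatorname{ch}\mathbf{Y}\big|_{\emptyset_{\underline{\lambda}}}$: one must carry along all three pieces of $\mathbf{Y}$ and check that, once the hook sums over $\lambda_\alpha$ and $\lambda_\beta$ telescope, every spurious monomial cancels and exactly two hook contributions survive per ordered pair $(\alpha,\beta)$. This is the same box-counting that underlies the one-leg PT3 vertex (cf.~\cite{Crew:2020psc,Cao:2023lon}); it is classical but is the computational heart of the argument, everything else being formal plethystic bookkeeping.
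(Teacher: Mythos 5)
Your proposal is correct and follows essentially the same route as the paper: the paper's proof also rewrites the integral as $\operatorname{ch}\wedge\bigl[\tfrac{1}{\mathbf{P}_4^\vee}\bigl(\mathbf{P}_{12}\mathbf{K}\mathbf{K}^\vee - \mathbf{N}\mathbf{K}^\vee - \mathbf{Q}_{12}\mathbf{K}\mathbf{N}^\vee\bigr)\bigr]_{\emptyset_{\underline{\lambda}}}$, which is exactly your $\mathbf{Y}/\mathbf{P}_4^\vee$, and then quotes the same arm--leg character identity from the literature. The only cosmetic difference is that you spell out the diagonal regularization by $(q_4^{-1};q_4^{-1})_\infty^k$ explicitly, which the paper leaves implicit by referring back to the proof of the earlier lemma.
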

\begin{proof}
    This can be shown in the same way as Lemma~\ref{lemma:A1_pert}.
    We rewrite the contour integral as the equivariant index,    
    \begin{align}
        I_{\underline{\lambda}} (\underline{a}) = \operatorname{ch} \wedge \left[ \frac{1}{\mathbf{P}_4^\vee} \left( \mathbf{P}_{12} \mathbf{K} \mathbf{K}^\vee - \mathbf{N} \mathbf{K}^\vee - \mathbf{Q}_{12} \mathbf{K} \mathbf{N}^\vee \right) \right]_{\emptyset_{\underline{\lambda}}} \, ,
    \end{align}
    where $\operatorname{ch} \mathbf{P}_{A} = \prod_{a \in A} (1-q_a)$ and $\operatorname{ch} \mathbf{Q}_{A} = \prod_{a \in A} q_{a}$ for $A \subseteq \underline{\mathbf{4}}$.
    Noticing 
    \begin{align}
        \operatorname{ch} \left[ \mathbf{P}_{12} \mathbf{K} \mathbf{K}^\vee - \mathbf{N} \mathbf{K}^\vee - \mathbf{Q}_{12} \mathbf{K} \mathbf{N}^\vee \right]_{\emptyset_{\underline{\lambda}}} = - \sum_{1 \le \alpha, \beta \le n} a_{\beta \alpha} \left( \sum_{i \in \lambda_\beta} q_1^{\textsl{l}_\alpha(i)+1} q_2^{-\textsl{a}_\beta(i)} + \sum_{i \in \lambda_\alpha} q_1^{-\textsl{l}_\beta(i)} q_2^{\textsl{a}_\alpha(i)+1} \right) \, ,
    \end{align}
    (see, e.g., \cite{Nakajima:1999,Kimura:2020jxl} for details) and then evaluating the index, we obtain the formula.
\end{proof}
This is the $q$-shifted version of the Nekrasov partition function of $\mathrm{U}(n)$ gauge theory on $\mathbb{C}_{q_1} \times \mathbb{C}_{q_2} \times \mathbb{S}^1$~\cite{Nekrasov:2002qd,Nekrasov:2003rj}, which is given by the equivariant integral over the instanton moduli space isomorphic to the Hilbert scheme of points on $\mathbb{C}^2$ and its higher-rank analog.

We evaluate the contour integral with the poles given in \eqref{eq:poles_PT1}.
The principal part of the conformal block is given as follows.
\begin{theorem}\label{thm:vertex_Fock}
    For $i = (i_1,i_2)$ and $j = (j_1,j_2)$, we write $q_{12}^i = q_{12}^{(i_1,i_2)} = q_1^{i_1} q_2^{i_2}$.
    We write $(\cdot)_m = (\cdot;q_4^{-1})_m$ and $\theta(\cdot) = \theta(\cdot;q_4^{-1})$.
    We put $K_\alpha = q_3$ for $\alpha = 1, \ldots, n$.
    Then, the contribution of the principal part of the conformal block associated with the poles specified in \eqref{eq:poles_PT1} is given as follows,
    \begin{align}
    \frac{1}{k!} \oint_{\{\underline{\pi}\}} [\dd{\underline{x}}] \, \Phi_{4;3}^\text{F}(\underline{x},\underline{a}) 
    = \mathring{Z}_{\underline{\lambda}} \sum_{\underline{\pi} \in \mathsf{PP}_{\underline{\lambda}}} Z_{\underline{\pi}}
    \end{align}
    where
\begin{subequations}
\begin{align}
    \mathring{Z}_{\underline{\lambda}} 
    & = \frac{(q_{1,2})_\infty^k}{(q_4^{-1})_\infty^k (q_{12})_\infty^k} \prod_{\substack{1 \le \alpha, \beta \le n \\ i \in \lambda_\alpha}} \theta(q_3 a_{\beta \alpha} q_i^{-1}) \prod_{1 \le \alpha,\beta \le n} \left[ \prod_{i \in \lambda_\beta} (a_{\beta \alpha} q_1^{\textsl{l}_\alpha(i) + 1} q_2^{-\textsl{a}_\beta(i)})^{-1}_\infty \prod_{i \in \lambda_\alpha} (a_{\beta \alpha} q_1^{-\textsl{l}_\beta(i)} q_2^{\textsl{a}_\alpha(i)+1})^{-1}_\infty \right]
    \, , \\
    Z_{\underline{\pi}} & = \prod_{(\alpha,i) \neq (\beta,j)}
    \frac{(q_{1} a_{\beta\alpha} q_{12}^j / q_{12}^i,q_{2} a_{\beta\alpha} q_{12}^j / q_{12}^i)_{\pi_{\beta,j}-\pi_{\alpha,i}}}{(a_{\beta\alpha} q_{12}^j / q_{12}^i,q_{12} a_{\beta\alpha} q_{12}^j / q_{12}^i)_{\pi_{\beta,j}-\pi_{\alpha,i}}} \prod_{\substack{\alpha,\beta=1,\ldots,n \\ i \in \lambda_\alpha}} \frac{(q_{12} a_{\beta\alpha} / q_{12}^{i})_{-\pi_{\alpha,i}}}{(q_{123} a_{\beta\alpha} / q_{12}^{i})_{-\pi_{\alpha,i}}}
    \, .
\end{align}    
\end{subequations}
\end{theorem}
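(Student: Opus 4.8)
The plan is to evaluate the iterated contour integral by residues, in close analogy with the proof of Proposition~\ref{prop:vort_fn} for the $A_1$ case, and to identify the $\underline{\pi}$-independent prefactor $\mathring{Z}_{\underline{\lambda}}$ with the equivariant-index computation of the preceding Lemma. First I would set $K_\alpha = q_3$ and use the Calabi--Yau condition \eqref{eq:CY4_cond} in the form $q_{12} = q_3^{-1}q_4^{-1}$ and $q_{123} = q_4^{-1}$ to rewrite the matter factor of $\Phi_{4;3}^{\text{F}}$,
\begin{align}
    \frac{(q_3 a_\alpha/x_i;q_4^{-1})_\infty}{(a_\alpha/x_i;q_4^{-1})_\infty} = \frac{\theta(q_3 a_\alpha/x_i;q_4^{-1})}{(a_\alpha/x_i;q_4^{-1})_\infty\,(q_{12}x_i/a_\alpha;q_4^{-1})_\infty} \, ,
\end{align}
using $(w;p)_\infty(pw^{-1};p)_\infty = \theta(w;p)$ with $p = q_4^{-1}$ and $q_4^{-1}q_3^{-1} = q_{12}$. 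Together with the observation that the $i\neq j$ gauge factor of $\Phi_{4;3}^{\text{F}}$ equals the corresponding gauge factor in the integrand of $I_{\underline{\lambda}}(\underline{a})$ of the preceding Lemma times the $x$-independent constant $\big((q_{1,2};q_4^{-1})_\infty / (q_{12};q_4^{-1})_\infty\big)^{k}$ (the $i=j$ diagonal contribution), this exhibits $\tfrac{1}{k!}\Phi_{4;3}^{\text{F}}$ as $\prod_{i,\alpha}\theta(q_3 a_\alpha/x_i;q_4^{-1})$ times $\big((q_{1,2};q_4^{-1})_\infty/(q_4^{-1};q_4^{-1})_\infty(q_{12};q_4^{-1})_\infty\big)^{k}$ times exactly the integrand appearing in that Lemma.

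Second I would pin down the pole structure. In the chamber under consideration the only poles come from $\prod_{i\neq j}(q_1 x_{ji};q_4^{-1})_\infty^{-1}$, $\prod_{i\neq j}(q_2 x_{ji};q_4^{-1})_\infty^{-1}$ and $\prod_{i,\alpha}(a_\alpha/x_i;q_4^{-1})_\infty^{-1}$, which, read as a nested residue, force $x_{\alpha,i} = a_\alpha q_1^{i_1-1}q_2^{i_2-1}q_4^{-\pi_{\alpha,i}}$ as in \eqref{eq:poles_PT1}; the numerator factors $\prod_{i\neq j}(x_{ij};q_4^{-1})_\infty$ and $\prod_{i\neq j}(q_{12}x_{ij};q_4^{-1})_\infty$ vanish on exactly those configurations violating the inverse-plane-partition inequalities with base $\underline{\lambda}$, so that only $\underline{\pi}\in\mathsf{PP}_{\underline{\lambda}}$ contribute, reproducing the one-leg PT$_3$ box count as in~\cite{Crew:2020psc}; the prefactor $\tfrac{1}{k!}$ absorbs the $k!$ orderings of the nested residue.

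Third I would compute the residues in two stages. At the minimal configuration $\emptyset_{\underline{\lambda}}$ (all $\pi_{\alpha,i}=0$) the first step gives the residue as the constant $\big((q_{1,2};q_4^{-1})_\infty/(q_4^{-1};q_4^{-1})_\infty(q_{12};q_4^{-1})_\infty\big)^{k}$ times $\prod_{\alpha,\beta,\,i\in\lambda_\alpha}\theta(q_3 a_{\beta\alpha}q_i^{-1};q_4^{-1})$, the value of $\prod_{i,\alpha}\theta(q_3 a_\alpha/x_i;q_4^{-1})$ at $x=\emptyset_{\underline{\lambda}}$, times $(q_4^{-1};q_4^{-1})_\infty^{-k}\,I_{\underline{\lambda}}(\underline{a})$, which is precisely $\mathring{Z}_{\underline{\lambda}}$ by the character identity of the preceding Lemma. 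For a general $\underline{\pi}\in\mathsf{PP}_{\underline{\lambda}}$ the ratio of the residue at $\{\underline{\pi}\}$ to the one at $\emptyset_{\underline{\lambda}}$ telescopes into finite $q_4^{-1}$-shifted factorials: the $i\neq j$ gauge factor yields $\prod_{(\alpha,i)\neq(\beta,j)}\frac{(q_{1,2}a_{\beta\alpha}q_{12}^{j}/q_{12}^{i};q_4^{-1})_{\pi_{\beta,j}-\pi_{\alpha,i}}}{(a_{\beta\alpha}q_{12}^{j}/q_{12}^{i};q_4^{-1})_{\pi_{\beta,j}-\pi_{\alpha,i}}(q_{12}a_{\beta\alpha}q_{12}^{j}/q_{12}^{i};q_4^{-1})_{\pi_{\beta,j}-\pi_{\alpha,i}}}$, the matter factor yields $\prod_{\alpha,\beta,\,i\in\lambda_\alpha}\frac{(q_{12}a_{\beta\alpha}/q_{12}^{i};q_4^{-1})_{-\pi_{\alpha,i}}}{(q_{123}a_{\beta\alpha}/q_{12}^{i};q_4^{-1})_{-\pi_{\alpha,i}}}$ (recall $q_{123}=q_4^{-1}$), and the quasi-periodicity of $\theta(\cdot;q_4^{-1})$ under $w\mapsto q_4^{m}w$ contributes only monomials in $q_{1,2,3,4}$ and $a_{\beta\alpha}$ already carried by these ratios; summing over $\underline{\pi}\in\mathsf{PP}_{\underline{\lambda}}$ gives the stated identity.

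The main obstacle is this last bookkeeping step: following the cancellation, box by box as $\pi_{\alpha,i}$ is incremented, between the zeros of $\prod_{i\neq j}(x_{ij};q_4^{-1})_\infty(q_{12}x_{ij};q_4^{-1})_\infty$ and the poles of the gauge and matter denominators, and checking that whatever survives collapses exactly to the arm/leg-free $q_4^{-1}$-Pochhammers displayed in $Z_{\underline{\pi}}$ — in particular that the arm/leg data of the ``perturbative'' factor $\mathring{Z}_{\underline{\lambda}}$ comes out precisely as in the equivariant index of the preceding Lemma and is not corrected by the $\underline{\pi}$-dependent part. This is the analogue, in the $\widehat{A}_0$ setting, of the pole-subtraction role played by the stable envelope in Proposition~\ref{prop:vort_fn}, here performed directly by the numerator of $\Phi_{4;3}^{\text{F}}$.
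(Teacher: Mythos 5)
Your proposal is correct and follows exactly the route the paper intends (the paper states Theorem~\ref{thm:vertex_Fock} without a written proof, but the preceding Lemma on $I_{\underline{\lambda}}$ and the proof of Proposition~\ref{prop:vort_fn} make clear that the intended argument is precisely this direct residue evaluation: identify the $\emptyset_{\underline{\lambda}}$ residue with the equivariant index of the Lemma after peeling off the diagonal constant and the $\theta(q_3 a_{\beta\alpha}q_i^{-1})$ factors via $q_3^{-1}q_4^{-1}=q_{12}$, then telescope the $\underline{\pi}$-dependence into the finite $q_4^{-1}$-Pochhammers of $Z_{\underline{\pi}}$). Your bookkeeping of the diagonal factor $\big((q_{1,2})_\infty/(q_{12})_\infty\big)^k$ and of the ratios $(q_4^{-d}y)_\infty=(y)_\infty/(y)_d$ reproduces $\mathring{Z}_{\underline{\lambda}}$ and $Z_{\underline{\pi}}$ as stated.
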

This is a higher-rank version of the vertex function associated with the Hilbert scheme of points on $\mathbb{C}^2$ given by \cite{Smirnov:2019rmq,Crew:2020psc}.
See also \cite{Choi:2019zpz}.
In this expression, there is no counting parameter in the summation over $\underline{\pi} \in \mathsf{PP}_{\underline{\lambda}}$, which can be added by hand.
In fact, the stable envelope gives rise to the factor $z^{|\underline{\pi}|}$ in the summation, which is a consequence of the quasiperiodicity of the kernel function $\varphi_K(\cdot,\cdot)$ as seen in Proposition~\ref{prop:vort_fn}.

\subsubsection{MacMahon formula}

\paragraph{Pole structure}
Let $x_{\alpha,i} = x_{i+\sum_{\beta > \alpha}k_\beta}$ for $i = 1,\ldots, k_\alpha$ as before.
The poles of the contour integral are parametrized by $n$-tuple plane partitions $\underline{\pi} = (\pi_1,\ldots,\pi_n)$,
\begin{align}
 x_{\alpha,1} = a_\alpha q_4^{-n_{\alpha,1}} 
 \, , \quad 
 x_{\alpha,i} = 
 x_{\alpha,j(<i)} q_{1} q_4^{-n_{\alpha,i}} 
 \quad \text{or} \quad 
 x_{\alpha,j(<i)} q_{2} q_4^{-n_{\alpha,i}} 
  \quad \text{or} \quad 
 x_{\alpha,j(<i)} q_{3} q_4^{-n_{\alpha,i}} 
 \, .
\end{align} 
Hence, we have
\begin{align}
 \underline{x}(\underline{\rho}) = \{ x_{\alpha,i} \mid x_{\alpha,i} = a_\alpha q_1^{i_1-1} q_2^{i_2-1} q_3^{i_3-1} q_4^{-\rho_{\alpha,i}}, i = (i_1,i_2,i_3) \in \pi_\alpha \} \, , \quad \underline{\rho} \in \mathsf{SP}_{\underline{\pi}} \, ,
 \label{eq:poles_PT4one}
\end{align} 
where $\mathsf{SP}_{\underline{\pi}} := \{\rho_{\alpha,ijk}, i = (i_1,i_2,i_3) \in \pi_\alpha \}_{\alpha = 1,\ldots,n}$ is the set of $n$-tuple reverse solid partition with the fixed base given by the plane partition $\underline{\pi} = \{\pi_\alpha\}$.

\begin{proposition}
    Let $K_\alpha = q_1^{\mu^{(1)}_{\alpha}} q_2^{\mu^{(2)}_\alpha} q_3^{\mu^{(3)}_\alpha}$, $\mu^{(1,2,3)}_\alpha \in \mathbb{Z}_{\ge 0}$ for $\alpha = 1,\ldots,n$.
    Then, $Z_{\underline{\pi}} = 0$ unless $(\mu^{(1)}_\alpha + 1,\mu^{(2)}_\alpha + 1,\mu^{(3)}_\alpha + 1) \not\in \pi_\alpha$.
\end{proposition}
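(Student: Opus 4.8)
The plan is to run, one dimension up, the argument behind Propositions~\ref{prop:Higgs1} and~\ref{prop:Higgs2}: the vanishing of $Z_{\underline{\pi}}$ is produced by a zero of the numerator of the principal part $\Phi_4^\text{M}$ that falls exactly on the pole~\eqref{eq:poles_PT4one} labelled by the box $b_\alpha := (\mu^{(1)}_\alpha+1,\mu^{(2)}_\alpha+1,\mu^{(3)}_\alpha+1)$. Since $Z_{\underline{\pi}}$ is, by construction (in parallel with Theorem~\ref{thm:vertex_Fock} and its MacMahon counterpart, cf.\ Theorem~\ref{thm:vertex_MacMahon}), extracted --- up to a common regular factor --- from the iterated residue of $\Phi_4^\text{M}(\underline{x},\underline{a})$ at a pole configuration~\eqref{eq:poles_PT4one} whose base plane partitions are $\underline{\pi}=(\pi_1,\dots,\pi_n)$, it suffices to show: if $b_\alpha\in\pi_\alpha$ for some $\alpha$, then that residue vanishes.

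First I would pin down the critical box and read off the zero. With $x_{\alpha,i}=a_\alpha q_1^{i_1-1}q_2^{i_2-1}q_3^{i_3-1}q_4^{-\rho_{\alpha,i}}$ for $i=(i_1,i_2,i_3)\in\pi_\alpha$ and the specialization $K_\alpha=q_1^{\mu^{(1)}_\alpha}q_2^{\mu^{(2)}_\alpha}q_3^{\mu^{(3)}_\alpha}$, one has $K_\alpha a_\alpha/x_{\alpha,i}=q_1^{\mu^{(1)}_\alpha-i_1+1}q_2^{\mu^{(2)}_\alpha-i_2+1}q_3^{\mu^{(3)}_\alpha-i_3+1}q_4^{\rho_{\alpha,i}}$; invoking the Calabi--Yau $4$-fold condition~\eqref{eq:CY4_cond}, $q_4=(q_1q_2q_3)^{-1}$, and genericity of $q_{1,2,3}$, this lies in $q_4^{\mathbb{Z}_{\ge 0}}$ precisely when the three exponents $\mu^{(s)}_\alpha-i_s+1$ coincide and do not exceed $\rho_{\alpha,i}$, which at $i=b_\alpha$ holds automatically since all three vanish (and $\rho_{\alpha,b_\alpha}\in\mathbb{Z}_{\ge0}$ because the configurations come from inverse solid partitions). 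Consequently, exactly as in the one-line proof of Proposition~\ref{prop:Higgs1}, the numerator of $\Phi_4^\text{M}$ carries the factor $(K_\alpha a_\alpha/x_{\alpha,b_\alpha};q_4^{-1})_\infty$, which on the pole equals $(q_4^{\rho};q_4^{-1})_\infty=\prod_{n\ge 0}(1-q_4^{\rho-n})$ with $\rho=\rho_{\alpha,b_\alpha}$; its $n=\rho$ term is $1-q_4^{0}=0$, so the factor --- hence the residue, hence $Z_{\underline{\pi}}$ --- vanishes.

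The only step I expect to be more than bookkeeping is verifying that this zero is not cancelled inside the iterated residue. Generically in $\{a_\alpha\}$ and $\{q_\sigma\}_{\sigma\in\underline{\mathbf{4}}}$ the integrand has only a simple pole at $x_{\alpha,b_\alpha}$: it is reached through one of the factors $(a_\alpha/x_i;q_4^{-1})_\infty^{-1}$ or $(q_{\sigma'}x_j/x_i;q_4^{-1})_\infty^{-1}$ ($\sigma'\in\bar 4$) of $\Phi_4^\text{M}$, so extracting the residue amounts to substituting $x_{\alpha,b_\alpha}$ into the remaining factors, which are all regular and finite there for generic parameters (and the common base factor likewise does not acquire a pole), while the surviving substitution includes the vanishing factor above. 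Note also that the ``pre-stable envelope'' $\widetilde{\operatorname{Stab}}^\text{M}$ plays no role, since $Z_{\underline{\pi}}$ is the residue of the principal part alone. This would establish the claim and identify $b_\alpha$ as the solid-partition ``pit'', the three-dimensional analogue of Propositions~\ref{prop:Higgs1} and~\ref{prop:Higgs2}, consistent with the PT$4$ box count on $\mathbb{C}^3$.
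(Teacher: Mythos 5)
Your argument is correct and is exactly the approach the paper intends: the paper states this proposition without proof, but it is the three‑dimensional analogue of its one‑line proof of Proposition~\ref{prop:Higgs1}, namely that at the pole labelled by the box $(\mu^{(1)}_\alpha+1,\mu^{(2)}_\alpha+1,\mu^{(3)}_\alpha+1)$ the numerator factor $(K_\alpha a_\alpha/x_i;q_4^{-1})_\infty$ of $\Phi_4^{\text{M}}$ is evaluated at $q_4^{\rho}$ with $\rho\ge 0$ and hence vanishes. Your extra checks --- that the Calabi--Yau condition $q_4=(q_1q_2q_3)^{-1}$ forces the three exponents to coincide for the zero to occur, and that the zero survives the (generically simple) residue extraction --- are precisely the bookkeeping the paper leaves implicit.
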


\paragraph{Vertex function}

The contour integral evaluated with the poles \eqref{eq:poles_PT4one} gives rise to the following vertex function.
We write $q_{123}^{i} = q_{123}^{(i_1,i_2,i_3)} = q_1^{i_1} q_2^{i_2} q_3^{i_3}$ for $i = (i_1,i_2,i_3)$.
\begin{theorem}\label{thm:vertex_MacMahon}
    We write $(\cdot)_m = (\cdot;q_4^{-1})_m$.
    The contribution of the principal part of the conformal block associated with the poles specified in \eqref{eq:poles_PT4one} is given as follows,
    \begin{align}
    \frac{1}{k!} \oint_{\{\underline{\rho}\}} [\dd{\underline{x}}] \, \Phi_{4}^\text{M}(\underline{x},\underline{a}) 
    = \mathring{Z}_{\underline{\pi}} \sum_{\underline{\rho} \in \mathsf{SP}_{\underline{\pi}}} Z_{\underline{\rho}}
    \end{align}
    where
\begin{subequations}
\begin{align}
    \mathring{Z}_{\underline{\pi}} & = \frac{(q_{1,2,3})_\infty^k}{(q_4^{-1})_\infty^k} \operatorname{ch} \wedge \left[ \frac{1}{\mathbf{P}_{4}^\vee} \left( (1 - \mathbf{Q}_{1,2,3} ) \mathbf{K} \mathbf{K}^\vee - \mathbf{N} \mathbf{K}^\vee + \widetilde{\mathbf{N}} \mathbf{K}^\vee \right) \right]_{\emptyset_{\underline{\pi}}}
    \, , \\
    Z_{\underline{\rho}} & = \prod_{(\alpha,i) \neq (\beta,j)}
    \frac{(q_{1,2,3} a_{\beta\alpha}q_{123}^j/q_{123}^i)_{\rho_{\beta,j}-\rho_{\alpha,i}}}{(a_{\beta\alpha}q_{123}^j/q_{123}^i)_{\rho_{\beta,j}-\rho_{\alpha,i}}} \prod_{\substack{\alpha,\beta=1,\ldots,n \\ i \in \pi_\alpha}} \frac{(a_{\beta\alpha} / q_{123}^{i})_{-\rho_{\alpha,i}}}{(K_\beta a_{\beta\alpha} / q_{123}^{i})_{-\rho_{\alpha,i}}}
    \, .
\end{align}    
\end{subequations}
\end{theorem}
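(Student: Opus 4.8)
The plan is to follow the route used in the proofs of Lemma~\ref{lemma:A1_pert}, Proposition~\ref{prop:vort_fn}, and Theorem~\ref{thm:vertex_Fock}, now adapted to the $\mathbb{C}^3$ geometry underlying the MacMahon formula.

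First I would identify the poles of the integrand $\Phi_4^\text{M}(\underline{x},\underline{a})$ in the chamber $\mathscr{C}_{1\ldots n}$ that actually contribute. The denominator factors $\prod_{\sigma'\in\{1,2,3\}}(q_{\sigma'}x_{ij};q_4^{-1})_\infty^{-1}$ and $\prod_{i,\alpha}(a_\alpha/x_i;q_4^{-1})_\infty^{-1}$ generate exactly the pole ladders of \eqref{eq:poles_PT4one}, namely $x_{\alpha,1}=a_\alpha q_4^{-n_{\alpha,1}}$ and $x_{\alpha,i}=x_{\alpha,j(<i)}q_\gamma q_4^{-n_{\alpha,i}}$ with $\gamma\in\{1,2,3\}$, while the numerator $\prod_{i\neq j}(x_{ij};q_4^{-1})_\infty$, which contains $(1-x_i/x_j)$, kills the residue whenever two variables collide. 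Hence the surviving residues are labelled by an $n$-tuple inverse solid partition $\underline{\rho}\in\mathsf{SP}_{\underline{\pi}}$ sitting over a base plane partition $\underline{\pi}$, with $x_{\alpha,i}=a_\alpha q_1^{i_1-1}q_2^{i_2-1}q_3^{i_3-1}q_4^{-\rho_{\alpha,i}}$; here one also has to check that the spurious configurations, those violating the monotonicity of $\underline{\rho}$ along the growth tree or requiring an inadmissible combination of $q_1,q_2,q_3$-steps, carry vanishing residue, which again follows from the balance between the numerator and denominator factors of $\Phi_4^\text{M}$. The symmetry factor $1/k!$ together with the sum over labellings of the $x$-variables accounts for the overcounting.

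Second, I would factor the residue at a generic admissible configuration $\{\underline{\rho}\}$ as the residue at the minimal configuration $\emptyset_{\underline{\pi}}$ (all $\rho_{\alpha,i}=0$) times a $\underline{\rho}$-dependent ratio. For the first factor I would proceed as in Lemma~\ref{lemma:A1_pert}: rewrite the residue at $\emptyset_{\underline{\pi}}$ as the equivariant $\wedge$-character of the alternating combination of the tautological classes $\mathbf{K},\mathbf{N},\widetilde{\mathbf{N}}$ with the $\mathbf{P}_{123}$-prefactor, evaluated at the fixed point $\emptyset_{\underline{\pi}}$, which is by definition $\mathring{Z}_{\underline{\pi}}$; the standard plane-partition character manipulations (see, e.g., \cite{Nakajima:1999,Kimura:2020jxl,Shadchin:2006yz}), together with the local behaviour of the measure and the diagonal self-pairings at each box, produce the normalization $(q_{1,2,3})_\infty^k/(q_4^{-1})_\infty^k$, while the dependence on $\underline{K}$ is carried entirely by the framing contribution $\widetilde{\mathbf{N}}\mathbf{K}^\vee$. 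For the second factor I would track how each $q_4^{-1}$-shifted Pochhammer transforms under $x_{\alpha,i}\mapsto x_{\alpha,i}q_4^{-\rho_{\alpha,i}}$, using $(q^d x;q)_\infty=(x;q)_\infty/(x;q)_d$ to convert infinite into finite Pochhammers: the base positions $q_{123}^{i}$ turn the arguments into $a_{\beta\alpha}q_{ji}$, the three denominator factors paired with the numerator give the $q_{1,2,3}$-shifted pairs appearing in $Z_{\underline{\rho}}$, and the matter factors yield $(a_{\beta\alpha}q_i^{-1})_{-\rho_{\alpha,i}}/(K_\beta a_{\beta\alpha}q_i^{-1})_{-\rho_{\alpha,i}}$. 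Summing the factorized residues over all $\underline{\rho}\in\mathsf{SP}_{\underline{\pi}}$ then gives $\mathring{Z}_{\underline{\pi}}\sum_{\underline{\rho}}Z_{\underline{\rho}}$.

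I expect the main obstacle to be the combinatorics peculiar to the $\mathbb{C}^3$ case: unlike the $\mathbb{C}^2$ Fock formula, a given box of a plane partition is reachable through several distinct $q_1,q_2,q_3$-paths, so one must verify that the residue is independent of the path and that exactly the genuine solid-partition configurations survive --- this plays the role of the ``pole subtraction matrix'' that the stable envelope provides in Proposition~\ref{prop:vort_fn}, here realized directly by the numerator $\prod_{i\neq j}(x_{ij};q_4^{-1})_\infty$. A secondary subtlety is reconciling the normalization and the $\mathbf{P}_{123}$-denominator with the quoted index presentation of $\mathring{Z}_{\underline{\pi}}$ in the presence of the framing term $\widetilde{\mathbf{N}}\mathbf{K}^\vee$, which also encodes the truncation governed by the choice of $K_\alpha$ (cf.\ the preceding Proposition).
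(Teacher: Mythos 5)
Your proposal is correct and follows essentially the same route the paper takes: the theorem is obtained by a direct residue evaluation of $\Phi_4^{\text{M}}$ at the poles \eqref{eq:poles_PT4one}, with the minimal configuration $\emptyset_{\underline{\pi}}$ recast as the equivariant index exactly as in Lemma~\ref{lemma:A1_pert}, and the $\underline{\rho}$-dependence extracted via the shift identity $(q^d x;q)_\infty=(x;q)_\infty/(x;q)_d$ as in Proposition~\ref{prop:vort_fn} and Theorem~\ref{thm:vertex_Fock}. Your explicit flagging of the path-independence/pole-admissibility issue peculiar to the $\mathbb{C}^3$ ladder is a point the paper leaves implicit, but it does not change the argument.
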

This is a higher-rank version of the vertex function associated with the Hilbert scheme of points on $\mathbb{C}^3$ given by  \cite{Cao:2023lon,Piazzalunga:2023qik}, i.e., the Quot scheme of $\mathbb{C}^3$, as this formula is also applicable to higher-rank framing bundle $\mathbf{N}$ and $\widetilde{\mathbf{N}}$.
We remark that the perturbative factor $\mathring{Z}_{\underline{\pi}}$ is consistent with the edge contribution shown in~\cite{Nekrasov:2023nai} up to the boundary factor.
Again, there is no counting factor $z^{|\underline{\rho}|}$ in the summation over $\underline{\rho} \in \mathsf{SP}_{\underline{\pi}}$, which can be imposed by the stable envelope thanks to the quasiperiodicity of the kernel function $\varphi_K(\cdot,\cdot)$.

\subsection{Multiple screening current}

We consider the conformal block of the multi-screened vertex operators, and discuss its relation to the enumerative invariants.

\begin{definition}
    Let $\underline{m} = \{m_\sigma\}_{\sigma\in\underline{\mathbf{4}}} \in \mathbb{Z}_{\ge 0}^4$.
    We define the $\underline{m}$-screened vertex operator,
    \begin{align}
        & \mathsf{Z}_K^{(\underline{m})}(a) 
        \nonumber \\ & 
        = \oint_\gamma [\dd{\underline{x}}] \prod_{\sigma\in\underline{\mathbf{4}}, |q_\sigma| > 1}
        \left( \prod_{i=1,\ldots,m_\sigma}^{\curvearrowleft} \varphi_{\sigma,K}\left(\frac{a}{x_{\sigma,i}},zK \right) \mathsf{S}_\sigma(x_{\sigma,i}) \right)  \mathsf{Z}_K(a) \prod_{\sigma\in\underline{\mathbf{4}}, |q_\sigma| < 1}
        \left( \prod_{i=1,\ldots,m_\sigma}^{\curvearrowright} \varphi_{\sigma,K}\left(\frac{a}{x_{\sigma,i}},zK \right) \mathsf{S}_\sigma(x_{\sigma,i}) \right)  
        \, .
    \end{align}
\end{definition}
When we consider the quadruple screened case ($m_\sigma > 0$ for all $\sigma \in \underline{\mathbf{4}}$), one cannot take all the parameters $|q_\sigma| > 1$ nor $|q_\sigma| < 1$ as a consequence of the condition~\eqref{eq:CY4_cond}.
Hence, we should consider both situations in this case.
For the other cases, we can take all the parameters $|q_\sigma| > 1$ or $|q_\sigma| < 1$ for simplicity.

\begin{figure}[t]
    \centering
    \begin{subfigure}[b]{0.17\textwidth}
         \centering
         \begin{tikzpicture}
             \node[circle,draw=black,minimum size=5pt,inner sep=1.5pt] (k) at (0,0) {$k$};
             \node[rectangle,draw=black] (n1) at (.5,1) {$n$};
             \node[rectangle,draw=black] (n2) at (-.5,1) {$n$};
             \draw [->-] (n1) -- (k);
             \draw [-<-] (n2) -- (k);
             \draw (k) to [bend right = 45,in=260] ++(0,-1) to [bend right = 45,out=-80] (k);
             \draw [-latex] (0,-1) -- ++ (.1,0);
             \draw (k) to [bend right = 45,in=260] ++(0,-1.2) to [bend right = 45,out=-80] (k);
             \draw [-latex] (0,-1.2) -- ++ (.1,0);
             \draw [densely dashed] (k) to [bend right = 45,in=260] ++(0,-.8) to [bend right = 45,out=-80] (k);
             \draw [-latex] (0,-.8) -- ++ (.1,0);
         \end{tikzpicture}
         \caption{Single case (F)}
         \label{fig:single_F}
     \end{subfigure}
     \hfill
         \begin{subfigure}[b]{0.17\textwidth}
         \centering
         \begin{tikzpicture}
             \node[circle,draw=black,minimum size=5pt,inner sep=1.5pt] (k) at (0,0) {$k$};
             \node[rectangle,draw=black] (n1) at (.5,1) {$n$};
             \node[rectangle,draw=black] (n2) at (-.5,1) {$n$};
             \draw [->-] (n1) -- (k);
             \draw [-<-] (n2) -- (k);
             \draw (k) to [bend right = 45,in=260] ++(0,-1) to [bend right = 45,out=-80] (k);
             \draw [-latex] (0,-1) -- ++ (.1,0);
             \draw (k) to [bend right = 45,in=260] ++(0,-1.2) to [bend right = 45,out=-80] (k);
             \draw [-latex] (0,-1.2) -- ++ (.1,0);
             \draw (k) to [bend right = 45,in=260] ++(0,-.8) to [bend right = 45,out=-80] (k);
             \draw [-latex] (0,-.8) -- ++ (.1,0);
         \end{tikzpicture}
         \caption{Single case (M)}
         \label{fig:single_M}
     \end{subfigure}
     \hfill
    \begin{subfigure}[b]{0.32\textwidth}
         \centering
         \begin{tikzpicture}[scale=.9]
             \node[circle,draw=black,minimum size=5pt,inner sep=1.5pt] (k1) at (-1.5,0) {$k_1$};
             \node[circle,draw=black,minimum size=5pt,inner sep=1.5pt] (k2) at (0,-1) {$k_2$};
             \node[circle,draw=black,minimum size=5pt,inner sep=1.5pt] (k3) at (1.5,0) {$k_3$};    
             \node[rectangle,draw=black] (n1) at (.5,1) {$n$};
             \node[rectangle,draw=black] (n2) at (-.5,1) {$n$};
             \foreach \x in {1,2,3}{
             \draw [->-] (n1) -- (k\x);
             \draw [-<-] (n2) -- (k\x);
             }
             \draw (k1) to [bend right = 45,in=260] ++(-1,0) to [bend right = 45,out=-80] (k1);
             \draw [-latex] (k1)++(-1,0) -- ++ (0,.1);
             \draw (k1) to [bend right = 45,in=260] ++(-1.2,0) to [bend right = 45,out=-80] (k1);
             \draw [-latex] (k1)++(-1.2,0) -- ++ (0,.1);
             \draw [densely dashed] (k1) to [bend right = 45,in=260] ++(-.8,0) to [bend right = 45,out=-80] (k1);
             \draw [-latex] (k1)++(-.8,0) -- ++ (0,.1);     
             \draw (k2) to [bend right = 45,in=260] ++(0,-1) to [bend right = 45,out=-80] (k2);
             \draw [-latex] (k2)++(0,-1) -- ++ (.1,0);
             \draw (k2) to [bend right = 45,in=260] ++(0,-1.2) to [bend right = 45,out=-80] (k2);
             \draw [-latex] (k2)++(0,-1.2) -- ++ (.1,0);
             \draw [densely dashed] (k2) to [bend right = 45,in=260] ++(0,-.8) to [bend right = 45,out=-80] (k2);
             \draw [-latex] (k2)++(0,-.8) -- ++ (.1,0);
             \draw (k3) to [bend right = 45,in=260] ++(1,0) to [bend right = 45,out=-80] (k3);
             \draw [-latex] (k3)++(1,0) -- ++ (0,.1);
             \draw (k3) to [bend right = 45,in=260] ++(1.2,0) to [bend right = 45,out=-80] (k3);
             \draw [-latex] (k3)++(1.2,0) -- ++ (0,.1);
             \draw [densely dashed] (k3) to [bend right = 45,in=260] ++(.8,0) to [bend right = 45,out=-80] (k3);
             \draw [-latex] (k3)++(.8,0) -- ++ (0,.1);          
             \draw [magenta,double] (k1) -- (k2);
             \draw [magenta,double] (k1) -- (k3);
             \draw [magenta,double] (k2) -- (k3);
         \end{tikzpicture}
         \caption{Triple case (F)}
         \label{fig:triple_F}
     \end{subfigure}     
     \hfill
    \begin{subfigure}[b]{0.32\textwidth}
         \centering
         \begin{tikzpicture}[scale=.9]
             \node[circle,draw=black,minimum size=5pt,inner sep=1.5pt] (k1) at (-1.5,0) {$k_1$};
             \node[circle,draw=black,minimum size=5pt,inner sep=1.5pt] (k2) at (-.7,-1) {$k_2$};
             \node[circle,draw=black,minimum size=5pt,inner sep=1.5pt] (k3) at (.7,-1) {$k_3$};             
             \node[circle,draw=black,minimum size=5pt,inner sep=1.5pt] (k4) at (1.5,0) {$k_4$};    
             \node[rectangle,draw=black] (n1) at (.5,1) {$n$};
             \node[rectangle,draw=black] (n2) at (-.5,1) {$n$};
             \foreach \x in {1,2,3,4}{
             \draw [->-] (n1) -- (k\x);
             \draw [-<-] (n2) -- (k\x);
             }
             \foreach \x in {.8,1,1.2} {
             \draw (k1) to [bend right = 45,in=260] ++(-\x,0) to [bend right = 45,out=-80] (k1);
             \draw [-latex] (k1)++(-\x,0) -- ++ (0,.1);
             }
             \foreach \y in {2,3} {
             \foreach \x in {.8,1,1.2} {
             \draw (k\y) to [bend right = 45,in=260] ++(0,-\x) to [bend right = 45,out=-80] (k\y);
             \draw [-latex] (k\y)++(0,-\x) -- ++ (.1,0);
             } }
             \foreach \x in {.8,1,1.2} {
             \draw (k4) to [bend right = 45,in=260] ++(\x,0) to [bend right = 45,out=-80] (k4);
             \draw [-latex] (k4)++(\x,0) -- ++ (0,.1);
             }             
             \draw [magenta,double] (k1) -- (k2);
             \draw [magenta,double] (k1) -- (k3);
             \draw [magenta,double] (k1) -- (k4);
             \draw [magenta,double] (k2) -- (k3);
             \draw [magenta,double] (k2) -- (k4);
             \draw [magenta,double] (k3) -- (k4);
         \end{tikzpicture}
         \caption{Quadruple case (M)}
         \label{fig:quadruple_M}
     \end{subfigure}          
    \caption{Quivers for $\mathrm{W}_{q_{1,2,3,4}}(\widehat{A}_0)$ block. 
    (F) and (M) stand for the Fock and the MacMahon formula, respectively.
    The solid (dashed) lines indicate the chiral (Fermi) multiplets, and the red ones are multiple 0d theory contributions.}
    \label{fig:quiver_A0^}
\end{figure}
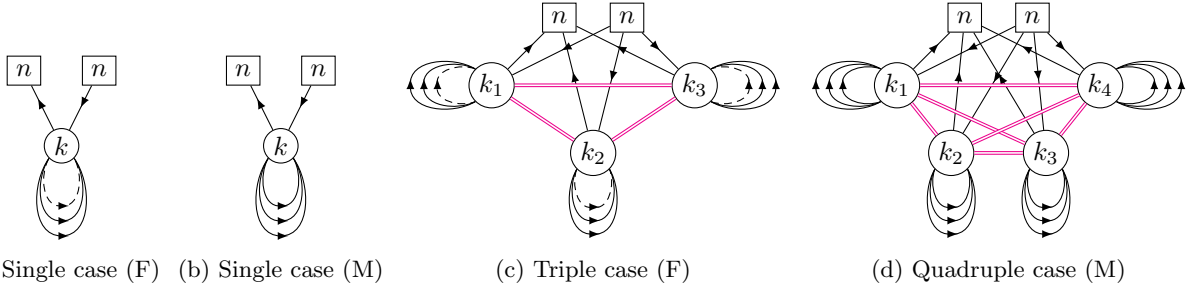

\paragraph{Fock formula}

We take $K_\alpha = q_4$ and $\underline{k}_\alpha = (k_{\alpha,1},k_{\alpha,2},k_{\alpha,3},0)$ for all $\alpha = 1,\ldots,n$ to consider the Fock formula.
We assume $|q_{1,2,3}|>1$.
Then, we have the contour integral formula of the conformal block as follows,
\begin{align}
    \Psi_{\mathscr{C}_{1\ldots n}}^{\underline{k}}(\underline{a};q_4) & = \bra{0} \mathsf{Z}^{(\underline{k}_1)}_{q_4}(a_1) \mathsf{Z}^{(\underline{k}_2)}_{q_4}(a_2) \cdots \mathsf{Z}^{(\underline{k}_n)}_{q_4}(a_n) \ket{0} / \bra{0} \mathsf{Z}_{q_4}(a_1) \mathsf{Z}_{q_4}(a_2) \cdots \mathsf{Z}_{q_4}(a_n) \ket{0}
    \nonumber \\ 
    & = \prod_{\sigma \in \underline{\mathbf{3}}} \frac{\mathsf{c}_{\sigma,\underline{k}}}{k_\sigma!} \oint_\gamma [\dd{\underline{x}}] \prod_{\sigma \in \underline{\mathbf{3}}} \Phi_{\sigma;4}^\text{F}(\underline{x}_\sigma,\underline{a}) \widetilde{\operatorname{Stab}}_{\sigma;4,\underline{k}}^\text{F}(\underline{x}_\sigma,\underline{a}) 
    \prod_{\sigma, \sigma' \in \underline{\mathbf{3}}} \prod_{\substack{i = 1,\ldots,k_\sigma \\ j = 1, \ldots, k_{\sigma'}}} \mathscr{S}_{\overline{\sigma\sigma'}}(q_\sigma x_{\sigma',j}/x_{\sigma,i})
    \, . \label{eq:block_F_multi}
\end{align}
The poles of this contour integral are parametrized by $n$-tuple three-dimensional partitions with a triple of boundary two-dimensional partitions $\{\underline{\lambda}_\sigma\}_{\sigma \in \underline{\mathbf{3}}}$.
If there is no interaction between different directions organized by the $\mathscr{S}$ functions, it is just a union of three reverse plane partitions $\{ \mathsf{PP}_{\underline{\lambda}_\sigma} \}_{\sigma \in \underline{\mathbf{3}}}$ similar to $\mathrm{W}_{q_{1,2}}(A_1)$ algebra as seen in \S\ref{sec:double_screening_A1}.
Due to the $\mathscr{S}$ function factors, the pole structure turns out to be compatible with the PT3 count.

The conformal block \eqref{eq:block_F_multi} is identified with the partition function of gauge origami of D2 branes supported on $\{\mathbb{C}_{\sigma}\}_{\sigma \in \underline{\mathbf{3}}}$, whose matter content is given by the quiver shown in Fig.~\ref{fig:triple_F}.
For each $\sigma \in \underline{\mathbf{3}}$, we have the vertex function for the quasimaps to the Hilbert or the Quot scheme on the transversal directions $\bar{\sigma} = \underline{\mathbf{3}}\backslash\sigma$, and the total contribution is given by the equivariant Euler characteristics of the moduli space of the intersecting sheaves, that we call the origami vertex function.

\paragraph{MacMahon formula}

We finally consider the fully general situation.
The conformal block of the multi-screened vertex operator is given by the following contour integral,
\begin{align}
    \Psi_{\mathscr{C}_{1\ldots n}}^{\underline{k}}(\underline{a};\underline{K}) & = \bra{0} \mathsf{Z}^{(\underline{k}_1)}_{K_1}(a_1) \mathsf{Z}^{(\underline{k}_2)}_{K_2}(a_2) \cdots \mathsf{Z}^{(\underline{k}_n)}_{K_n}(a_n) \ket{0} / \bra{0} \mathsf{Z}_{K_1}(a_1) \mathsf{Z}_{K_2}(a_2) \cdots \mathsf{Z}_{K_n}(a_n) \ket{0}
    \nonumber \\ 
    & = \prod_{\sigma \in \underline{\mathbf{4}}} \frac{\mathsf{c}_{\sigma,\underline{k}}}{k_\sigma!} \oint_\gamma [\dd{\underline{x}}] \prod_{\sigma \in \underline{\mathbf{4}}} \Phi_{\sigma}^\text{M}(\underline{x}_\sigma,\underline{a}) \widetilde{\operatorname{Stab}}_{\sigma,\underline{k}}^\text{M}(\underline{x}_\sigma,\underline{a}) 
    \prod_{\sigma, \sigma' \in \underline{\mathbf{4}}} \prod_{\substack{i = 1,\ldots,k_\sigma \\ j = 1, \ldots, k_{\sigma'}}} \mathscr{S}_{\overline{\sigma\sigma'}}(q_\sigma x_{\sigma',j}/x_{\sigma,i})
    \, .
    \label{eq:block_M_multi}
\end{align}
For $\sigma \in \underline{\mathbf{4}}$ with $|q_\sigma| < 1$, the principal part and the stable envelope are understood through the analytic continuation $\left.\Phi_{\sigma}^\text{M}\right|_{q_{1,2,3,4} \to q_{1,2,3,4}^{-1}}$, $\widetilde{\operatorname{Stab}}_{\sigma,\underline{k}}^\text{M}|_{q_{1,2,3,4} \to q_{1,2,3,4}^{-1}}$.
Due to the condition \eqref{eq:CY4_cond}, we take for example $|q_{1,2,3}| > 1$ and $|q_4| < 1$.
In this case, the fourth-direction should be differently treated although the final result is expected to be symmetric for $q_{1,2,3,4}$.
Such an asymmetric treatment gives rise to a peculiar sign factor specific to the 4-fold counting~\cite{Nekrasov:2017cih,Nekrasov:2018xsb}.
The poles of the contour integral are parametrized by $n$-tuple four-dimensional partitions with a quadruple of boundary three-dimensional plane partitions $\{\underline{\pi}_\sigma\}_{\sigma \in \underline{\mathbf{4}}}$, and its structure is expected to be compatible with (the higher-rank version of) the four-leg PT4 count discussed in the literature~\cite{Cao:2019tvv,Monavari:2022rtf,Liu:2024bgp,Nekrasov:2023nai}.

The conformal block \eqref{eq:block_M_multi} is the partition function of gauge origami of D2 branes on $\{\mathbb{C}_\sigma\}_{\sigma \in \underline{\mathbf{4}}}$ described by the quiver shown in Fig.~\ref{fig:quadruple_M}.
Again from the geometric point of view, it would be identified with the fully general origami vertex function in $\mathbb{C}^4$.

\begin{conjecture}\label{conj:origami_vertex}
    The conformal blocks \eqref{eq:block_F_multi} and \eqref{eq:block_M_multi} are the rank-$n$ multi-leg PT vertex of $\mathbb{C}^3$~and~ $\mathbb{C}^4$.
\end{conjecture}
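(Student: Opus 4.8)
The plan is to compare the residue expansion of the contour integrals~\eqref{eq:block_F_multi} and \eqref{eq:block_M_multi} with the equivariant K-theoretic stable-pair vertex of $\mathbb{C}^3$ (resp.\ $\mathbb{C}^4$) with prescribed asymptotics along the coordinate axes (resp.\ coordinate planes). The first step is to pin down the pole structure. As sketched above, the poles of~\eqref{eq:block_F_multi} are labelled by $n$-tuples of three-dimensional (solid-type) configurations whose three boundary legs are the inverse plane partitions $\{\mathsf{PP}_{\underline{\lambda}_\sigma}\}_{\sigma\in\underline{\mathbf{3}}}$ fixed by the screening numbers $\underline{k}_\sigma$, and similarly the poles of~\eqref{eq:block_M_multi} are labelled by $n$-tuples of four-dimensional configurations with four boundary solid partitions. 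One must show that the $\mathscr{S}$-function cross terms $\mathscr{S}_{\overline{\sigma\sigma'}}(q_\sigma x_{\sigma',j}/x_{\sigma,i})$ implement precisely the pole cancellations that cut the naive union of inverse plane/solid partitions down to the box configurations counted by the PT moduli space---i.e.\ that they play the role of a pole-subtraction matrix, exactly as the elliptic stable envelope does in Proposition~\ref{prop:conf_block_A1}. This combinatorial compatibility is the heart of the matching, and I expect it to require an induction on the number of boxes along each leg, with careful bookkeeping of which residues of type (ii) and (iii), in the notation of \S\ref{sec:double_screening_A1}, are killed.

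The second step is to evaluate a single residue. Mirroring Lemma~\ref{lemma:A1_pert} and the analogous combinatorial lemmas above, each residue factorizes as a perturbative prefactor $\mathring{Z}$ times a finite product $Z$ over the boxes, and the key point is to recognize $\mathring{Z}\,Z$ as $\operatorname{ch}\wedge$ of the symmetrized virtual structure complex of the relevant PT moduli space at the corresponding torus-fixed point. For $\mathbb{C}^3$ this is the standard stable-pair vertex \cite{Cao:2023lon,Piazzalunga:2023qik}; for $\mathbb{C}^4$ one must additionally reproduce the square-root Euler class together with a compatible orientation, and here the asymmetric treatment of the distinguished direction with $|q_4|<1$ in~\eqref{eq:block_M_multi} should be shown to generate exactly the sign assignment of the $4$-fold count \cite{Nekrasov:2017cih,Nekrasov:2018xsb,Cao:2019tvv,Monavari:2022rtf}. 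Summing over all fixed points then yields the full vertex, the counting variable $z^{|\cdot|}$ being supplied---as in Proposition~\ref{prop:vort_fn}---by the quasiperiodicity of the kernel $\varphi_{\sigma,K}$ under $x_i \to q_\sigma^{-1} x_i$.

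The third step is to fix the remaining normalization ambiguities by checking the rank-one case. For $n=1$, after specializing $K_\alpha$ to the monomial in $q_{1,2,3}$ dictated by the truncation of Proposition~\ref{prop:Higgs2} and absorbing the constant $\mathsf{c}_{\sigma,\underline{k}}$ together with the ``edge'' theta factors coming from $\widetilde{\operatorname{Stab}}^{\mathrm{F,M}}$, the one-point block should reduce to the known PT3 and PT4 vertices; the rank-$n$ statement then follows by repeating the fixed-point bookkeeping with $n$ framing parameters $\underline{a}$ and $n$ momenta $\underline{K}$.

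The main obstacle, as already flagged in the text, is that $\widetilde{\operatorname{Stab}}^{\mathrm{F}}$ and $\widetilde{\operatorname{Stab}}^{\mathrm{M}}$ are only ``pre-stable envelopes'': the screened vertex operator of Definition~\ref{def:Z_screened} depends only on the integers $\underline{k}_\sigma$ and not on the full $2$d/$3$d boundary partitions, so the residues it produces are correct box-by-box but their cohomological normalization---the analogue of Smirnov's tree/triangularity data \cite{Smirnov:2018drm}---is not yet manifestly the geometric one, and indeed $\widetilde{\operatorname{Stab}}^{\mathrm{F}}$ already differs from Smirnov's $\operatorname{Hilb}_k(\mathbb{C}^2)$ envelope. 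Making the identification precise will require refining Definition~\ref{def:Z_screened} so that the screened vertex operator is indexed by the boundary partitions themselves, and then establishing the corresponding triangularity; this step---bridging ``the poles are right'' and ``the stable envelope is right''---is where the genuine difficulty lies, and is presumably why the statement is recorded as a conjecture.
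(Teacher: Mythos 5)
This statement is recorded in the paper as a conjecture, and the paper offers no proof of it --- only supporting evidence: the pole structure of \eqref{eq:block_F_multi} and \eqref{eq:block_M_multi} is checked to be consistent with the PT box-counting in $\mathbb{C}^3$ and $\mathbb{C}^4$, and the one-point block is matched against the known rank-one PT3 and PT4 vertices. Your submission is likewise not a proof but a strategy, and to your credit you say so explicitly. The strategy you lay out (residue expansion, identification of each residue with a torus-fixed-point contribution to the symmetrized virtual structure complex, normalization via the rank-one check, counting variable from the quasiperiodicity of $\varphi_{\sigma,K}$) is exactly the evidence the authors assemble, and your diagnosis of the obstruction --- that $\widetilde{\operatorname{Stab}}^{\mathrm{F,M}}$ are only pre-stable envelopes, that the screened vertex operator of Definition~\ref{def:Z_screened} is indexed by the integers $\underline{k}_\sigma$ rather than by the full boundary $2$d/$3$d partitions, and that the analogue of Smirnov's tree data is missing --- coincides with the caveats the paper itself states after Proposition~\ref{prop:conf_block_A0^}.

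The genuine gap, then, is that your first two steps are asserted rather than established, and they are precisely the unproven content of the conjecture. In particular: (i) the claim that the $\mathscr{S}$-function cross terms implement exactly the pole cancellations reducing the naive union of inverse plane/solid partitions to the PT box configurations is not demonstrated, and no induction scheme is actually set up; (ii) the identification of $\mathring{Z}\,Z$ with $\operatorname{ch}\wedge$ of the PT virtual structure complex at a fixed point is only stated for the single-leg cases (Theorems~\ref{thm:vertex_Fock} and \ref{thm:vertex_MacMahon}), and extending it to the multi-leg intersecting situation --- including the square-root Euler class and orientation signs for the $4$-fold --- is open; and (iii) for the $\mathbb{C}^4$ case the multi-leg PT vertex itself is not fully pinned down in the literature, so there is no independently defined target to match against. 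A referee would accept your text as a correct articulation of why this is a conjecture, not as progress toward resolving it.
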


\appendix

\section{Special functions}


\begin{definition}[$q$-hypergeometric series, e.g.,~\cite{Gasper:2004}]\label{def:q-hypergeometric_fn}
    The $q$-hypergeometric series is defined as
    \begin{align}
        _r \phi_s
        \left[
        \begin{matrix}
            a_1 & \cdots & a_r \\ b_1 & \cdots & b_s
        \end{matrix}; q ; x
        \right]
        = \sum_{d \ge 0} \frac{(a_1,\ldots,a_r;q)_d}{(b_1,\ldots,b_s,q;q)_d} \left((-1)^d q^{{n \choose 2}}\right)^{1 + s - r} x^d 
        \, .
    \end{align}

\end{definition}
\if0
\begin{proposition}
    The following $q$-binomial theorem holds,
    \begin{align}
    _{1}\phi_0 (a;q;z) = \sum_{d \ge 0} z^d \frac{(a;q)_d}{(q;q)_d} = \frac{(az;q)_\infty}{(z;q)_\infty}= \prod_{n=0}^\infty \frac{1-aq^n z}{1-q^n z} \, .
    \end{align}
\end{proposition}
\fi
\begin{proposition}[Heine’s transformation formula]\label{prop:Heine_transform}
    Let $|x| < 1$.
    We have
    \begin{subequations}
    \begin{align}
    {_2\phi_1} \left[
    \begin{matrix}
        a & b \\ c & -
    \end{matrix}; q; x
    \right] 
    & = 
    \frac{(b,ax;q)_\infty}{(c,x;q)_\infty}
    {_2\phi_1} \left[
    \begin{matrix}
        c/b & x \\ a x & -
    \end{matrix}; q; b
    \right]
    \quad (|b| < 1) \\
    & = 
    \frac{(a,bx;q)_\infty}{(c,x;q)_\infty}
    {_2\phi_1} \left[
    \begin{matrix}
        c/a & x \\ b x & -
    \end{matrix}; q; a
    \right]
    \quad (|a| < 1) 
    \end{align}
    \end{subequations}
\end{proposition}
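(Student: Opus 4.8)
The plan is the classical double-series manipulation: expand the left-hand side and reduce everything to the $q$-binomial theorem (the ${}_1\phi_0$ summation $\sum_{m\ge 0}\frac{(\alpha;q)_m}{(q;q)_m}z^m=\frac{(\alpha z;q)_\infty}{(z;q)_\infty}$, valid for $|z|<1$ and $|q|<1$), applied twice. First I would write the left-hand side as the single series $\sum_{n\ge 0}\frac{(a;q)_n(b;q)_n}{(q;q)_n(c;q)_n}x^n$ and rewrite the inconvenient ratio as $\frac{(b;q)_n}{(c;q)_n}=\frac{(b;q)_\infty}{(c;q)_\infty}\frac{(cq^n;q)_\infty}{(bq^n;q)_\infty}$; applying the $q$-binomial theorem with parameter $c/b$ and argument $bq^n$ turns this into
\begin{align}
    \frac{(b;q)_n}{(c;q)_n} = \frac{(b;q)_\infty}{(c;q)_\infty}\sum_{m\ge 0}\frac{(c/b;q)_m}{(q;q)_m}\,b^m q^{nm}\,.
\end{align}

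Substituting this into the series, interchanging the two summations — legitimate by absolute convergence for $|q|<1$, $|b|<1$, and $|x|<1$ — and resumming over $n$ by the $q$-binomial theorem once more, now with parameter $a$ and argument $xq^m$, one obtains $\sum_{n\ge 0}\frac{(a;q)_n}{(q;q)_n}(xq^m)^n=\frac{(axq^m;q)_\infty}{(xq^m;q)_\infty}=\frac{(ax;q)_\infty}{(x;q)_\infty}\frac{(x;q)_m}{(ax;q)_m}$. Collecting the $m$-independent prefactors leaves exactly $\frac{(b,ax;q)_\infty}{(c,x;q)_\infty}\sum_{m\ge 0}\frac{(c/b;q)_m(x;q)_m}{(q;q)_m(ax;q)_m}b^m$, which is the right-hand side of the first identity. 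The second identity ($|a|<1$) is then immediate from the symmetry of ${}_2\phi_1$ under the exchange $a\leftrightarrow b$ of its two numerator parameters. An alternative route would be to check that both sides satisfy the same second-order $q$-difference equation in $x$ with matching value at $x=0$, but the double-series proof is the most direct.

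Since Heine's formula is classical (see, e.g., \cite{Gasper:2004}), in the end I would simply cite it; the sketch above is included only to keep the appendix self-contained. The sole point requiring care is the analytic bookkeeping — verifying absolute convergence of the double series in the stated domain so that Fubini applies, and ensuring no factor $(c;q)_n$ or $(ax;q)_m$ in a denominator vanishes, which one handles by first proving the identity for generic parameters and extending by continuity. There is no essential obstacle beyond this.
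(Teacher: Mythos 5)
Your proposal is correct. The paper offers no proof of this proposition at all — it is stated in the appendix as a classical fact, implicitly deferring to the standard reference \cite{Gasper:2004} cited in the neighbouring definition — and your sketch is precisely the textbook derivation: two applications of the $q$-binomial theorem with an interchange of summations justified by absolute convergence, followed by the $a\leftrightarrow b$ symmetry of ${}_2\phi_1$ for the second identity. The intermediate identities you write down (in particular $\frac{(b;q)_n}{(c;q)_n}=\frac{(b;q)_\infty}{(c;q)_\infty}\sum_{m\ge 0}\frac{(c/b;q)_m}{(q;q)_m}b^m q^{nm}$ and the resummation $\sum_{n\ge 0}\frac{(a;q)_n}{(q;q)_n}(xq^m)^n=\frac{(ax;q)_\infty}{(x;q)_\infty}\frac{(x;q)_m}{(ax;q)_m}$) are all correct, so your "cite it, but include the standard argument for self-containedness" stance is entirely consistent with what the paper does.
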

\if0
\begin{proposition}
    The analytic continuation of the $q$-hypergeometric function is given by
    \begin{align}
        {_2\phi_1} \left[
    \begin{matrix}
        a & b \\ c & -
    \end{matrix}; q; x
    \right]
    & = 
    \frac{(b,c/a;q)_\infty (ax, q/ax;q)_\infty}{(c,b/a;q)_\infty (x, q/x;q)_\infty} {_2\phi_1} \left[
    \begin{matrix}
        a & aq/c \\ aq/b & -
    \end{matrix}; q; \frac{cq}{abx}
    \right]
    + (a \leftrightarrow b)
    \end{align}
\end{proposition}
\fi


\begin{definition}[Elliptic gamma function]
For $p,q \in \mathbb{C}^\times$ with $|p|, |q| < 1$, we define
\begin{align}
    \Gamma(z;p,q) = \frac{(pqz^{-1};p,q)_\infty}{(z;p,q)_\infty} \, ,
    \label{eq:ell_gamma_fn}
\end{align}
obeying the functional relation
\begin{align}
    \frac{\Gamma(pz;p,q)}{\Gamma(z;p,q)} = \theta(z;q)
    \, , \qquad 
    \frac{\Gamma(qz;p,q)}{\Gamma(z;p,q)} = \theta(z;p)
    \, .
\end{align}    
\end{definition}
We write
\begin{align}
    \Gamma(a_1,\ldots,a_r;p,q) = \prod_{i=1}^r \Gamma(a_i;p,q) \, .
\end{align}


\begin{definition}[Elliptic hypergeometric series, e.g., \cite{Rosengren:2016qtr}]\label{def:E-hypergeometric}
    We define 
    \begin{align}
        {_{r+1} E_r} 
        \left[
        \begin{matrix}
            a_1 & \cdots & a_{r+1} \\ b_1 & \cdots & b_r
        \end{matrix}
        ; q, p; z
        \right]
        = \sum_{d \ge 0} z^d \frac{(a_1,\ldots,a_{r+1};q,p)_d}{(q,b_1,\ldots,b_{r};q,p)_d}
    \end{align}
    where
    \begin{align}
        (a_1,\ldots,a_{r};q,p)_d = \prod_{i=1}^r (a_i;q,p)_d
        \, , \qquad 
        (a;q,p)_d = \prod_{n=0}^{d-1} \theta(a q^n; p)
        \, .
    \end{align}
    The parameters should obey the balancing condition,
    \begin{align}
        a_1 a_2 \cdots a_{r+1} = q b_1 \cdots b_r \, .
        \label{eq:balance_cond}
    \end{align}
\end{definition}

\section{Vertex operator formalism}\label{sec:VO}

We summarize the vertex operator formalism used in the quantum algebraic approach to the gauge origami system~\cite{Kimura:2023bxy}.
See also~\cite{Noshita:2025bzg}.
We define the following sets, $\underline{\mathbf{4}} = \{1,2,3,4\}$, $\underline{\mathbf{6}} = \{12,13,14,23,24,34\}$, and $\underline{\mathbf{4}}^\vee = \{123,124,134,234\}$.

We define the vertex operators associated with D0, D2, D4, D6, and D8 branes as follows.
\begin{subequations}\label{eq:free_field}
\begin{align}
    \text{D0}: \quad & \mathsf{A}(x) = \mathsf{a}_0(x) :\exp\left(\sum_{n\neq0} \mathsf{a}_n x^{-n}\right): \, , \quad 
    [\mathsf{a}_n, \mathsf{a}_m] = - \frac{1}{n} (1-q_1^n)(1-q_2^n)(1-q_3^n)(1-q_4^n)\delta_{n+m,0} \, .
\end{align}
\begin{align}
    \text{D2}_\sigma: \quad & \mathsf{S}_\sigma(x) = \mathsf{s}_{\sigma,0}(x) :\exp\left(\sum_{n\neq0} \mathsf{s}_{\sigma,n} x^{-n}\right): \, , \hspace{2.7em} \mathsf{s}_{\sigma,n} = \frac{\mathsf{a}_n}{1-q_\sigma^{-n}} \, , \hspace{4.8em} \sigma \in \underline{\mathbf{4}} \, , \label{eq:D2_VO}\\    
    \text{D4}_A: \quad & \mathsf{X}_A(x) = \mathsf{x}_{A,0}(x) :\exp\left(\sum_{n\neq0} \mathsf{x}_{A,n} x^{-n}\right): \, , \qquad \mathsf{x}_{A,n} = \frac{\mathsf{a}_n}{\prod_{{\sigma} \in A}(1-q_{\sigma}^{-n})} \, , \quad A \in \underline{\mathbf{6}} \, , \\ 
    \text{D6}_{\bar{\sigma}}: \quad & \mathsf{W}_{\bar{\sigma}}(x) = \mathsf{w}_{\bar{\sigma},0}(x) :\exp\left(\sum_{n\neq0} \mathsf{w}_{\bar{\sigma},n} x^{-n}\right): \, , \hspace{1.6em} \mathsf{w}_{\bar{\sigma},n} = \frac{\mathsf{a}_n}{\prod_{\sigma \in \bar{\sigma}}(1-q_{\sigma}^{-n})} \, , \quad \bar{\sigma} \in \underline{\mathbf{4}}^\vee \, , \\  
    \text{D8}: \quad & \mathsf{Z}(x) = \mathsf{z}_0(x) :\exp\left(\sum_{n\neq0} \mathsf{z}_n x^{-n}\right): \, , \hspace{4.5em} \mathsf{z}_{n} = \frac{\mathsf{a}_n}{\prod_{\sigma \in \underline{\mathbf{4}}} (1-q_{\sigma}^{-n})} \, . \label{eq:D8_VO}
\end{align}
\end{subequations}    
See \cite[Appendix C]{Kimura:2023bxy} for the zero modes.
Then, the gauge origami partition function of $k$ D0 brane sector is given by the contour integral of the vacuum expectation value of the vertex operators,
\begin{align}
    \mathcal{Z}_k = \frac{\mathcal{G}^k}{k!} \oint [\dd{\underline{x}}] \langle 0 | \prod_{I=1}^k \mathsf{A}(x_I)^{-1} :\prod_{i} \mathsf{V}_i(v_i): | 0 \rangle \, , \qquad [\dd{\underline{x}}] = \prod_{I=1}^k \frac{\dd{x}_I}{2 \pi \ii x_I} \, ,
\end{align}
where $\mathsf{V}_i \in \{\mathsf{S}_\sigma,\mathsf{X}_A,\mathsf{W}_{\bar{\sigma}},\mathsf{Z}\}$ and $\mathcal{G}$ is a constant.
This contour integral is interpreted as the Jefferey--Kirwan residue associated with the rational factors appearing from the OPEs.
This formalism is generalized to the situation with non-trivial boundary conditions~\cite{Kimura:2024osv}.


\bibliographystyle{ytamsalpha}
\bibliography{references}

\end{document}